\newtheorem{prop}{\protect\propositionname}
\newtheorem{rem}{\protect\remarkname}
\newtheorem{thm}{\protect\theoremname}
\newtheorem{lem}{\protect\lemmaname}
\providecommand{\lemmaname}{Lemma}
\providecommand{\propositionname}{Proposition}
\providecommand{\remarkname}{Remark}
\providecommand{\theoremname}{Theorem}
\newcommand{\comment}[1]{\textcolor{blue}{\it #1}}
\newcommand{\noprint}[1]{}
\newcommand{\petit}{\hspace{-0.1cm}}
\title{Mean curvature and mean shape for multivariate functional data under Frenet-Serret framework}
\author{Juhyun Park and Nicolas J-B. Brunel \\
Lancaster University and ENSIIE \\
email: \href{mailto:juhyun.park@lancaster.ac.uk}{juhyun.park@lancaster.ac.uk}\,, \href{mailto:nicolas.brunel@ensiie.fr}{nicolas.brunel@ensiie.fr}}
\date{September 12, 2019}
\begin{document}
\maketitle

\bibliographystyle{chicago}

\begin{abstract}
The analysis of curves has been routinely dealt with using tools from functional data analysis. 
However its extension to multi-dimensional curves poses a new challenge due to its inherent geometric features that are difficult to capture with the classical approaches that rely on linear approximations.
We propose a new framework for functional data as multidimensional curves that allows us to extract geometrical features from noisy data. We define a mean through measuring shape variation of the curves. The notion of shape has been used in functional data analysis somewhat intuitively to find a common pattern in one dimensional curves.
As a generalization, we directly utilize a geometric representation of the curves through the Frenet-Serret ordinary differential equations and introduce a new definition of mean curvature and mean shape through the mean ordinary differential equation. We formulate the estimation problem in a penalized regression and develop an efficient algorithm. We demonstrate our approach with both simulated data and a real data example. 
\end{abstract}
{\bf Keywords}: functional data analysis; multidimensional curves; curvature estimation; shape analysis.

\section{Introduction }

We consider the problem of analyzing a set of three-dimensional curves in $\mathbb{R}^3$, as an instance of multivariate functional data. 
A typical example would be the recordings of spatial coordinates for tracking movements of body parts or objects (e.g.,  \citet{RaketMarkussen2016, FlashHogan1985}).
A standard assumption with functional data analysis (FDA) \citep{RamsaySilverman2005, FerratyVieu2006, WangMuller2016} is that there exists a common structure, often through a common mean and variance function. Then variability decomposition around the mean through functional principal component analysis provides a parsimonious decomposition of the variability. This type of linear approximations is powerful as it allows us to naturally extend tools from scalar curves to multidimensional curves \citep{ChiouChenYang2014}. Nevertheless, such analytic extension can also hide some important features in multidimensional curves. This is illustrated in Figure~\ref{fig1}, where we show the original three-dimensional curve data, together with the corresponding three marginal one-dimensional curves. 
The commonality that can be extracted from the three-dimensional plot on the right panel is difficult, even by eye, to match to the information obtained from the left panel, although there does seem an {\it obvious} common structure in the marginal curves. 
The problem becomes more complex with larger dimensional curves, where we cannot even easily visualize the data. 
\begin{figure}[htbp]
\begin{center}
\begin{tabular}{cc}
\includegraphics[width=0.48\textwidth]{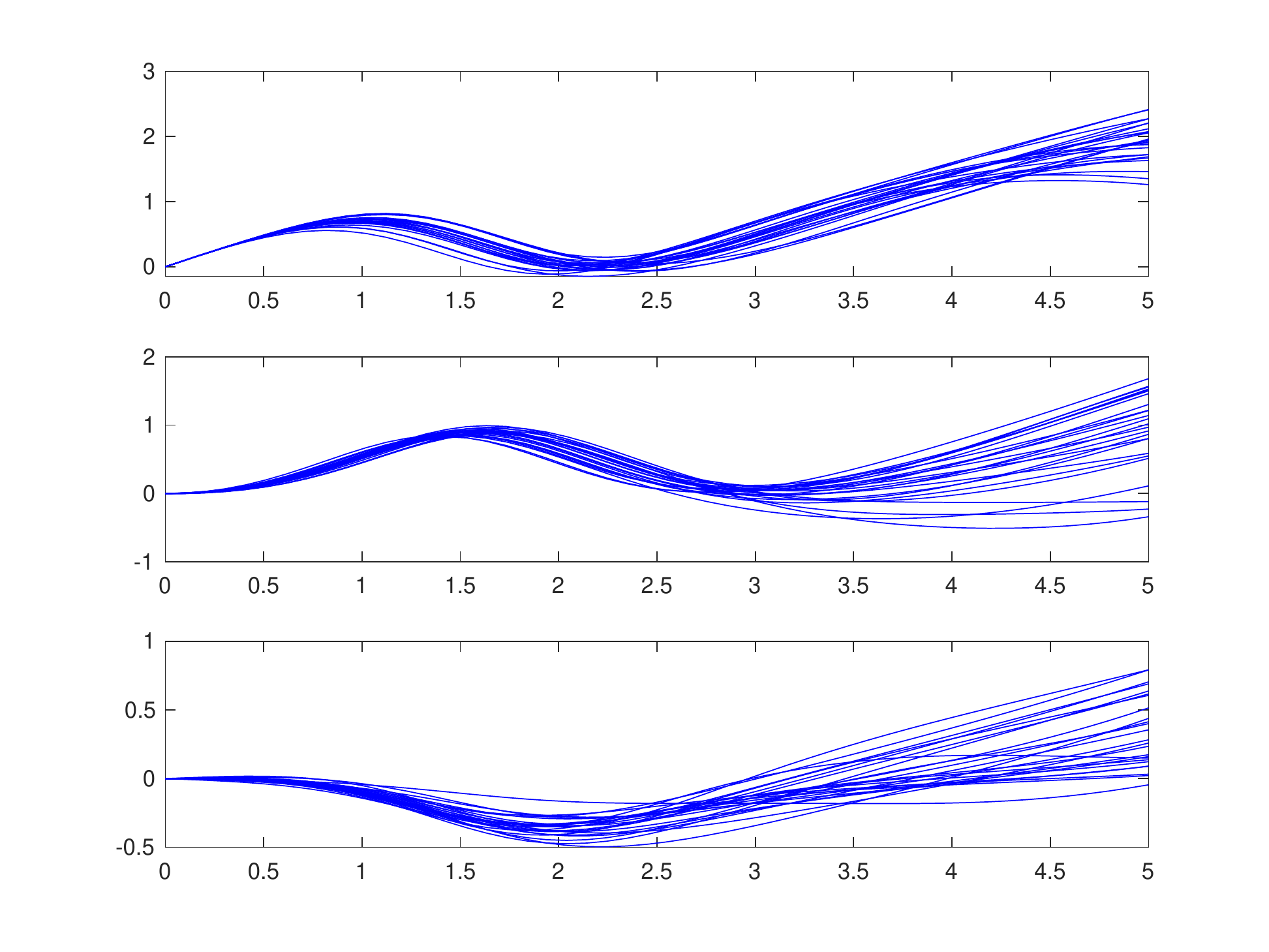} &
\hspace*{-0.1cm} \includegraphics[width=0.48\textwidth]{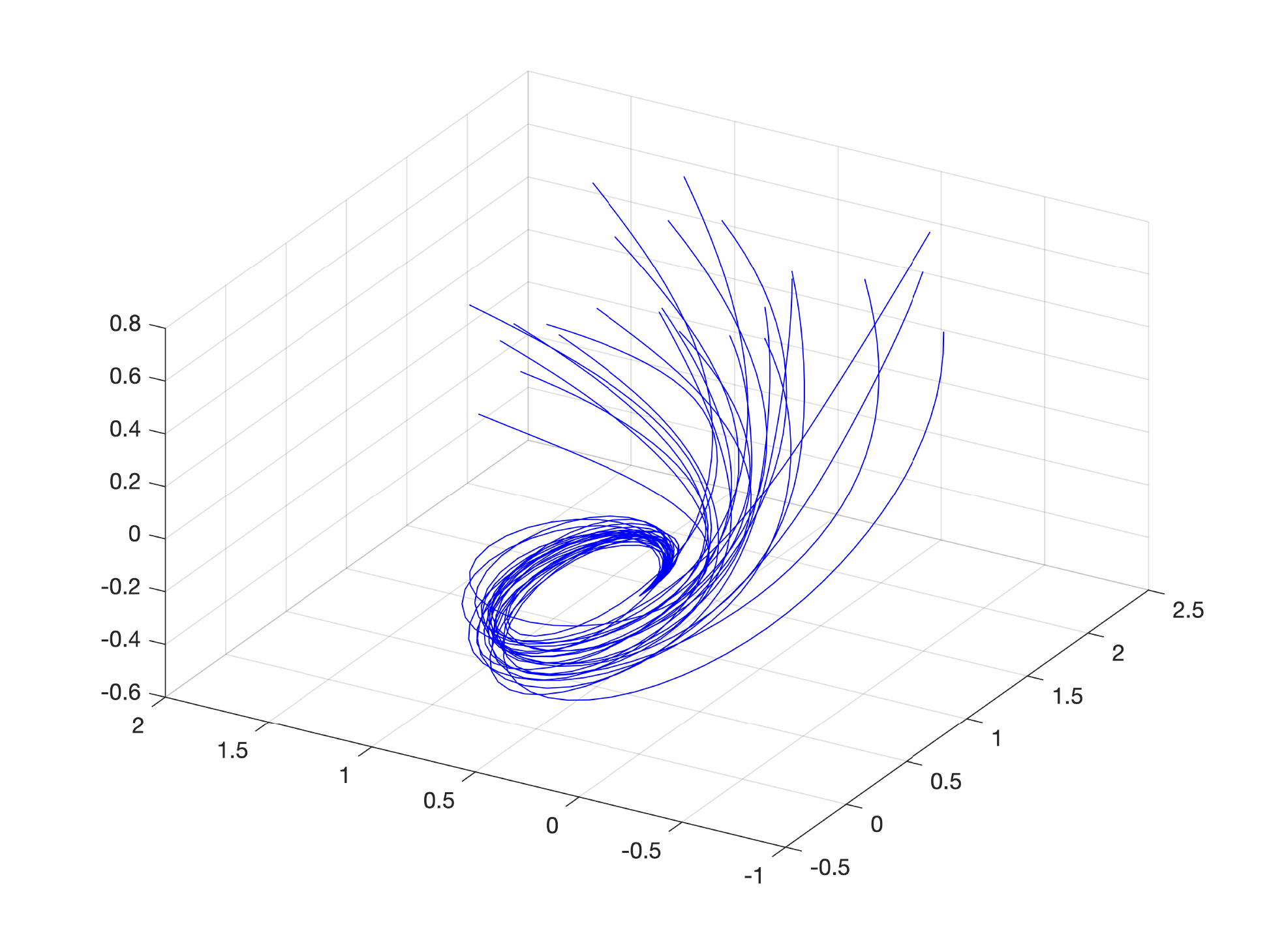}
\end{tabular}
\end{center}
\caption{\label{fig1}Three-dimensional curves in marginal coordinates plot (left) and 3-d plot (right), population of $N=25$ space curves simulated as in section 5.2.}
\end{figure}

An alternative approach to analysing three-dimensional curves has been through estimating curvature and torsion \citep{Kim2013, Sangalli2009, Lewiner2005}. As these parameters have a physical meaning, such analysis could offer more interpretable summaries. However, the focus has been more on estimating derivatives in nonparametric regression and the link to functional data is somewhat lost. 
We propose a new statistical framework to formulate this type of geometric analysis of functional data, borrowing ideas from statistical shape analysis \citep{Dryden1998, Younes2010}.

The geometric notion of shape has been used in FDA somewhat intuitively in defining {\it structural mean} in the presence of phase variation \citep{KneipGasser1992}, more generally with manifold structure \citep{ChenMuller2012}, and in detecting outliers in multivariate functional data \citep{DaiGenton2018}. 
A general definition of shape is given as what is left invariant under the actions of the rigid transformations of the Euclidean space, i.e., rescaling, translating and rotating. The definition alone however is too general to be useful in statistical modelling considered in FDA. 
Shapes are formally considered as equivalence classes under some appropriate group actions, and the classical statistical methodologies need to be adapted in order to deal with the non-Euclidean properties of shape spaces. Several ways of constructing shape spaces (or feature spaces) have been proposed: discretization with landmarks based after discretization \citep{Dryden1998}, or infinite dimensional shape spaces \citep{Younes2010, Srivastava2011}. Such shape or geometric statistical analysis requires a definition of distance (inducing a Riemannian structure for instance), and the natural extension of the usual mean is defined as a Fr\'echet mean. 
Some of these ideas, such as the elastic shape analysis has been suggested for analyzing the variations of functional data, typically for the registration problem in one-dimensional curves \citep{kurtek2012statistical}. Its extension to multi-dimensional curves is found in \citet{SrivastavaKlassen2016}. 

A natural strategy in shape analysis would be to define a proper space equipped with a good metrics where one can define a Fr\'echet mean. On the contrary, we consider a new distance between curves that does not depend on the usual Cartesian coordinate system but uses a parameterization of the space of smooth curves based on the Frenet-Serret representation, introduced in Section~\ref{sec:FDA-frenet}. We directly exploit the Frenet paths that are interpreted as a representative of the equivalence class {\it shape}. Within this framework, we introduce a new definition of mean shape through the mean Ordinary Differential Equation (ODE) (or flow). 
The new definition of mean shape requires a proper estimation of functional parameters such as curvature as well as Frenet frames.
In particular, we introduce the notion of mean curvature within this framework and show that the estimation of curvature and Frenet frames can be cast into the problem of an ODE estimation in Lie group.
To accompany the new definition of the mean, we propose a statistical framework for inference and develop an efficient algorithm. 
In general, the ODE estimation is a difficult problem (e.g., \cite{Ramsay2007}), especially involving nonparametric estimation of time-varying parameter (e.g. \cite{Mueller2010, WuDing2014}), even  without the orthogonality constraint required in our formulation. As a by-product, our formulation also offers a new solution to a non-trivial ODE inference problem. We refer the readers to \citet{RamsayHooker2017} for recent development on data analysis with ODE models. 

The paper is organized as follows. Section~\ref{sec:FDA-frenet} introduces the geometrical representation of the curves based on the Frenet-Serret framework. Section~\ref{sec:FDA-ShapeAnalysis} defines a mean shape and studies properties related to statistical shape analysis. Section~\ref{sec:estimation} gives a statistical framework for estimation, followed by numerical studies in Section~\ref{sec:numeric}. Main proofs to the assertions in the article are given in Supplementary Material.

\section{A geometric curve representation and Frenet-Serret Framework}\label{sec:FDA-frenet}

We consider a set of curves in $\mathbb{R}^3$ defined as functions $\{x: [0, T] \rightarrow  \mathbb{R}^3\}$. 
In order to avoid some technical difficulties, we assume that the curves are \emph{regular, i.e} they are of class $C^{r}, r\geq 3$ (w.r.t time $t$), the time derivative $\dot{x}(t)$ never vanishes on $[0,T]$ and the curves never intersect themselves \citep{kuhnel2015differential}. 

For a curve $x$, define the arclength $s(t)=\int_{0}^{t}\Vert \dot{x}(u)\Vert _{2}\,du, t \in [0, T]$,
where $\| \cdot\| _{2}$ is the Euclidean norm in $\mathbb{R}^3$
and $s(T)=L$ is the total length of the curve $X=\left\{ x(t),\,t\in\left[0,T\right]\right\}$.
The \emph{shape} of the curve $X: \:\left[0,L\right]\longrightarrow\mathbb{R}^{3}$ is the image of the function $x$, which satisfies $x(t)=X(s(t))$.
The derivation with respect to arclength $s$ is denoted with prime
i.e $Y'(s)=\frac{d}{ds}Y(s)$, whereas time differentiation
is always denoted by a dot. 

\subsection{Time warping and shape variation of the curves}

The above arclength parametrization should not be confused with the standard representation of time warping or phase variation in the functional data. Suppose that $x_i$ is given as $x_i(t) = x(h_i(t))$, where $h_i$ are warping functions. As $\dot{x_i}(t) = {\dot x}(h_i(t))\dot h_i(t)$, by change of variables, the corresponding arclength can be expressed as
\[
s_i(t) = \int_0^t \|{\dot x}(h_i(t)){\dot h_i}(t)\|_2 \,dt = \int_0^{h_i(t)}\|{\dot x}(u)\|\,du = s(h_i(t)) \,.
\]
It follows that  $x_i(t) = x(h_i(t)) = X(s(h_i(t)) = X(s_i(t))$, that is, the \emph{shape} of the curve is preserved under time warping. For univariate functional data, phase variation expressed as time warping functions is often confounded with shape variation as we introduce here.

In this work, we distinguish between phase ($h_i$) and shape variation (${X}_i$), and explicitly model the shape variation in the spirit of functional data analysis, based on its geometric curve representation.  

\subsection{Frenet-Serret equation}\label{sec:Frenet}

The arclength parametrization of the curve $x(t) = X(s(t))$ implies that $\dot{x}(t)=\dot{s}(t)X'(s(t))$, meaning that the tangent vector $T(s)\triangleq {X}'(s)$ is unit length for all $s$ in $\left[0,L\right]$. 
The curvature can be defined as $s\mapsto\kappa(s)=\| T'(s)\| $, and if $\kappa>0$, a moving frame is defined with the addition of Normal vector $N(s)=\frac{1}{\kappa(s)}T'(s)$, and the bi-normal vector $B(s)=T(s)\times N(s)$ to the tangent vector $T$. The torsion
$s\mapsto\tau(s)$ is the function that satisfies $B'(s)=-\tau(s)N(s)$
for all $s$ in $\left[0,L\right]$. The curvature and torsion
are geometric invariant of the curve, independent of the
parametrization of a curve $X$. They are also invariant
under the action of rigid (Euclidean) motions. These functional parameters can be directly defined with extrinsic formulas as 
\begin{equation}
\kappa(s(t))=  \frac{\Vert \dot{x}(t)\times\ddot{x}(t)\Vert _{2}}{\Vert \dot{x}(t)\Vert _{2}^{3}}\,,\qquad
\tau(s(t))=  \frac{\langle \dot{x}(t)\times\ddot{x}(t),\dddot{x}(t)\rangle }{\Vert \dot{x}(t)\Vert _{2}^{3}} \,.
\label{eq:ExtrinsicCurvatures}
\end{equation}
Although the formulas are useful for computing curvature and torsion in practice, the geometrical interpretation of these parameters is somewhat hidden in these expressions.
Indeed, the vectors $s\mapsto T(s),\,N(s),\,B(s)$ are tightly related
through Frenet-Serret ODE
\begin{equation}
\left\{ \begin{array}{ll}
T'(s)= & \kappa(s)N(s)\\
N'(s)= & -\kappa(s)T(s)+\tau(s)B(s)\\
B'(s)= & -\tau(s)N(s)
\end{array}\right.\label{eq:FS-ode-vector}
\end{equation}
with an initial condition $T(0),N(0),B(0)$. In other words, the moving frame defines
a curve $s\mapsto Q(s)=\left[T(s)\vert N(s)\vert B(s)\right]$ in
the group of special orthogonal matrices $SO(3)$. As $SO(3)$ is
a Lie group with a manifold structure, the Frenet-Serret ODE can be seen as an ODE defined in the Lie
group with
\begin{eqnarray}
\dot{Q}(s) & = & Q(s)A(s)\label{eq:FS-ode-matrix}
\end{eqnarray}
where 
\begin{equation}
A(s)=\left[\begin{array}{ccc}
0 & -\kappa(s) & 0\\
\kappa(s) & 0 & -\tau(s)\\
0 & \tau(s) & 0
\end{array}\right].\label{eq:matrixA}
\end{equation}
We shall denote by $\theta$ the functional parameters $(\kappa, \tau)$ with the set of admissible parameters by $\mathcal{H} = \{\theta = (\kappa,\tau), \kappa>0,   \kappa, \tau \in C^2 \}$, and by $A_\theta$ the corresponding skew-symmetric matrix. 

For regular curves in $\mathbb{R}^{p}$ for $p>3$, the same moving
frame in $SO(p)$ can be defined, in terms of a skew-symmetric matrix
similar to (\ref{eq:matrixA}), and the so-called generalized
curvatures $\kappa_{1},\kappa_{2},\dots,\kappa_{p-1}$ \citep{kuhnel2015differential}. 

\subsection{Effect of scaling}

For invariance with respect to scaling,  we define the equation with the scaled curves to the unit length.
Rescaling does change the geometry only through a scaling factor, i.e the matrix $s\mapsto A(s)$ in the ODE (\ref{eq:FS-ode-matrix}) is also renormalized and the rescaled curves
$\frac{1}{L}{X}(s)$ have new curvilinear arclength
$\tilde s=s/L$ and the rescaled Frenet paths are $\tilde{Q}(\tilde s)=Q(\tilde sL)$.
The rescaled Frenet-Serret ODE, defined on $\left[0,1\right]$ is
$\tilde{Q}^{'}(\tilde s)=\tilde{Q}(\tilde s)\tilde{A}(\tilde s)$ with $\tilde{A}(\tilde s)=LA(\tilde sL)$,  implying that rescaling a curve by $1/L$, multiplies its curvature and torsion by $L$. 

\subsection{Frenet equation and its equivalent class}\label{sec:FrenetSerretODE-and-Equivalence}

We call the solutions $s\mapsto Q(s)$ of the Frenet-Serret equations the Frenet
paths, and the set of Frenet paths is denoted by 
\[
\mathcal{F}=\left\{ s\mapsto Q_{\theta}(s)\vert Q^{'}(s)=Q(s)A_{\theta}(s),\:s\in\left[0,1\right],\,Q(0)\in SO(3),\,\theta\in\mathcal{H}\right\} \,.
\]
Among the set of all Frenet paths, we pay a particular attention to 
the subset of Frenet paths with initial condition equal to the
identity matrix $I_{p}$, $\mathcal{F}_{0}=\left\{ Q\in\mathcal{F}\vert Q(0)=I_{3}\in SO(3)\right\}$.
In short, we deal with the set of arclength-parametrized regular
curves of length 1, denoted by $\mathcal{C}_{1}$. As any regular curve
$X$ can be recovered by integrating its tangent $\dot{X}(s)=T_{\theta}(s)$,
we have 
\begin{equation}
\mathcal{C}_{1}=\left\{ s\mapsto {X}(s)=X_{0}+Q_{0}\int_{0}^{s}T_{\theta}(u)du\:\vert\:Q_{\theta}\in\mathcal{F}_{0},\,X_{0}\in\mathbb{R}^{3},\,Q_{0}\in SO(3)\right\} \label{eq:Space_Regular}
\end{equation}
indexed by the parameters $\left(X_{0},Q_{0},\theta\right)$.
This parametrization is known to be one-to-one: for any curve $X_{1}$ and $X_{2}$ having the same curvature $\theta$, there exists a vector $a$ and a rotation $R\in SO(p)$ such that ${X}_{1}=a+R{X}_{2}$. 
If the Frenet path for ${X}_{2}$ has an initial condition
equal to $I_{3}$, the rotation matrix $R$ is exactly the initial
condition of the Frenet path associated with ${X}_{1}$. For
this reason, the space $\mathcal{F}_{0}$ can be naturally considered
as the shape space. 
The functions $s\mapsto\theta(s)$ or $s\mapsto Q_{\theta}(s)$ represent the geometrical content of any regular curve ${X}$.

\section{Functional Data Analysis and Statistical Shape Analysis\label{sec:FDA-ShapeAnalysis}}

Let us consider $N$ curves in $\mathbb{R}^{3}$ defined as functions $\mathcal{S}=\left\{ x_{1},\dots,x_{N}\right\}$ from $\left[0,T\right]$ to $\mathbb{R}^{3}$. 
As seen in section \ref{sec:FrenetSerretODE-and-Equivalence}, we can identify the shapes with the Frenet paths $\boldsymbol{Q}=\left\{ Q_{1},\dots,Q_{N}\right\} $, or equivalently with the curvatures $\boldsymbol{\theta}=\left\{ \theta_{1},\dots,\theta_{N}\right\} $. 
Hence, our idea is to define a mean shape for the population of curves $\mathcal{S}$, through a reference
pattern for curves in $\mathcal{C}_{1}$, independently of the variations
in translations, rotations and scalings. Our parametrization (\ref{eq:Space_Regular})
shows that the quotient space of arclength parametrized curves (under
the group action of Euclidean motions) is exactly the space of Frenet
paths. Our aim is then to derive a reference parameter $\bar{\theta}$
(and reference Frenet path $\bar{Q}$) for $\mathcal{S}$. 

In order to define a mean shape based on the Frenet-Serret representation, we first study the characteristic features of Frenet Paths. We introduce some important concepts in view of statistical shape analysis that provide the basis of our formulation.   
In particular, we provide a link between the skew-symmetric matrix $s\mapsto A_{\theta}(s)$ (or the curvature $\theta$) and the Frenet path $s\mapsto Q(s)$ defined by the corresponding Frenet-Serret ODE.

\subsection{Differential equations on Lie Groups}

It is convenient to consider the Frenet-Serret ODE as an Ordinary Differential Equations on the Lie group $SO(p)$. 
Ensuring the orthogonality constraint requires a special treatment in developing a numerical algorithm to solve an ODE and also in tackling a parameter estimation problem in ODE, as numerical errors can accumulate and induce an uncontrolled bias. 
The extension of the theory of ODEs from Euclidean space to Lie groups or manifolds is well developed \citep{Hairer2006}. 
In particular, the rotation group $SO(p)$ is a Lie Group that is also a differentiable manifold, with many remarkable properties that are essential in tackling the numerical problems \citep{Absil2010}.

Typically, $SO(p)$ is considered as a submanifold of the Euclidean space $\mathbb{R}^{p\times p}$, with the usual inner product $\left\langle M,M'\right\rangle =\mbox{Tr}\left(M^{\top}M'\right)$
(and the associated Frobenius norm).
The Tangent Space at point $M$ to $SO(p)$ is the vector space 
\[
T_{M}SO(p)=\left\{ MU\vert U^{\top}=-U\right\}\,,
\]
usually identified with the set of skew-symmetric matrices. In
particular, the Tangent Space at the identity $I_{p}$ is called the
Lie algebra of the Lie group, denoted by $\mathfrak{so}(p)$.
An ODE is then defined as a function $F\::\:SO(p)\longrightarrow T_{M}SO(p)$,
such that $\dot{Y}(t)=F\left(t,Y(t)\right)$. In the case of the Frenet-Serret
equation, the vector field is time-varying but relatively simple.

A fundamental tool for the analysis of ODE and flows on Lie groups
is the Exponential map, $\mbox{Exp}_{M}$, at point $M$, which relates
the tangent space to the manifold. The Exponential map $\mbox{Exp}_{M}\,:\,T_{M}SO(p)\longrightarrow SO(p)$
is such that $\mbox{Exp}_{M}(U)=\gamma(1;M,U)$, where $\gamma$ is
the unique geodesic $s\mapsto\gamma(s,M,U)$ such that $\gamma(0;M,U)=M$
and $\dot{\gamma}(0;M,U)=U$. Conversely, if we have a given root
$M$ and a target point $N$, the logarithmic map returns a tangent
vector at $M$, pointing toward $N$, of length $dist(M,N)$. Hence,
the logarithmic map $\mbox{Log}_{M}\,:\,SO(p)\longrightarrow T_{M}SO(p)$
at $M$ is $\mbox{Log}_{M}(N)=V$ such that $\mbox{Exp}_{M}(V)=N$
and $\| \mbox{Log}_{M}(N)\| =dist(M,N)$. 

Interestingly, if we consider a matrix Lie group, the exponential
and logarithmic maps can be expressed simply with the classical matrix
exponential $\exp(A)=\sum_{k\geq0}\frac{A^{k}}{k!}$ and matrix logarithm,
see \citet{Higham2008}. Indeed, we have $\mbox{Exp}_{M}(U)=M\exp(U)$
and $\mbox{Log}_{M}(N)=\log(M^{\top}N)$. As a consequence, the geodesic
distance $dist(M,N)=\| \log(M^{\top}N)\| _{F}$ has
a closed form expression that is amenable to computation. Numerous efficient algorithms exist for computing the exponential of a matrix; the case of $p=3$ is remarkable, as in that case the exponential and logarithm have a closed-form expression given by the so-called Rodrigues formula. We will use in our applications these formulas to derive our fast algorithms. 


A fruitful approach to solving a differential equation $\dot{Y}=A(t)Y$
with $Y(0)=Y_{0}$ in a Lie group is to look for a solution of the
form $Y(t)=Y_{0}\exp\left(\Omega(t)\right)$. This implies that the
function $t\mapsto\Omega(t)$ is defined in $\mathfrak{so}(p)$, and
is itself a solution of the \emph{dexpinv} differential equation (chapter
IV.7 in \citet{Hairer2006}, \citet{Iserles2000}). Consequently, the function admits the so-called Magnus expansion {\footnotesize\begin{equation}
\hspace*{-0.1cm}\Omega(t)=\int_{0}^{t}\petit A(s)ds-\frac{1}{2}\int_{0}^{t}\petit\left[\int_{0}^{\tau}\petit A(s)ds,A(\tau)\right]\petit d\tau
+\frac{1}{4}\int_{0}^{t}\petit\left[\int_{0}^{\tau}\petit \left[\int_{0}^{\sigma}\petit A(\mu)d\mu,A(\sigma)\right]\petit d\sigma,A(\tau)\right]\petit d\tau+\dots\,,\label{eq:MagnusExpansion}
\end{equation}}
which can be used to derive efficient integration methods. 
Utilizing the Magnus expansion, we obtain an exponential form for the flow of
the ODE (\ref{eq:FS-ode-matrix}). The flow $\phi_{\theta}$ is the
function such that for all $s$ in $\left[0,1\right]$, $t\mapsto\phi_{\theta}\left(t,s,M\right)$
is the solution of the Frenet-Serret ODE satisfying $Q(s)=M$. For
all $t,s$ in $\left[0,1\right]$ and $\theta$ in $\mathcal{H}$,
we define the matrix-valued (in $\mathfrak{so}(p)$) function $\Omega\left(t,s;\theta\right)$
such that the flow can be written as 
\begin{equation}
\phi_{\theta}\,:\,\left(t,s,Q\right)\mapsto Q\exp\left(\Omega(t,s;\theta)\right)\label{eq:ExactODEFlow}
\end{equation}
The essential property of the flow is the group property, i.e for
all $s,u,t\in\left[0,1\right]$ and $Q\in SO(p)$, $\phi_{\theta}\left(t,s,Q\right)=\phi_{\theta}\left(t-u,u,\phi_{\theta}\left(u-s,s,Q\right)\right)$. 

Although the expression of $\Omega$ is intractable in general, we
can derive a consistent approximations to the flow by truncating the
Magnus expansion, see chapter IV in \citet{Hairer2006}. In particular,
we use an approximation of order 2, obtained by using a simple quadrature
rule with the midpoint and truncating after the first term: $Q_{s+h}=Q_{s}\exp\left(hA_{\theta}\left(s+\frac{h}{2}\right)\right)$,
i.e. $\phi_{\theta}(h,s,Q_{s})-Q_{s+h}=O(h^{2})$. The corresponding
approximate flow $\tilde{\phi}_{\theta}\left(h,s,Q\right)=QN_{h}(s,\theta)$
can be seen as an Euler-Lie method that possesses several interesting
features: it respects the $SO(p)$ constraint, has an explicit
and pointwise dependence in $\theta$, and the approximation is uniform
on $SO(p)$.

In the computation and in the analysis of approximation error of our
algorithms, we will use repeatedly the commutator $\left[A,B\right]$
between $A$ and $B$ in the Lie algebra $\mathfrak{so}(p)$ defined
as $\left[A,B\right]=AB-BA$. More generally, the commutator between
two vector fields measures and computes the degree of non-commutativity between two vector fields. In the case of matrix Lie groups, it boils down to the classical matrix commutator $\left[A,B\right]$,
often denoted by $\mbox{ad}_{A}(B)$ (derivative of the
adjoint representation). The commutator arises in the Baker-Campbell-Hausdorff (BCH)
formula, which is central in the theoretical and computational analysis
of functions on Lie Groups as 
\begin{equation}
\exp(tA)\exp(tB)=\exp\left(tA+tB+\frac{1}{2}t^{2}\left[A,B\right]+O(t^{2})\right)\label{eq:BCHformula}
\end{equation}
for $t$ small enough.

\subsection{Mean shape and mean vector field \label{sub:Mean-shape-ODE}}

As seen in section \ref{sec:FrenetSerretODE-and-Equivalence}, we identify the shapes with the Frenet paths $\boldsymbol{Q}=\left\{ Q_{1},\dots,Q_{N}\right\} $, or equivalently with the curvatures $\boldsymbol{\theta}=\left\{ \theta_{1},\dots,\theta_{N}\right\} $. 
Furthermore, we have shown that the geometrical features $\theta_{i}$ define a vector field, and that the observable features such as tangent, normal or binormal vectors are in fact the
corresponding flows. These observations lead us to considering the mean shape
as the mean of the vector fields $Q\mapsto QA_{\theta_{i}}(s)$ with
flows $\phi_{\theta_{i}}$ defined by (\ref{eq:ExactODEFlow}). We
define then the mean vector field (or mean shape) as the vector field
defined on $SO(3)$ such that the solution paths are close to the
individual Frenet paths $Q_{i},\:i=1,\dots,N$. In other words, the
mean vector field corresponds to the flow that provides {\it a best} approximation
to all the individual flows. 

A noticeable feature of our formulation is that we do not use the infinitesimal characterization of the differential equation based on the derivative. We use instead the group property of the flow that can be interpreted as a \emph{self-prediction property}: if $s\mapsto Q(s)$ is a solution to equation (\ref{eq:FS-ode-matrix}), then for all $t,\,s\in\left[0,1\right]$ such that $\left|t-s\right|\leq1$, we have 
\begin{equation}
Q(t)=\phi_{\theta}\left(t-s,s,Q(s)\right).\label{eq:selfprediction}
\end{equation}
Otherwise, the curve $s\mapsto Q(s)$ is a solution to $Q'=QA_{\theta}$
if and only if $$\int_{0}^{1}\int_{0}^{1}d\left(Q(t),\phi_{\theta}\left(t-s,s,Q(s)\right)\right)dsdt=0,$$
where $d(\cdot,\cdot)$ is a distance defined on $SO(p)$. In the
case of multiple solutions $Q_{1},\dots,Q_{N}$, the mean flow $Q\mapsto Q\exp\left(\Omega(t-s,s;\theta)\right)$
should minimize the self-prediction errors for all the trajectories
simultaneously. In this work, the prediction error is measured with
the geodesic distance in $SO(3)$. 

For statistical estimation, we first introduce the penalized criterion $\breve{\mathcal{I}}_{h,\lambda}(\theta;\boldsymbol{Q})$ defined as 
{\footnotesize\begin{equation}
\frac{1}{N}\sum_{i=1}^{N}\int_{0}^{1}\int_{0}^{1}K_{h}(t-s)\| \log\left(Q_{i}(t)^{\top}Q_{i}(s)\exp\left(\Omega(t-s,s;\theta)\right)\right)\| _{F}^{2}dsdt+\lambda\int_{0}^{1}\| \theta^{''}(t)\| ^{2}dt \label{eq:SelfPrediction1}
\end{equation}}

\noindent where $K(\cdot)$ is a kernel function with compact support,
e.g. $K(u)=\frac{3}{4}(1-u)^{2}1_{[-1,1]}(u)$. As usual, we denote
the scaled kernel by $K_{h}(u)=(1/h)K(u/h)$, and the absolute
moments of $K$ are denoted by $\mu_m(K)=\int_{-1}^{1}\left|x\right|^{m}K(x)dx$. 
The kernel $K(\cdot)$ and the bandwidth $h$ define a prediction
horizon for the flow. For a fixed $h$ and $\lambda$, we can define
the mean vector field (or curvature) as the parameter $\theta$ that
minimizes the global prediction error $\breve{\mathcal{I}}_{h,\lambda}(\theta;\boldsymbol{Q})$.
In the next proposition, we show that the mean vector field $\breve{\theta}_{h,\lambda}$ exists under general conditions.
\begin{prop}
\label{ExistenceUniquenessMean} Let $Q_{1},\dots,Q_{N}$ be Frenet
paths with curvatures $\boldsymbol{\theta}$, such that for all $i=1,\dots,N$,
$\| \theta_{i}\| _{\infty}\leq C$. There exists
$\breve{\theta}_{h,\lambda}$ in $\mathcal{H}$ such that 
\[
\breve{\theta}_{h,\lambda}\in\arg\min_{\theta\in\mathcal{H}}\breve{\mathcal{I}}_{h,\lambda}(\theta)\,.
\]
\end{prop}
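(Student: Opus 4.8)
The plan is to establish existence by the direct method of the calculus of variations, treating $\breve{\mathcal{I}}_{h,\lambda}$ as a sum of a nonnegative data-fidelity term and the coercive-looking penalty $\lambda\int_0^1\|\theta''(t)\|^2\,dt$. Set $\mathcal{I}^{*}=\inf_{\theta\in\mathcal{H}}\breve{\mathcal{I}}_{h,\lambda}(\theta)$; this is finite, since $\breve{\mathcal{I}}_{h,\lambda}$ takes a finite value at the constant competitor $\theta\equiv(\varepsilon,0)$ (here the geodesic distances in (\ref{eq:SelfPrediction1}) are $O(Ch)$, because $Q_i$ is exactly the Frenet path of $\theta_i$, so $Q_i(t)^{\top}Q_i(s)=\exp(-\Omega(t-s,s;\theta_i))$ and $\|\theta_i\|_\infty\le C$), and it is nonnegative. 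Fix a minimizing sequence $(\theta_n)\subset\mathcal{H}$ with $\breve{\mathcal{I}}_{h,\lambda}(\theta_n)\downarrow\mathcal{I}^{*}$. The penalty then yields $\sup_n\|\theta_n''\|_{L^2[0,1]}<\infty$.

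The first real step is to upgrade this to $\sup_n\|\theta_n\|_{W^{2,2}[0,1]}<\infty$. Writing $\theta_n$ as its $L^2$-orthogonal projection onto affine functions plus a remainder, the remainder is controlled in $C^0$ by $\|\theta_n''\|_{L^2}$ via a Poincar\'e inequality, so only the affine (amplitude) part needs to be bounded. This is the step I expect to be the main obstacle, because the penalty does not see the affine part and, $SO(3)$ being compact, all geodesic distances in (\ref{eq:SelfPrediction1}) are bounded, so the data term is not coercive on its own. The route I would take: combine the constraint $\kappa>0$ with the uniform bound on $\|\theta_n''\|_{L^2}$ to show that if $\|\theta_n\|_\infty\to\infty$ then $\kappa_n$ or $\tau_n$ has amplitude of order $\|\theta_n\|_\infty$ on a subinterval long enough that the flow $\exp(\Omega(t-s,s;\theta_n))$ of (\ref{eq:ExactODEFlow}) makes many revolutions over horizons $|t-s|\le h$; this forces $\exp(\Omega(t-s,s;\theta_n))$ to stay a fixed geodesic distance from the near-identity targets $\exp(\Omega(t-s,s;\theta_i))$ on a set of $(s,t)$ of positive $K_h$-mass, so the data term is bounded below by a constant $c_1>0$. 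Since $\mathcal{I}^{*}\le\breve{\mathcal{I}}_{h,\lambda}((\varepsilon,0))=O(C^2h^2)<c_1$ for $h$ in the relevant range, the minimizing sequence eventually satisfies $\|\theta_n\|_\infty\le M=M(C,h,\lambda,N)$, hence $\sup_n\|\theta_n\|_{W^{2,2}}<\infty$.

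Given this bound, the conclusion is routine. By the compact embedding $W^{2,2}[0,1]\hookrightarrow C^1[0,1]$, a subsequence (not relabelled) converges in $C^1$ to some $\theta_\infty$ with $\theta_n''\rightharpoonup\theta_\infty''$ weakly in $L^2$. The penalty $\theta\mapsto\int_0^1\|\theta''\|^2$ is convex and strongly continuous, hence weakly lower semicontinuous, so $\int_0^1\|\theta_\infty''\|^2\le\liminf_n\int_0^1\|\theta_n''\|^2$. The data term is in fact continuous along this subsequence: $\Omega(t,s;\theta)$ depends continuously on $\theta$ in $L^\infty$ for $\|\theta\|_\infty\le M$ and $|t-s|\le h$ (the Magnus expansion (\ref{eq:MagnusExpansion}) is an absolutely convergent series of iterated integrals of the linear map $\theta\mapsto A_\theta$), the matrix exponential and the matrix logarithm near $I_3$ are continuous, and $K_h$ is bounded with compact support, so dominated convergence applies. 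Hence $\breve{\mathcal{I}}_{h,\lambda}(\theta_\infty)\le\liminf_n\breve{\mathcal{I}}_{h,\lambda}(\theta_n)=\mathcal{I}^{*}$.

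It remains to check $\theta_\infty\in\mathcal{H}$. That $\theta_\infty\in C^2$ can be recovered by a standard bootstrap from the Euler--Lagrange equation (or argued within a slightly stronger penalty norm). The genuinely delicate leftover point is that $\kappa>0$ is an open condition, so one must exclude $\kappa_\infty$ touching zero; I would handle this by showing that restricting the minimization to $\{\kappa\ge\delta\}$ for a small $\delta=\delta(C,h,\lambda,N)>0$ does not change the infimum (the contribution of near-degenerate $\kappa$ to (\ref{eq:SelfPrediction1}) stays controlled), or equivalently by extending $\Omega$, hence $\breve{\mathcal{I}}_{h,\lambda}$, continuously to $\kappa\ge0$, which is harmless for all later developments. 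Then $\breve{\theta}_{h,\lambda}:=\theta_\infty$ attains the minimum, proving the proposition.
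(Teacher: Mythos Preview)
Your proof follows the same overall strategy as the paper's: the direct method with a minimizing sequence, the penalty giving a uniform bound on $\|\theta_n''\|_{L^2}$, a separate argument to control the affine part of $\theta_n$, extraction of a weakly convergent subsequence, and lower semicontinuity. The differences are in the subsidiary steps. For the affine-part control the paper simply asserts that if the linear part of $\theta_n$ were unbounded then the data term ``diverges''; you instead argue (more carefully, and more honestly given that $SO(3)$ is compact so the data term is in fact bounded above) via a rotation-counting lower bound combined with the competitor value $\breve{\mathcal{I}}_{h,\lambda}((\varepsilon,0))=O(C^2h^2)$. Note that your version implicitly needs $h$ small enough for $O(C^2h^2)<c_1$, which is not in the statement but is consistent with how $h$ is used throughout the paper. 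For passing to the limit, the paper works with weak convergence in $\mathcal{H}$ and uses ODE continuous dependence via Gronwall to get pointwise convergence of $\exp\Omega(\cdot,\cdot;\theta_n)$; you instead invoke the compact embedding $W^{2,2}\hookrightarrow C^1$ and argue continuity of the data term directly from the Magnus series and dominated convergence, which is cleaner. Finally, you explicitly flag the $C^2$ regularity and the open constraint $\kappa>0$ needed for $\theta_\infty\in\mathcal{H}$; the paper's proof does not address either point.
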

\noprint{
\begin{proof}
We can rewrite 
\begin{equation}
\mathcal{I}_{h,\lambda}(\theta)=\frac{1}{N}\sum_{i=1}^{N}\int_{0}^{1}\int_{0}^{1}K_{h}(t-s)\| \log\left(\exp\left(-\Omega(t-s,s;\theta_{i})\right)\exp\left(\Omega(t-s,s;\theta)\right)\right)\| _{F}^{2}dsdt+\lambda\int_{0}^{1}\| \theta^{''}(t)\| ^{2}dt\label{eq:}
\end{equation}
The optimization takes place in $\mathcal{H}=H^{1}(0,1)^{\otimes(p-1)}$,
equipped with the norm $\| \theta\| _{\mathcal{H}}=\| \theta(0)\| ^{2}+\| \theta^{'}(0)\| ^{2}+\int_{0}^{1}\| \theta^{''}(t)\| ^{2}dt$.
Let $\left(\theta_{n}\right)_{n\geq1}$ be a minimizing sequence such
that $\mathcal{I}_{h,\lambda}(\theta_{n})$ converges to $\breve{i}_{h,\lambda}=\min_{\theta\in\mathcal{H}}\breve{\mathcal{I}}_{h,\lambda}(\theta)$.
Then $\| \theta_{n}^{''}\| _{L^{2}}^{2}$ is bounded,
and we can take a subsequence such that it converges weakly in $L^{2}$,
and we can find another subsubsequence such that $\theta_{n}$ converges
weakly to a function in $\mathcal{H}$ denoted by $\breve{\theta}_{h,\lambda}$.
Indeed, for all $s\in\left[0,1\right]$, the solution $t\mapsto\exp\left(\Omega(t-s,s;\theta)\right)$
(defined on $\left[s,1-s\right]$) depends smoothly in $\theta$: 
\[
\| \exp\left(\Omega(t-s,s;\theta)\right)-\exp\left(\Omega(t-s,s;\theta')\right)\| \leq C_{st}\int_{s}^{t}\| \theta(u)-\theta'(u)\| ^{2}du
\]
As the function $O\mapsto\| \log\left(RO\right)\| ^{2}$
is differentiable on $SO(p)$ for every $R$ in $SO(p)$, the function
$\theta\mapsto\sum_{i=1}^{N}\int_{0}^{1}\int_{0}^{1}K_{h}(t-s)\| \log\left(\exp\left(-\Omega(t-s,s;\theta_{i})\right)\exp\left(\Omega(t-s,s;\theta)\right)\right)\| _{F}^{2}dsdt$
is differentiable w.r.t $\theta$. If the linear part of $\theta_{n}$
is not bounded then we could make this latter function diverges. This
means that we can find a subsequence of $\theta_{n}$ that converges
weakly to $\breve{\theta}$ in $\mathcal{H}$. As any ODE solution
can be written $\exp\Omega(t-s,s;\theta_{n})=\int_{s}^{t-s}A_{\theta_{n}}(u)\exp\Omega(u,s;\theta_{n})du$
and by Gronwall's lemma, we obtain that for all $t,s$ \textbf{$\exp\Omega(t-s,s;\theta_{n})$
}converges pointwise to the solution $\exp\Omega(t-s,s;\breve{\theta}_{h,\lambda})$.
We can also show that for all $t,s$ 
\[
\frac{d}{dt}\left\{ \exp\Omega\left(t,s,\theta_{n}\right)-\exp\Omega\left(t,s,\breve{\theta}_{h,\lambda}\right)\right\} =\lim_{h\rightarrow0}\frac{1}{h}\int_{t}^{t+h}\left\{ A_{\theta_{n}}(u)\exp\Omega\left(u,s,\theta_{n}\right)-A_{\breve{\theta}_{h,\lambda}}(u)\exp\Omega\left(u,s,\breve{\theta}_{h,\lambda}\right)\right\} du
\]
converges to $0$, with Cauchy Schwartz in $L^{2}$, and by using the
fact that $\exp\Omega\left(\cdot,s,\theta_{n}\right)$ converges to
$\exp\Omega\left(\cdot,s,\breve{\theta}_{h,\lambda}\right)$. 

Consequently for all $t,s$, $A_{\theta_{n}}(t)\exp\Omega\left(t,s,\theta_{n}\right)\longrightarrow A_{\breve{\theta}_{h,\lambda}}(t)\exp\Omega\left(t,s,\breve{\theta}_{h,\lambda}\right)$.
Hence, if we postmultiply by $\exp-\Omega\left(t,s,\breve{\theta}_{h,\lambda}\right)$,
we obtain that $\theta_{n}$ converges to $\breve{\theta}_{h,\lambda}$
pointwise. 
\end{proof}}
Proposition \ref{ExistenceUniquenessMean} shows that the mean vector
field exists for any $h$ and $\lambda$ in great generality, 
as long as the sample is bounded in $L^{2}$. We can also define the
mean Frenet Path $\breve{Q}_{h,\lambda}(t)=\exp\big(\Omega(t,0,\breve{\theta}_{h,\lambda})\big)$ and the corresponding mean shape $\breve{{X}}$ obtained by integrating the gradient. 
However, it is rather difficult to compute the corresponding mean or to analyze it. For this reason, we introduce an approximation, $\mathcal{I}_{h,\lambda}(\theta)$, to the criterion $\mathcal{\breve{I}}_{h,\lambda}$
and $\breve{\theta}_{h,\lambda}$ valid for small $h$ defined as
{\footnotesize\begin{eqnarray}
&&\hspace*{-1cm}\frac{1}{N}\sum_{i=1}^{N}\int_{0}^{1}\int_{0}^{1}K_{h}(t-s) \| \log\left(\exp\left(-(t-s)A_{\theta_{i}}\left(\frac{s+t}{2}\right)\right)\exp\left((t-s)A_{\theta}\left(\frac{s+t}{2}\right)\right)\right)\| _{F}^{2}dsdt \nonumber \\
&& \quad +\lambda\int_{0}^{1}\| \theta^{''}(t)\| ^{2}dt\label{eq:SelfPrediction2}
\end{eqnarray}}
\begin{prop}\label{ApproxCrt}
Let $Q_{1},\dots,Q_{N}$ be Frenet paths with parameters $\theta_{i},\:i=1,\dots,N$
in $\mathcal{H}$, satisfying $\| \theta_{i}\| _{2}^{2}\leq\frac{\pi}{2}$. 
Then, there exists $B>0$, such that for all $\| \theta\| _{2}\leq B$,
\[
\breve{\mathcal{I}}_{h,\lambda}(\theta)-\mathcal{I}_{h,\lambda}(\theta)=O(h^{3}).
\]
\end{prop}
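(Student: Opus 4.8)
The plan is to estimate, term by term, the difference between the two kernels
\[
\phi_\theta^{\mathrm{err}}(t,s):=\big\|\log\!\big(\exp(-\Omega(t-s,s;\theta_i))\exp(\Omega(t-s,s;\theta))\big)\big\|_F^2
\]
appearing implicitly in $\breve{\mathcal I}_{h,\lambda}$ (via the rewriting in the omitted proof of Proposition~\ref{ExistenceUniquenessMean}) and its midpoint-frozen analogue
\[
\psi_\theta^{\mathrm{err}}(t,s):=\big\|\log\!\big(\exp(-(t-s)A_{\theta_i}(\tfrac{s+t}{2}))\exp((t-s)A_\theta(\tfrac{s+t}{2}))\big)\big\|_F^2
\]
from $\mathcal I_{h,\lambda}$, and then to integrate the pointwise bound against $K_h(t-s)\,ds\,dt$. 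Since the penalty terms $\lambda\int_0^1\|\theta''\|^2$ are identical in both functionals, they cancel and play no role. The support of $K_h$ forces $|t-s|\le h$, so writing $h$ for the local increment $|t-s|$ we only ever need the estimates for small time-lag.

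First I would recall the order-$2$ Magnus approximation already invoked in Section~\ref{sub:Mean-shape-ODE}: $\Omega(t-s,s;\theta)=(t-s)A_\theta(\tfrac{s+t}{2})+O((t-s)^3)$, the $O((t-s)^3)$ being controlled (uniformly on the relevant parameter ball, using $\|\theta\|_\infty$ bounds and the $C^2$ regularity in $\mathcal H$) through the integral remainder of the midpoint quadrature applied to $\int_s^t A_\theta$ together with the first commutator term of the Magnus series, which is itself $O((t-s)^3)$ because $[\int_s^\tau A_\theta,A_\theta(\tau)]$ integrates a quantity of size $O((t-s))$ over an interval of length $O((t-s))$ against another factor of size $O((t-s))$. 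Next I would use the BCH formula (\ref{eq:BCHformula}) to expand both $\exp(-\Omega(t-s,s;\theta_i))\exp(\Omega(t-s,s;\theta))$ and $\exp(-(t-s)A_{\theta_i})\exp((t-s)A_\theta)$ (abbreviating the frozen matrices): in both cases the product is $\exp$ of a Lie-algebra element of the form $(t-s)(A_\theta-A_{\theta_i})(\tfrac{s+t}{2}) + O((t-s)^2)$, and the two Lie-algebra elements agree up to $O((t-s)^3)$ by the Magnus estimate above. Applying $\log$ and the fact that $\|\log(\cdot)\|_F$ is Lipschitz on a neighbourhood of $I_3$ (which is where we sit, since both arguments are within $O(h)$ of the identity — this is exactly where the hypothesis $\|\theta_i\|_2^2\le \pi/2$ and the smallness constraint $\|\theta\|_2\le B$ are used, to keep the relevant rotations inside the injectivity radius so that $\log$ is well-defined and smooth), we get that $\log$ of the two products differ by $O(h^3)$ and each is itself $O(h)$. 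Hence the \emph{squared} Frobenius norms, being $\|u\|^2-\|v\|^2=\langle u-v,u+v\rangle$ with $\|u-v\|=O(h^3)$ and $\|u+v\|=O(h)$, differ by $O(h^4)$ pointwise.

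Finally I would integrate: for each $i$,
\[
\int_0^1\!\!\int_0^1 K_h(t-s)\,\big|\phi_{\theta}^{\mathrm{err}}(t,s)-\psi_{\theta}^{\mathrm{err}}(t,s)\big|\,ds\,dt
\le \sup_{|t-s|\le h} O(|t-s|^4)\cdot\!\int_0^1\!\!\int_0^1 K_h(t-s)\,ds\,dt = O(h^4),
\]
and I would remark that in fact even the crude pointwise $O(h^3)$ bound on the difference of kernels (before the cancellation in the squared norm) would already give the claimed $O(h^3)$ after integration, which is all that is asserted; averaging over $i=1,\dots,N$ keeps the constant uniform in $N$. The main obstacle is the second step: one must track the constants in the Magnus/BCH expansions and verify they are uniform over the ball $\|\theta\|_2\le B$ and over all the data curvatures $\theta_i$ (using only $\|\theta_i\|_\infty\le C$ or $\|\theta_i\|_2^2\le\pi/2$ and their $C^2$-membership in $\mathcal H$), and in particular check that for $|t-s|\le h$ with $h$ small the arguments of $\log$ genuinely stay in the region where $\log$ is smooth with controlled derivatives — the constraint $\|\theta_i\|_2^2\le\pi/2$ is precisely what guarantees this. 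Everything else is a routine Taylor/commutator bookkeeping exercise.
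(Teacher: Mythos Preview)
Your proposal is correct and follows essentially the same route as the paper: rewrite the integrand of $\breve{\mathcal I}_{h,\lambda}$ via $Q_i(t)^\top Q_i(s)=\exp(-\Omega(t-s,s;\theta_i))$, replace $\Omega$ by its midpoint approximation using the truncated Magnus expansion, apply the BCH formula to both the exact and the frozen products, and integrate the pointwise difference against $K_h$. Your order tracking is in fact slightly sharper than the paper's---you note that the first Magnus commutator term is already $O((t-s)^3)$ (since $[\int_s^\tau A,A(\tau)]=O((\tau-s)^2)$), so the two Lie-algebra elements agree to $O((t-s)^3)$ and the squared norms differ pointwise by $O(h^4)$; the paper uses the coarser bound $\Omega-\int_s^tA_\theta=O((t-s)^2)$ and obtains the pointwise $O((t-s)^3)$ directly, which is all that is needed.
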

\noprint{\begin{proof}
Because the $Q_{i}$s are Frenet path, the main point is to provide
a tractable approximation for the geodesic distance $\| \log\left(\exp\left(-\Omega(t-s,s;\theta_{i})\right)\exp\left(\Omega(t-s,s;\theta)\right)\right)\| _{F}^{2}$.
If $\| \theta_{i}\| _{2}^{2}\leq\frac{\pi}{2}$ and
$\| \theta\| _{2}^{2}\leq\frac{\pi}{2}$, we can
use the Magnus expansion of $\Omega$. The objective is to show that
$\Omega$ can be replaced by the first terms of the Magnus expansion
for small $t-s$. Indeed, for all $s,$ we have $\Omega_{\theta}(s+h,s)-\Omega_{\theta}^{\left[m\right]}(s+h,s)=O(h^{2m})$
where $\Omega_{\theta}^{\left[m\right]}$ is the truncation at level
$m$: 
\begin{eqnarray*}
\Delta(\theta,\theta_{i}) & = & \log\left(\exp\left(-\Omega(t-s,s;\theta_{i})\right)\exp\left(\Omega(t-s,s;\theta)\right)\right)\\
 & = & \log\left(\exp\left(-\Omega^{\left[m\right]}(t-s,s;\theta_{i})+R_{i}^{\left[m\right]}\right)\exp\left(\Omega^{\left[m\right]}(t-s,s;\theta)+R^{\left[m\right]}\right)\right)
\end{eqnarray*}
For small $t-s$, the Baker-Campbell-Hausdorff states that 
\[
\log\left(\exp\left((t-s)B(s)\right)\exp\left((t-s)C(s)\right)\right)=\sum_{n\geq1}(t-s)^{n}z_{n}\left(B(s),C(s)\right)
\]
with $z_{n}\left(B(s),C(s)\right)$ a homogeneous Lie polynomial of
order $m$. The main term is $z_{1}=B(s)+C(s)$ and $z_{2}=\frac{1}{2}\left[B(s),C(s)\right]$.
This means that 
\begin{eqnarray*}
\Delta(\theta,\theta_{i}) & = & \sum_{n\geq1}(t-s)^{n}z_{n}\left(-\frac{1}{t-s}\Omega(t-s,s;\theta_{i}),\frac{1}{t-s}\Omega(t-s,s;\theta)\right)\\
 & = & (t-s)\left(\frac{1}{t-s}\Omega(t-s,s;\theta)-\frac{1}{t-s}\Omega(t-s,s;\theta_{i})\right)+(t-s)^{2}\left[\frac{1}{t-s}\Omega(t-s,s;\theta),\frac{1}{t-s}\Omega(t-s,s;\theta_{i})\right]\\
 &  & +(t-s)^{3}\sum_{n\geq0}(t-s)^{n}z_{n+3}\left(\Omega(t-s,s;\theta_{i}),\Omega(t-s,s;\theta)\right)
\end{eqnarray*}
For order $m=1$, for any $\theta$, we have $\frac{1}{t-s}\Omega^{\left[1\right]}(t-s,s;\theta)=\frac{1}{t-s}\int_{s}^{t}A_{\theta}(u)du$.
From theorem 4.2 in REF (Lie Group Methods), truncating at order 1
provides an approximation of order 2, meaning that $\Omega(t-s,s;\theta)-\int_{s}^{t}A_{\theta}(u)du=R^{[1]}(t,s)$
with $R^{[1]}(t,s)=C\left(\theta\right)\left(t-s\right)^{2}$ (with
$C\left(\cdot\right)$ bounded function). This means that 
\begin{equation}
\frac{1}{t-s}\left\{ \Omega(t-s,s;\theta)-\int_{s}^{t}A_{\theta}(u)du\right\} =C\left(\theta\right)\left(t-s\right).\label{eq:approx-Term1-Magnus}
\end{equation}
Moreover from the dexpinv equation i.e. $\frac{d}{du}\Omega(u,s)=d\exp_{\Omega}^{-1}A_{\theta}=\sum_{k\geq\text{0}}\frac{B_{k}}{k!}ad^{k}A_{\theta}$,
the derivative of $\Omega$ is controlled with 
\[
\| \frac{d}{du}\Omega(u,s;\theta)\| \leq g\left(\Omega(u,s;\theta)\right)\| A_{\theta}(s)\| 
\]
where $g(x)=2+\frac{x}{2}(1-\cot(\frac{x}{2}))$ (see theorem 4.1
in \citet{Iserles2000}).
\begin{eqnarray*}
\Delta(\theta,\theta_{i}) & = & (t-s)\left(\frac{1}{t-s}\Omega(t-s,s;\theta)-\frac{1}{t-s}\Omega(t-s,s;\theta_{i})\right)+(t-s)^{2}G(\theta,\theta_{i})
\end{eqnarray*}
where $G$ is a smooth bounded function (because $\sum_{n\geq0}(t-s)^{n}z_{n+3}\left(\Omega(t-s,s;\theta_{i}),\Omega(t-s,s;\theta)\right)$
is also a smooth and bounded function). By using the approximation
in (\ref{eq:approx-Term1-Magnus}), we obtain a first order approximation,
i.e.
\begin{eqnarray*}
\Delta(\theta,\theta_{i}) & = & (t-s)\left(\frac{1}{t-s}\int_{s}^{t}A_{\theta}(u)du-\frac{1}{t-s}\int_{s}^{t}A_{\theta_{i}}(u)du\right)+C\left(\theta\right)\left(t-s\right)^{2}+(t-s)^{2}G(\theta,\theta_{i})\\
 & = & (t-s)\left(\frac{1}{t-s}\int_{s}^{t}A_{\theta}(u)du-\frac{1}{t-s}\int_{s}^{t}A_{\theta_{i}}(u)du\right)+C'\left(\theta,\theta_{i}\right)\left(t-s\right)^{2}
\end{eqnarray*}
We propose to replace the integrand by a simple quadrature. This is
the computational basis for the geometric integration of ODE in Lie
groups (see REF). This approximation is derived from the simple midpoint
quadrature rule for integrating 
\[
\frac{1}{t-s}\int_{s}^{t}A_{\theta}(u)du=A_{\theta}\left(s+\frac{t-s}{2}\right)+C\| \theta^{'}\| _{\infty}\times\left(t-s\right)
\]
Our final approximation is then 
\begin{eqnarray*}
\Delta(\theta,\theta_{i}) & = & (t-s)\left(A_{\theta}\left(s+\frac{t-s}{2}\right)-A_{\theta_{i}}\left(s+\frac{t-s}{2}\right)\right)+(t-s)^{2}\left\{ C'(\theta,\theta_{i})+C\| \theta^{'}\| _{\infty}\right\} .
\end{eqnarray*}
The same approximation shows that 
\begin{eqnarray*}
\left|\bar{\mathcal{I}}_{h,\lambda}(\theta)-\mathcal{I}_{h,\lambda}(\theta)\right| & \leq & \iint K_{h}(t-s)\| \Delta(\theta,\theta_{i})\| ^{2}dtds\\
 & \leq & \sum_{i=1}^{N}C_{i}'''(\theta,\theta_{i})\int_{0}^{1}\int_{0}^{1}K_{h}(t-s)\vert t-s\vert^{3}dtds\\
 & \leq & \sigma_{K}^{3}h^{3}\sum_{i=1}^{N}C_{i}'''(\theta,\theta_{i})
\end{eqnarray*}
When $h\longrightarrow0,$ the convergence of $\theta\mapsto\breve{\mathcal{I}}_{h,\lambda}(\theta)$
to $\theta\mapsto\mathcal{I}_{h,\lambda}(\theta)$ is then uniform
on a bounded set in $\mathcal{H}$. We can obtain a simpler approximation
of order 2 of the expression 
\[
\Delta(\theta,\theta_{i})=(t-s)\left(A_{\theta}\left(\frac{t+s}{2}\right)-A_{\theta_{i}}\left(\frac{t+s}{2}\right)\right)+(t-s)^{2}C_{i}'''(\theta,\theta_{i}).
\]
and 
\begin{eqnarray*}
\mathcal{I}_{h,\lambda}(\theta) & = & \frac{1}{N}\sum_{i=1}^{N}\int_{0}^{1}\int_{0}^{1}K_{h}(t-s)\| (t-s)\left(A_{\theta}\left(\frac{t+s}{2}\right)-A_{\theta_{i}}\left(\frac{t+s}{2}\right)\right)+(t-s)^{2}C\| _{F}^{2}dsdt+\lambda\int_{0}^{1}\| \theta^{''}(t)\| ^{2}dt\\
 & = & \frac{1}{N}\sum_{i=1}^{N}\int_{0}^{1}\int_{0}^{1}K_{h}(t-s)(t-s)^{2}\| \left(A_{\theta}\left(\frac{t+s}{2}\right)-A_{\theta_{i}}\left(\frac{t+s}{2}\right)\right)\| _{F}^{2}dsdt+\lambda\int_{0}^{1}\| \theta^{''}(t)\| ^{2}dt+O\left(h^{3}\right)\\
 & = & \frac{2}{N}\sum_{i=1}^{N}\int_{0}^{1}\int_{0}^{1}K_{h}(t-s)(t-s)^{2}\| \theta\left(\frac{t+s}{2}\right)-\theta_{i}\left(\frac{t+s}{2}\right)\| _{F}^{2}dsdt+\lambda\int_{0}^{1}\| \theta^{''}(t)\| ^{2}dt+O\left(h^{3}\right)
\end{eqnarray*}
We use the technical lemmas in appendix for computing an approximation
for small $h$: 
\[
\mathcal{I}_{h,\lambda}(\theta)=\frac{2\sigma_{K}^{2}h^{2}}{N}\sum_{i=1}^{N}\| \theta-\theta_{i}\| _{L^{2}}^{2}+O(h^{3})+\lambda\int_{0}^{1}\| \theta^{''}(t)\| ^{2}dt
\]
Consequently, we can renormalize for $h\leq1$, 
\begin{equation}
\frac{\mathcal{I}_{h,\lambda}(\theta)}{2\sigma_{K}^{2}h^{2}}=\frac{1}{N}\sum_{i=1}^{N}\| \theta-\theta_{i}\| _{L^{2}}^{2}+\frac{\lambda}{2\sigma_{K}^{2}h^{2}}\int_{0}^{1}\| \theta^{''}(t)\| ^{2}dt+O(h)\label{eq:approx-crit}
\end{equation}
If $h$ is small enough, the right-hand side of (\ref{eq:approx-crit})
is convex, and the minimum is attained for $\bar{\theta}=\frac{1}{N}\sum_{i=1}^{N}\theta_{i}$
(and the positivity condition for curvature is satisfied). In all
generality, the mean curvature and torsion is defined by solving the
variational problem 
\[
\min_{\theta\in\mathcal{H}}\| \theta-\bar{\theta}\| _{L^{2}}^{2}+\frac{\lambda}{2\sigma_{K}^{2}h^{2}}\| \theta^{''}\| _{L^{2}}^{2}
\]
Derive the EulerLagrange equation or the optimal solution is given
by $\int_{0}^{1}\alpha(s)k(s,t)ds=\theta(t)$ where $k$ is the kernel
of the appropriate RKHS and characterize the function $\alpha$. 

Is it necessary to make $h\longrightarrow0$ and $\lambda\longrightarrow0$
(with $\frac{\lambda}{h^{2}}\longrightarrow\infty$)? If we do that,
this means that the mean shape is simply $\bar{\theta}$ and $h,\lambda$
are only hyperparameters used to regularize the statistical estimation.
Otherwise, there might be another meaning with the concept of horizon
$h$. 
\end{proof}}
This shows that, at first approximation, our approach is tractable and can be easily understood in terms of the geometry the curves. 


\noprint{
Interestingly, the use of the Frenet-Serret framework suggests
that the (renormalized) dispersion of the Frenet paths $\mathcal{I}_{h,\lambda}(\breve{\theta}_{h,\lambda})$
can be approximated for small $h$ by the classical Euclidean dispersion
of the curvatures: 
\[
\frac{1}{2\sigma_{K}^{2}h^{2}}\mathcal{I}_{h,\lambda}(\breve{\theta}_{h,\lambda})=\frac{1}{N}\sum_{i=1}^{N}\| \theta_{i}-\breve{\theta}_{h,\lambda}\| _{L^{2}}^{2}+\frac{\lambda}{2\sigma_{K}^{2}h^{2}}\| \breve{\theta}_{h,\lambda}^{''}\| _{L^{2}}^{2}+O(h) \,.
\]
Consequently, if we want to analyze the main directions of variations
of the shapes, it is then sensible to use the classical approaches
of Functional Data Analysis to the individual curvatures $\theta_{i},\,i=1,\dots,N$.
In particular, it might be interesting to perform a Functional Principal
Component Analysis or a clustering of the sample of the curvatures
$\boldsymbol{\theta}=\left\{ \theta_{i}\in\mathcal{H}^{p},\,i=1,\dots,N\right\} $, if we want an exploratory data analysis of the functional data $X_{i}$,
based only on the shape variations. 
}

\section{Statistical Estimation}\label{sec:estimation}

In practice, the Frenet paths are rarely directly observed and
they usually have to be estimated from the available data. Typically,
we observe $N$ multidimensional trajectories $x_{1},\dots,x_{N}$,
and the Frenet frames can be computed by Gram-Schmidt orthonormalization
of the time derivatives $x_{i}^{(k)},\,k=1,2,\dots,p$. Although this
approach is easy to implement in any dimension, some care needs to be
taken as the derivatives are estimated from discretely observed trajectories
$x_{i}(t_{ij}), i=1,\ldots,N, j=1,\ldots,n_i$, which might have been corrupted by noise.
We propose a simple statistical model for accounting for random errors in the Frenet frames due to computation and measurement errors. Within this statistical
setting, we propose a statistical methodology for obtaining
estimates of the individual shapes $\theta_{i}$ and of the mean shape
$\breve{\theta}_{h,\lambda}$. For this, we develop a nonparametric statistical
model in the Lie group $SO(3)$ that will help us to analyze the variations
of the geometry of the trajectories $X_{i}$. 

\subsection{A simple model for noisy and discrete observations\label{sub:StatisticalModelNoisyObservations}}

We consider a population of Frenet paths $Q_{1},\dots,Q_{N}$ in $\mathcal{F}$,
with curvature parameter $\theta_{i}$ in $\mathcal{H}$ and initial
condition $Q_{i}^{0}$. Contrary to the situation described in section
\ref{sec:FDA-ShapeAnalysis}, we do not consider that the initial
condition is known or that it is the same for all the paths. Our objective
is to retrieve and identify from noisy observations the two sources
of variations in that population: the unknown initial conditions $\boldsymbol{Q^{0}}=\left\{ Q_{i}^{0}\in SO(3),\,i=1,\dots,N\right\} $,
the functional parameters $\boldsymbol{\theta}=\left\{ \theta_{i}\in\mathcal{H}\right\} $,
and also the mean shape $\breve{\theta}_{h,\lambda}$ as defined in
proposition \ref{ExistenceUniquenessMean}. 

We assume that we have discrete and noisy observations $U_{ij}$ of
the Frenet paths $Q_{i}$ at $n_{ij}$ positions $s_{ij},\:j=1,\dots,n_{i}$
such that $0=s_{i1}<s_{i2}\dots<s_{in_{i}}=1$, i.e for $i=1\dots,N$,
we have 
\begin{equation}
U_{ij}=Q_{i}(s_{ij})\exp\left(W_{ij}\right)\label{eq:StatisticalModelNoise}
\end{equation}
where $W_{ij}$ are random matrices in $\mathfrak{so}(p)$ (i.e $\exp(W_{ij})$
is a random rotation). For simplicity, we assume that for all $i,j$,
the matrices $W_{ij}$ are independent and identically distributed,
such that $E\left[W_{ij}\right]=0$ and $E\left[\exp\left(W_{ij}\right)\right]=I_{p}$,
such that $E\left[\| W_{ij}\| _{F}^{2}\right]=\sigma_{W}^{2}<\infty$.
We assume that the density of the noise is unknown but fixed. The standard distributions on $SO(3)$ satisfy these assumptions, such
as the Fisher-Langevin matrix distribution $\mathcal{F}(M,\alpha)$,
which possesses the density $f(U;D,\alpha)=\left\{ _{0}F_{1}(\frac{p}{2};\frac{\alpha^{2}}{4})\right\} ^{-1}\exp\left(\alpha\mbox{Tr}\left(D^{\top}U\right)\right)$
with respect to the uniform distribution on $SO(3)$ \citep{Mardia1999}.
The parameter $D$ in $SO(3)$ is the mean direction and $\alpha$
is the concentration parameter. Other distributions, with densities
$f(\cdot)$ defined with spectral functions, satisfy these properties,
see e.g. \citet{Boumal2014}. 
If we consider the Fisher-Langevin distribution, a common approach
to parameter inference is to maximize the log-likelihood of the sample
$\boldsymbol{U}=\left\{ U_{ij}\right\} _{i,j=1,\dots,n,N}$ defined
by 
\begin{eqnarray*}
\mathcal{L}\left(\boldsymbol{U};\boldsymbol{\theta},\boldsymbol{Q^{0}}\right) & = & -\log\left(_{0}F_{1}(\frac{p}{2};\frac{\alpha^{2}}{4})\right)+\mbox{Tr}\left(\frac{1}{n_{i}N}\sum_{i,j=1}^{N,n_{i}}Q_{i}(s_{ij})^{\top}U_{ij}\right).
\end{eqnarray*}
The Maximum Likelihood Estimator (MLE) of the paths is defined as
the solution of the constrained optimization problem
\begin{equation}
\begin{array}{l}
\min_{\boldsymbol{\theta},\boldsymbol{Q^{0}}}\sum_{i=1}^{N}\sum_{j=1}^{n_{i}}\| U_{ij}-Q_{i}(s_{ij})\| _{F}^{2}\\
\mbox{s.t.}Q_{i}^{'}(s)=Q_{i}(s)A_{\theta_{i}}(s)\,;\,Q_{i}(0)=Q_{i}^{0}
\end{array}\label{eq:MLE_kappa_tau-1}
\end{equation}
Our statistical problem can be seen as the parameter estimation of
an Ordinary Differential Equation from noisy data, with time-varying
parameters (i.e a semiparametric estimation problem). It is well-known that parameter estimation in ODEs can be ill-posed, and that the classical approach (e.g., MLE, least-squares etc.) can lead to unstable and unreliable results. Satisfactory estimators based on the relaxation of the ODE constraint have been developed by mixing nonparametric estimation and numerical analysis techniques, see \citet{Ramsay2007,WuDing2014}.
In particular, the observations $U_{ij}$ are smoothed by integrating
a constraint on the derivatives: the parameter estimation is achieved
by simultaneously estimating the solution path and the parameters
through an iterative scheme. 

We follow a similar line, and we build on the analysis developed in
section \ref{sub:Mean-shape-ODE} by replacing the Frenet paths $Q_{i}$
by a nonparametric proxy. This approach permits us to take into account
two specific constraints arising in our estimation problem: the $SO(3)$
manifold constraint, and the estimation of a functional parameter
$\theta$. Moreover, our approach is appropriate for estimating $\theta_{i}$
and $Q_{i}$ in the case of a single curve, as for estimating the
mean shape $\breve{\theta}_{h,\lambda}$ of a population. 

\subsection{Statistical Criterion}

The parameter estimation problem can be cast into a prediction problem
by building on the criteria (\ref{eq:SelfPrediction1}) and the definition
of the mean vector field. A first step in the parameter estimation
problem is to recover the Frenet path $Q_{i}$ from the noisy observations
$U_{ij},\,j=1,\dots,n_{i}$. Instead of using a Frenet path, we look
for a path $t\mapsto M_{i}(t)$ that is close to the data and that approximately satisfies  the self-prediction property (\ref{eq:selfprediction}):
for all $j$, and for all $t$ in $\left[0,1\right]$, we should have
approximately $\phi_{\theta_{i}}\left(t-s_{ij},s_{ij},U_{ij}\right)\approx Q_{i}(t)$.
Hence, if we combine all the predictions by averaging $\frac{1}{n_{i}}\sum_{j=1}^{n_{i}}\phi_{\theta_{i}}\left(t-s_{ij},s_{ij},U_{ij}\right)$,
we should reduce the prediction error due to the propagation of the
noise by the flow $\phi_{\theta_{i}}$ (we suppose here the the flow
$\phi_{\theta_{i}}$ is exactly known). Obviously, the arithmetic
average is not adapted, as it is implicitly based on Euclidean assumptions.
Instead, we use a Karcher mean for defining the best prediction at
time $t$: 
\begin{equation}
M_{i}(t)=\arg\min_{M\in SO(3)}\frac{1}{n_{i}}\sum_{j=1}^{n_{i}}K_{h}(t-s_{ij})\| \log\left(M^{\top}\phi_{\theta}\left(t-s_{ij},s_{ij},U_{ij}\right)\right)\| _{F}^{2}\label{eq:ODESmoothing}
\end{equation}
where the kernel $K_{h}$ accounts for the increasing uncertainty
for distant points. This approach is close to the local polynomial
smoothing, and it is in the same vein as the adaptation of smoothing
to manifolds and other exotic spaces (where usual least squares estimates
are replaced by Fr\'echet or Karcher means, see \citet{Bhattacharya2014,Jakubiak2006,Samir2012}).
The computation of the Karcher mean in $SO(3)$ such as defined in
(\ref{eq:ODESmoothing}) is a classical problem. In the case of the
Frobenius distance, the Karcher mean is simply the polar decomposition
of the arithmetic sample mean of the rotation matrices. In the case
of the geodesic distance, we do not benefit from a closed-form expression;
moreover, the first question to address is the existence and the uniqueness
of the Karcher mean in $SO(3)$, as the Karcher mean is in general not unique. 
Nevertheless, in the case of $SO(3)$, we can guarantee the existence and uniqueness of the
Karcher mean, and we can also provide a simple and efficient gradient
algorithm for computing it. 
\begin{prop}\label{KarcherMeanSOp}
Let $V_{1},\dots,V_{N}$ observations in $SO(3)$,
with positive weights $\omega_{j}$, such that they are all in a ball
of radius $\frac{1}{2}\min\left(\mbox{inj}(SO(3)),\frac{\pi}{\sqrt{k}}\right)$,
where $\mbox{inj}(SO(3))$ is the injectivity radius and $k$ is the
sectional curvature of $SO(3)$ then there exists a unique Karcher
mean defined as 
\begin{equation}
\bar{V}=\arg\min_{M\in SO(3)}\frac{1}{2}\sum_{j=1}^{n}\omega_{j}\| \log\left(M^{\top}V_{j}\right)\| _{F}^{2}\label{eq:MeanSOpGeodesic}
\end{equation}
The gradient of $F(M)=\frac{1}{2}\sum_{j=1}^{n}\omega_{j}\| \log\left(V_{j}^{\top}M\right)\| _{F}^{2}$
is 
\[
\mbox{grad}F(M)=\sum_{j=1}^{n}\omega_{j}\log\left(M^{\top}V_{j}\right)
\]
and the sequence defined for $k\geq1$, $V_{k+1}=V_{k}\exp\left(-\mbox{grad}F(V_{k})\right)$
converges to $\bar{V}$ for any initial guess $V_{0}$. 
\end{prop}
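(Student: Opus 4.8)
The plan is to establish the three assertions of Proposition~\ref{KarcherMeanSOp} in sequence: (i) existence and uniqueness of the Karcher mean under the stated ball condition, (ii) the closed form of the Riemannian gradient, and (iii) convergence of the fixed-point iteration. For (i), I would invoke the general theory of Karcher (Riemannian center of mass) means on manifolds of bounded sectional curvature: on a complete simply connected manifold, or more generally within a geodesic ball whose radius is bounded by $\tfrac12\min\!\big(\mathrm{inj}(SO(3)),\pi/\sqrt{k}\big)$ with $k$ an upper bound on the sectional curvature, the function $F(M)=\tfrac12\sum_j\omega_j\,\mathrm{dist}(M,V_j)^2$ is geodesically strictly convex (its Hessian is bounded below by a positive multiple of the identity via the Hessian comparison theorem applied to $\tfrac12\mathrm{dist}(\cdot,V_j)^2$), hence has a unique minimizer, which moreover lies in the same ball. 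I would cite \citet{Bhattacharya2014} (or Afsari's refinement of Karcher's and Kendall's results) for the precise radius condition, since $SO(3)$ has constant sectional curvature (it is, up to scaling, $\mathbb{RP}^3$) so the comparison bounds are sharp and the statement specializes cleanly.

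For (ii), the gradient computation is the concrete and essentially self-contained part. Writing $f_j(M)=\tfrac12\|\log(M^\top V_j)\|_F^2=\tfrac12\,\mathrm{dist}(M,V_j)^2$, the general Riemannian fact that the gradient of one-half the squared distance to a fixed point $V_j$ is $-\mathrm{Log}_M(V_j)$ gives $\mathrm{grad}\,f_j(M)=-\mathrm{Log}_M(V_j)$. Using the closed-form identification of the logarithmic map on a matrix Lie group recalled earlier in the excerpt, $\mathrm{Log}_M(V_j)=M\log(M^\top V_j)$, one gets $\mathrm{grad}\,f_j(M)=-M\log(M^\top V_j)$; under the standard identification of $T_M SO(3)$ with $\mathfrak{so}(3)$ by left translation (multiplying by $M^\top$), this is represented by $-\log(M^\top V_j)=\log(V_j^\top M)$, and summing with weights yields $\mathrm{grad}\,F(M)=\sum_j\omega_j\log(M^\top V_j)$ in the notation of the statement. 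I would remark that validity of this formula requires all $V_j$ to be within the injectivity radius of $M$, which is guaranteed by the ball hypothesis, so $\log(M^\top V_j)$ is the unique skew-symmetric logarithm of norm less than $\pi$.

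For (iii), the iteration $V_{k+1}=V_k\exp(-\mathrm{grad}\,F(V_k))$ is a Riemannian gradient descent with unit step using the exponential map as retraction. I would argue convergence by combining three ingredients: the iterates stay inside the (geodesically convex) ball — which must be checked, using that a step of length $\|\mathrm{grad}\,F(V_k)\|$ from $V_k$ toward the descent direction cannot leave the ball because $F$ is strictly convex there and the minimizer is interior; strong geodesic convexity of $F$ on the ball, giving a quadratic lower bound $F(M)-F(\bar V)\ge \tfrac{\mu}{2}\mathrm{dist}(M,\bar V)^2$; and a smoothness (Lipschitz-gradient) estimate on the ball giving the standard sufficient-decrease inequality $F(V_{k+1})\le F(V_k)-c\|\mathrm{grad}\,F(V_k)\|^2$ for the unit step, valid because the curvature-dependent constants controlling the step-size condition are finite on the compact ball and the step length $1$ turns out to be admissible here (this is the point where one uses that the exponential map of $SO(3)$ and its derivative are well controlled, e.g. via the Rodrigues formula). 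Together these give linear convergence of $F(V_k)$ to $F(\bar V)$ and hence of $V_k$ to $\bar V$, for any starting point $V_0$ in the ball.

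The main obstacle I anticipate is (iii), specifically justifying that the \emph{fixed} unit step size yields monotone decrease and does not overshoot the ball: on a positively curved manifold the exponential-map retraction distorts lengths, so the naive Euclidean descent lemma does not transfer verbatim, and one must invoke the Hessian/curvature comparison bounds (again leaning on constant curvature of $SO(3)$) to certify that step length $1$ falls within the admissible range. The existence/uniqueness part (i) is essentially a citation once the radius condition is matched to the literature, and the gradient formula (ii) is a direct computation from the matrix exponential/logarithm identities already introduced; so the bulk of the genuine work, and the part I would write out carefully, is the step-size and trapping argument in (iii).
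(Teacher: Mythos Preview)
Your proposal is substantially more detailed than the paper's own treatment: the paper does not prove this proposition at all but simply writes ``See \citet{Le2004,Rentmeesters2011}.'' So there is nothing to compare at the level of argument; you are supplying what the paper outsources to the literature. Your sketch for (i) and (ii) matches the standard route in those references (Karcher/Kendall/Afsari convexity under the radius bound, and the identity $\mathrm{grad}\,\tfrac12\mathrm{dist}(\cdot,V_j)^2=-\mathrm{Log}_M(V_j)$ specialized via $\mathrm{Log}_M(N)=\log(M^\top N)$), so on those parts you are essentially reproducing the cited proofs.

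Two cautions. First, be careful with the sign conventions when you ``sum with weights'' in (ii): you correctly derive that the left-trivialized gradient of $\tfrac12\mathrm{dist}(\cdot,V_j)^2$ is $-\log(M^\top V_j)=\log(V_j^\top M)$, which is the \emph{negative} of what the proposition displays; the update $V_{k+1}=V_k\exp(-\mathrm{grad}F(V_k))$ is nonetheless the correct Karcher iteration once the two sign slips cancel, but your write-up should flag rather than silently absorb this. Second, for (iii) your worry is exactly right and is the genuine content of \citet{Le2004}: proving that the \emph{unit} step never overshoots and that the iterates remain in the ball is where the curvature comparison is used, and the conclusion ``for any initial guess $V_0$'' in the proposition must be read as any $V_0$ in the stated ball, not globally on $SO(3)$. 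If you want a self-contained argument rather than a citation, the cleanest route is Le's contraction estimate (the map $V\mapsto V\exp(\sum_j\omega_j\log(V^\top V_j))$ is a strict contraction on the ball in the geodesic metric), which simultaneously gives invariance of the ball and linear convergence without a separate sufficient-decrease lemma.
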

\begin{proof}
See \citet{Le2004,Rentmeesters2011}.
\end{proof}
Thanks to proposition \ref{KarcherMeanSOp}, we can define the simple
kernel smoother for $t$ in $\left[0,1\right]$ 
\begin{equation}
\widetilde{M}_{i}(t)=\arg\min_{M\in SO(p)}\sum_{j=1}^{n_{i}}K_{h}(t-s_{ij})\| \log\left(M^{\top}U_{ij}\right)\| _{F}^{2}\label{eq:SmootherSOp}
\end{equation}
that can estimate each path $Q_{i}$. More generally we can use
a model-based smoother $\hat{M}_{i}(t;\theta)$ by solving (\ref{eq:ODESmoothing})
for every $t$ and a given parameter candidate $\theta$. 

The smoothing (\ref{eq:ODESmoothing}) however relies on a candidate
parameter $\theta$ that might be inappropriate. For this reason,
we propose to simultaneously estimate the approximate Frenet paths
$\boldsymbol{M}=\left\{ M_{1},\dots,M_{N}\right\} $ and the mean
parameter $\theta$ by minimizing 
{\footnotesize\begin{equation}
\hspace*{-0.3cm}\breve{\mathcal{J}}_{h,\lambda}\left(\theta,\boldsymbol{M};\boldsymbol{U}\right)=\sum_{i=1}^{N}\frac{1}{n_{i}}\sum_{j=1}^{n_{i}}\int_{0}^{1}\petit K_{h}(t-s_{ij})\| \log\left(M_{i}(t)^{\top}\phi_{\theta}\left(t-s_{ij},s_{ij},U_{ij}\right)\right)\| _{F}^{2}dt+\lambda\petit \int_{0}^{1}\petit \| \theta^{''}(t)\| ^{2}\petit dt\label{eq:EmpiricalGlobalCrit}
\end{equation}}
The paths $M_{i}$ are smooth functions from $\left[0,1\right]$ to
$SO(p)$, and that function space is denoted by $SSO(p)$. We have seen
in the previous section that we can avoid the use of the exact flow
defined with $\Omega(t-s,s;\theta)$, by replacing it with a first order
approximation to the Magnus expansion: 
{\footnotesize\begin{eqnarray}
&&\hspace*{-1cm}\mathcal{J}_{h,\lambda}\left(\theta,\boldsymbol{M};\boldsymbol{U}\right)
=\sum_{i=1}^{N}\frac{1}{n_{i}}\sum_{j=1}^{n_{i}}\int_{0}^{1}\petit K_{h}(t-s_{ij})\| \log\left(M_{i}(t)^{\top}U_{ij}\exp\left(\left(t-s_{ij}\right)A_{\theta}\left(\frac{t+s_{ij}}{2}\right)\right)\right)\| _{F}^{2}\petit dt \nonumber\\
& &+\lambda\petit \int_{0}^{1}\petit \| \theta^{''}(t)\| ^{2}dt\label{eq:EmpiricalGlobalCrit-2}
\end{eqnarray}}
A \emph{natural} estimator can be defined as the solution to the optimization problem as 
\[\arg\min_{\theta,\boldsymbol{M}}\mathcal{J}_{h,\lambda}\left(\theta,\boldsymbol{M};\boldsymbol{U}\right) \,.
\]
In order to gain some insight into this criterion, we give an approximation of the criterion $\mathcal{J}_{h,\lambda}(\theta,\boldsymbol{M};\mathbf{Q})$ for small $h$. 
\begin{prop}[Approximation of the criteria]%
\label{ApproximationDeterministicCrit} 
Let $\mathbf{Q}=Q_{1},\ldots,Q_{N}$ be Frenet paths associated with
$\theta_{1},\dots,\theta_{N}$. For any functions $M_{1},\dots,M_{N}$
in $SSO(3)$, $\theta$ in $\mathcal{H}$, and small $h$, we have
the approximation $\mathcal{J}_{h,\lambda}(\theta,\boldsymbol{M};\mathbf{Q})=\boldsymbol{J}_{h,\lambda}(\theta,\boldsymbol{M};\mathbf{Q})+O(h^{3}+\frac{1}{n^{2}})$
where $\boldsymbol{J}_{h,\lambda}(\theta,\boldsymbol{M};\mathbf{Q})$ is given by
{\footnotesize\begin{eqnarray*}
&& \sum_{i=1}^{N}\iint_{\left[0,1\right]^{2}}K_{h}(t-s)\| r_{i}(t)\| _{F}^{2}dtds+\mu_2(K)h^{2}\int_{0}^{1}\| \left(A_{\theta}-A_{\theta_{i}}\right)\left(t\right)+\left[r_{i}(t),A_{\theta}\left(t\right)\right]\| _{F}^{2}dt\\
& & +h^{2}\mu_2({K})\frac{\left\langle r_{i}(1),\left(A_{\theta}-A_{\theta_{i}}\right)(1)+\left[r_{i}(1),A_{\theta}\left(1\right)\right]\right\rangle +\left\langle r_{i}(0),\left(A_{\theta}-A_{\theta_{i}}\right)(0)+\left[r_{i}(0),A_{\theta}\left(0\right)\right]\right\rangle }{2} \\
& &+\lambda\int_{0}^{1}\| \theta^{''}\| ^{2}dt
\end{eqnarray*}}
and $r_i(t)\triangleq \log\left(M_i(t)^\top Q_i(t)\right)$ is the discrepancy between $Q_i$ and $M_i$. 
\end{prop}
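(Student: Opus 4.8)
The goal is to Taylor-expand the integrand of $\mathcal{J}_{h,\lambda}(\theta,\boldsymbol{M};\mathbf{Q})$ in the variable $t-s_{ij}$, which is small because the kernel $K_h$ restricts $|t-s_{ij}|\le h$. The plan is to work term by term inside the double sum. First I would isolate the core quantity $\log\!\left(M_i(t)^\top U_{ij}\exp\big((t-s_{ij})A_\theta(\tfrac{t+s_{ij}}{2})\big)\right)$ in the deterministic case $U_{ij}=Q_i(s_{ij})$, and rewrite $M_i(t)^\top Q_i(s_{ij})$ using the self-prediction/flow identity together with $r_i(t)=\log(M_i(t)^\top Q_i(t))$: since $Q_i(s_{ij}) = Q_i(t)\exp(-\Omega(t-s_{ij},s_{ij};\theta_i))$ and, by Proposition \ref{ApproxCrt} and the order-2 Magnus truncation, $\Omega(t-s_{ij},s_{ij};\theta_i) = (t-s_{ij})A_{\theta_i}(\tfrac{t+s_{ij}}{2}) + O((t-s_{ij})^3)$, one gets $M_i(t)^\top Q_i(s_{ij}) = \exp(r_i(t))\exp\big(-(t-s_{ij})A_{\theta_i}(\tfrac{t+s_{ij}}{2})\big)$ up to an $O(|t-s_{ij}|^3)$ correction inside the exponent.

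Next I would combine the three exponentials $\exp(r_i(t))\exp(-(t-s_{ij})A_{\theta_i})\exp((t-s_{ij})A_\theta)$ (all arguments evaluated at the midpoint $\tfrac{t+s_{ij}}{2}$) using the BCH formula (\ref{eq:BCHformula}). Writing $\delta = t-s_{ij}$, the middle two factors give $\exp\big(\delta(A_\theta-A_{\theta_i}) + O(\delta^2)\big)$, and a further BCH step with $\exp(r_i)$ yields exponent $r_i(t) + \delta\big((A_\theta-A_{\theta_i})(\tfrac{t+s_{ij}}{2})\big) + \tfrac{\delta}{1}[r_i(t), A_\theta(\tfrac{t+s_{ij}}{2})] + O(\delta^2)$ — here the commutator term $[r_i,A_\theta]$ arises exactly as the first BCH correction between $r_i$ and $\delta A_\theta$ (the $-\delta A_{\theta_i}$ piece contributes a commutator with $r_i$ that is absorbed, since modulo $O(\delta^2)$ the relevant cross term is with the difference; I would need to check this bookkeeping carefully). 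Taking the squared Frobenius norm gives $\|r_i(t)\|_F^2 + 2\delta\,\big\langle r_i(t),\,(A_\theta-A_{\theta_i})(\tfrac{t+s_{ij}}{2}) + [r_i(t),A_\theta(\tfrac{t+s_{ij}}{2})]\big\rangle + \delta^2\|(A_\theta-A_{\theta_i}) + [r_i,A_\theta]\|_F^2 + O(\delta^3)$.

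Then I would integrate against $K_h(t-s_{ij})$ in $t$ and apply the technical quadrature/kernel lemmas alluded to in the appendix: the odd-order term in $\delta$ integrates to something $O(h^3)$ by a symmetry/smoothness argument (or contributes only at the endpoints $t=0,1$ where the kernel window is one-sided — this is the source of the boundary terms $\langle r_i(1),\cdot\rangle$ and $\langle r_i(0),\cdot\rangle$ with the factor $\tfrac12$), while $\int K_h(t-s)\delta^2\,dt = \mu_2(K)h^2 + O(h^3)$ produces the main $\mu_2(K)h^2$ factor. Replacing the Riemann sum $\tfrac1{n_i}\sum_j$ over $s_{ij}$ by $\int_0^1 ds$ costs $O(1/n^2)$ (midpoint-type quadrature error, using the $C^2$ smoothness of the integrand in $s$), and midpoint evaluation $\tfrac{t+s}{2}$ may be replaced by $t$ up to $O(h)$ absorbed into the $O(h^3)$ after multiplication by $h^2$. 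Collecting everything and adding back the unchanged penalty $\lambda\int\|\theta''\|^2$ gives $\boldsymbol{J}_{h,\lambda}$.

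The main obstacle is the careful tracking of the first-order-in-$\delta$ cross term through the iterated BCH expansions so that the commutator $[r_i(t),A_\theta(t)]$ — and not, say, $[r_i,A_{\theta_i}]$ or some symmetrized version — appears in precisely the combination $(A_\theta-A_{\theta_i})+[r_i,A_\theta]$ stated in the proposition, and then showing that this linear term genuinely vanishes to $O(h^3)$ in the interior while surviving as the stated half-weighted boundary contribution. The handling of the one-sided kernel near $t\in\{0,1\}$ (the $s_{ij}=0$ and $s_{ij}=1$ endpoints are forced by $0=s_{i1}<\dots<s_{in_i}=1$) is the delicate point, and would rely on the boundary-kernel moment computations deferred to the supplementary material.
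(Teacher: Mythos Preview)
Your proposal is essentially the paper's own argument: express $M_i(t)^\top Q_i(s)$ via the flow identity plus the order-2 Magnus truncation, apply BCH to obtain $\Lambda_i(t,s)=r_i(t)+(t-s)\{(A_\theta-A_{\theta_i})+[r_i,A_\theta]\}+O((t-s)^2)$, square, replace the Riemann sum by a double integral at cost $O(1/n^2)$, and then analyze the resulting $\iint K_h(t-s)(t-s)^k(\cdot)\,ds\,dt$ integrals by the change of variables $(t,s)\mapsto(y,v)=(\tfrac{t-s}{h},\tfrac{t+s}{2})$, where the $k=1$ term survives only as the stated boundary contribution at $v\in\{0,1\}$ and the $k=2$ term yields $\mu_2(K)h^2$. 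The one point where you hesitate---why $[r_i,A_\theta]$ and not $\tfrac12[r_i,A_\theta-A_{\theta_i}]$---is handled in the paper by combining $\exp(r_i)\exp(-\delta A_{\theta_i})$ \emph{first} via the Zassenhaus formula into a single exponential $\exp(r_i-\delta A_{\theta_i})$ (treating the commutator $[r_i,-\delta A_{\theta_i}]$ as higher order under the standing smallness assumption on $r_i$), and only then applying BCH with $\exp(\delta A_\theta)$; done in that order the sole first-order commutator is $[r_i-\delta A_{\theta_i},\,\delta A_\theta]=\delta[r_i,A_\theta]+O(\delta^2)$, which resolves your bookkeeping concern.
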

\noprint{\begin{proof}
We can propagate this approximation in $\varLambda_{i}(t,s)=\log\left(M_{i}(t)^{\top}Q_{i}(s)\exp\left(\left(t-s\right)A_{\theta}\left(\frac{t+s}{2}\right)\right)\right)$
\begin{eqnarray*}
\varLambda_{i}(t,s) & = & \log\left(\exp\left(r_{i}(t)-(t-s)A_{\theta_{i}}\left(\frac{t+s}{2}\right)\right)\times O(h^{2})\exp\left(\left(t-s\right)A_{\theta}\left(\frac{t+s}{2}\right)\right)\right)\\
 & = & r_{i}(t)-(t-s)A_{\theta_{i}}\left(\frac{t+s}{2}\right)+\left(t-s\right)A_{\theta}\left(\frac{t+s}{2}\right)\\
 &  & +\left[r_{i}(t)-(t-s)A_{\theta_{i}}\left(\frac{t+s}{2}\right),\left(t-s\right)A_{\theta}\left(\frac{t+s}{2}\right)\right]+O(h^{2})\\
 & = & r_{i}(t)+(t-s)\left\{ A_{\theta}\left(\frac{t+s}{2}\right)-A_{\theta_{i}}\left(\frac{t+s}{2}\right)+\left[r_{i}(t),A_{\theta}\left(\frac{t+s}{2}\right)\right]\right\} +O\left((t-s)^{2}\right)
\end{eqnarray*}
Based on that decomposition, we derive an approximate expression: 
\begin{eqnarray*}
\| \varLambda_{i}(t,s)\| _{F}^{2} & = & \| r_{i}(t)+(t-s)\left\{ A_{\theta}\left(\frac{t+s}{2}\right)-A_{\theta_{i}}\left(\frac{t+s}{2}\right)+\left[r_{i}(t),A_{\theta}\left(\frac{t+s}{2}\right)\right]\right\} +O\left((t-s)^{2}\right)\| _{F}^{2}\\
 & = & \| r_{i}(t)\| _{F}^{2}+(t-s)^{2}\| \left\{ A_{\theta}\left(\frac{t+s}{2}\right)-A_{\theta_{i}}\left(\frac{t+s}{2}\right)+\left[r_{i}(t),A_{\theta}\left(\frac{t+s}{2}\right)\right]\right\} \| _{F}^{2}\\
 &  & +2(t-s)\left\langle r_{i}(t),A_{\theta}\left(\frac{t+s}{2}\right)-A_{\theta_{i}}\left(\frac{t+s}{2}\right)+\left[r_{i}(t),A_{\theta}\left(\frac{t+s}{2}\right)\right]\right\rangle \\
 &  & +O\left((t-s)^{3}\right)
\end{eqnarray*}
We can approximate the criterion at order 3 by using this local expansion
valid for small $h$. First of all, we replace the discrete sum by
a double integral: we use the classical rectangle approximations,
that gives an order two approximation, as $n_{i}=O(n)$, and assuming
that for all $i$, $\frac{d^{2}\Lambda_{i}}{ds^{2}}(t,s)$ is bounded
on for $\theta\in\mathcal{H}$, and all $Q_{i}$, we have 
\begin{eqnarray*}
\mathcal{J}_{h,\lambda}(\theta,\boldsymbol{M};\mathbf{Q}) & = & \sum_{i=1}^{N}\iint_{\left[0,1\right]^{2}}K_{h}(t-s)\| \varLambda_{i}(t,s)\| _{F}^{2}dtds+O(\frac{1}{n^{2}})\\
 & = & \sum_{i=1}^{N}\iint_{\left[0,1\right]^{2}}K_{h}(t-s)\left\{ \| r_{i}(t)\| _{F}^{2}+(t-s)^{2}\| A_{\theta}\left(\frac{t+s}{2}\right)-A_{\theta_{i}}\left(\frac{t+s}{2}\right)+\left[r_{i}(t),A_{\theta}\left(\frac{t+s}{2}\right)\right]\| _{F}^{2}\right\} \:dtds\\
 &  & +2\sum_{i=1}^{N}\iint_{\left[0,1\right]^{2}}K_{h}(t-s)(t-s)\left\langle r_{i}(t),A_{\theta}\left(\frac{t+s}{2}\right)-A_{\theta_{i}}\left(\frac{t+s}{2}\right)+\left[r_{i}(t),A_{\theta}\left(\frac{t+s}{2}\right)\right]\right\rangle dtds\\
 &  & +O\left(h^{3}\right)+O(\frac{1}{n^{2}})
\end{eqnarray*}
We need to provide an additional analysis of the first order term
\[
\mathcal{A}=\iint_{\left[0,1\right]^{2}}K_{h}(t-s)(t-s)\left\langle r_{i}(t),A_{\theta}\left(\frac{t+s}{2}\right)-A_{\theta_{i}}\left(\frac{t+s}{2}\right)+\left[r_{i}(t),A_{\theta}\left(\frac{t+s}{2}\right)\right]\right\rangle \:dtds.
\]
This is equal to 
\begin{eqnarray*}
\mathcal{A} & = & h\int_{-1/h}^{^{1/h}}K(y)y\int_{\left|yh\right|/2}^{1-\left|yh\right|/2}\left\langle r_{i}(v+\frac{hy}{2}),A_{\theta}-A_{\theta_{i}}(v)+\left[r_{i}(v+\frac{hy}{2}),A_{\theta}\left(v\right)\right]\right\rangle \:dvdy\\
 & = & h\int_{-1/h}^{^{1/h}}K(y)y\int_{\left|yh\right|/2}^{1-\left|yh\right|/2}\left\langle r_{i}(v)+\frac{hy}{2}r_{i}^{'}(v)+\frac{(hy)^{2}}{2}r_{i}^{''}(v),A_{\theta}-A_{\theta_{i}}(v)+\left[r_{i}(v)+\frac{hy}{2}r_{i}^{'}(v)+\frac{(hy)^{2}}{2}r_{i}^{''}(v),A_{\theta}\left(v\right)\right]\right\rangle \:dvdy
\end{eqnarray*}
From this Taylor expansion, we obtain 
\[
\begin{array}{c}
\int_{\left|yh\right|/2}^{1-\left|yh\right|/2}\left\langle r_{i}(v)+\frac{hy}{2}r_{i}^{'}(v)+\frac{(hy)^{2}}{2}r_{i}^{''}(v),A_{\theta}-A_{\theta_{i}}(v)+\left[r_{i}(v)+\frac{hy}{2}r_{i}^{'}(v)+\frac{(hy)^{2}}{2}r_{i}^{''}(v),A_{\theta}\left(v\right)\right]\right\rangle \:dv\\
=\int_{\left|yh\right|/2}^{1-\left|yh\right|/2}\left\langle r_{i}(v),A_{\theta}-A_{\theta_{i}}(v)+\left[r_{i}(v),A_{\theta}\left(v\right)\right]\right\rangle \:dv\\
+\frac{hy}{2}\int_{\left|yh\right|/2}^{1-\left|yh\right|/2}\left\langle r_{i}^{'}(v),A_{\theta}-A_{\theta_{i}}(v)+\left[r_{i}(v),A_{\theta}\left(v\right)\right]\right\rangle +\left\langle r_{i}(v),\left[r_{i}^{'}(v),A_{\theta}\left(v\right)\right]\right\rangle \\
+y^{2}\int_{\left|yh\right|/2}^{1-\left|yh\right|/2}O(h^{2})
\end{array}
\]
And we can obtain a 3rd order expansion for $\mathcal{A}$: 
\begin{eqnarray*}
\mathcal{A} & = & h\int_{-1/h}^{^{1/h}}K(y)y\int_{\left|yh\right|/2}^{1-\left|yh\right|/2}\left\langle r_{i}(v),A_{\theta}-A_{\theta_{i}}(v)+\left[r_{i}(v),A_{\theta}\left(v\right)\right]\right\rangle \:dvdy\\
 &  & +h^{2}\int_{-1/h}^{^{1/h}}K(y)y\frac{y^{2}}{2}\int_{\left|yh\right|/2}^{1-\left|yh\right|/2}\left\langle r_{i}^{'}(v),A_{\theta}-A_{\theta_{i}}(v)+\left[r_{i}(v),A_{\theta}\left(v\right)\right]\right\rangle +\left\langle r_{i}(v),\left[r_{i}^{'}(v),A_{\theta}\left(v\right)\right]\right\rangle \,dvdy\\
 &  & +O(h^{3})\\
 & = & h^{2}\mu_2(K)\frac{\left\langle r_{i}(1),A_{\theta}-A_{\theta_{i}}(1)+\left[r_{i}(1),A_{\theta}\left(1\right)\right]\right\rangle +\left\langle r_{i}(0),A_{\theta}-A_{\theta_{i}}(0)+\left[r_{i}(0),A_{\theta}\left(0\right)\right]\right\rangle }{2}+O(h^{3})
\end{eqnarray*}
We need also a tractable expression for the 2nd term 
\[
\mathcal{B}=\iint_{\left[0,1\right]^{2}}K_{h}(t-s)(t-s)^{2}\| A_{\theta}\left(\frac{t+s}{2}\right)-A_{\theta_{i}}\left(\frac{t+s}{2}\right)+\left[r_{i}(t),A_{\theta}\left(\frac{t+s}{2}\right)\right]\| _{F}^{2}\:dtds.
\]
The same Taylor expansion as for $r_{i}$ gives 
\begin{eqnarray*}
\mathcal{B} & = & h^{2}\int_{-1/h}^{^{1/h}}K(y)y^{2}\int_{\left|yh\right|/2}^{1-\left|yh\right|/2}\| A_{\theta}-A_{\theta_{i}}\left(v\right)+\left[r_{i}(v)+\frac{hy}{2}r_{i}^{'}(v)+\frac{(hy)^{2}}{2}r_{i}^{''}(v),A_{\theta}\left(v\right)\right]\| _{F}^{2}\:dvdy\\
 & = & h^{2}\int_{-1/h}^{^{1/h}}K(y)y^{2}\int_{\left|yh\right|/2}^{1-\left|yh\right|/2}\| A_{\theta}-A_{\theta_{i}}\left(v\right)+\left[r_{i}(v),A_{\theta}\left(v\right)\right]\| _{F}^{2}\:dvdy+O(h^{3})\\
 & = & \mu_2(K)h^{2}\int_{0}^{1}\| A_{\theta}-A_{\theta_{i}}\left(v\right)+\left[r_{i}(v),A_{\theta}\left(v\right)\right]\| _{F}^{2}\:dv+O(h^{3})
\end{eqnarray*}
Hence, we obtain the following third order approximation for the criterion
$\mathcal{J}_{h,\lambda}(\theta,\boldsymbol{M};\mathbf{Q})$: 
\begin{eqnarray*}
\boldsymbol{J}_{h,\lambda}(\theta,\boldsymbol{M};\mathbf{Q}) & = & \sum_{i=1}^{N}\iint_{\left[0,1\right]^{2}}K_{h}(t-s)\| r_{i}(t)\| _{F}^{2}dtds+\mu_2(K)h^{2}\int_{0}^{1}\| \left(A_{\theta}-A_{\theta_{i}}\right)\left(t\right)+\left[r_{i}(t),A_{\theta}\left(t\right)\right]\| _{F}^{2}dt\\
 &  & +h^{2}\mu_2(K)\frac{\left\langle r_{i}(1),\left(A_{\theta}-A_{\theta_{i}}\right)(1)+\left[r_{i}(1),A_{\theta}\left(1\right)\right]\right\rangle +\left\langle r_{i}(0),\left(A_{\theta}-A_{\theta_{i}}\right)(0)+\left[r_{i}(0),A_{\theta}\left(0\right)\right]\right\rangle }{2}
\end{eqnarray*}
\end{proof}}
If the true paths $Q_{i}$ are replaced by noisy observations $U_{ij}$,
then the next proposition shows that the criterion $\mathcal{J}_{h,\lambda}(\theta,\boldsymbol{M};\mathbf{U})$
converges in probability as $n$ tends to infinity towards a perturbed
function $\boldsymbol{J}_{h,\lambda}(\theta,\boldsymbol{M};\mathbf{Q})$
that still can discriminate the true path and the true parameter (when
$h$ is small).
Convergence of the empirical criterion $\mathcal{J}_{h,\lambda}(\theta,\boldsymbol{M};\mathbf{U})$ is stated below.
\begin{prop}[Convergence of the empirical criterion $\mathcal{J}_{h,\lambda}(\theta,\boldsymbol{M};\mathbf{U})$] \label{ApproximationStoCrit}
Let $\mathbf{U}$ be random observations coming from \ref{eq:StatisticalModelNoise},
and $\theta_{1},\dots,\theta_{N}$ candidate parameters in $\mathcal{H}$.
We assume that $E\| W_{ij}\| _{F}^{2}=\rho_{2,W}<\infty$
and $E\| W_{ij}\| _{F}^{4}=\rho_{4,W}<\infty$. For
any functions $M_{1},\dots,M_{N}$ in $SSO(3)$, and small $h$, we
have 
\begin{eqnarray*}
\mathcal{J}_{h,\lambda}(\theta,\boldsymbol{M};\mathbf{U}) & = & \boldsymbol{J}_{h,\lambda}(\theta,\boldsymbol{M};\mathbf{Q})+\rho_{2,W}+\mu_2(K)h^{2}\int_{0}^{1}E\left(Z_2(s)\right)ds+O_{p}(h^{3})
\end{eqnarray*}
where $Z_2(s)=\| \left[W,A_{\theta}(s)\right]\| _{F}^{2}$.
\end{prop}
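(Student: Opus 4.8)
The plan is to substitute the noise model $U_{ij}=Q_i(s_{ij})\exp(W_{ij})$ into the integrand of $\mathcal{J}_{h,\lambda}(\theta,\boldsymbol{M};\mathbf{U})$ and to redo the matrix-logarithm expansion from the proof of Proposition~\ref{ApproximationDeterministicCrit}, now carrying the extra factor $\exp(W_{ij})$ through the Baker--Campbell--Hausdorff (BCH) calculus~(\ref{eq:BCHformula}). Writing $\delta=t-s_{ij}$ and $m=(t+s_{ij})/2$, the argument of the logarithm is $\exp\!\big(p_{ij}(t)\big)\exp(W_{ij})\exp\!\big(\delta A_\theta(m)\big)$ with $p_{ij}(t)=\log\!\big(M_i(t)^\top Q_i(s_{ij})\big)=r_i(t)+O(\delta)$. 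Combining the three exponentials by BCH and expanding the flow factors to first order exactly as in Proposition~\ref{ApproximationDeterministicCrit}, the logarithm decomposes (to leading order) as
\[
\Lambda_{ij}(t)=\Lambda_i(t,s_{ij})+W_{ij}+\tfrac{\delta}{2}\big[W_{ij},A_\theta(m)\big]+\big(\text{further terms linear in }W_{ij}\big)+O\!\big(\delta^2+\|W_{ij}\|_F^2\big),
\]
where $\Lambda_i(t,s)$ is the noise-free expansion analysed in Proposition~\ref{ApproximationDeterministicCrit} (set $W_{ij}=0$ to recover it). The essential point is that the noise enters the logarithm linearly and, to leading order, through an isometry of $\mathfrak{so}(3)$.

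Next I would expand $\|\Lambda_{ij}(t)\|_F^2$, integrate against $K_h(t-s_{ij})$, average over $j$, and let $n\to\infty$. Every contribution that is linear in $W_{ij}$ --- the cross term $2\langle\Lambda_i(t,s_{ij}),W_{ij}\rangle$, the term $(t-s_{ij})\langle r_i(t),[W_{ij},A_\theta(m)]\rangle$, and the like --- has mean zero since $E[W_{ij}]=0$; because the $W_{ij}$ are i.i.d.\ with $E\|W_{ij}\|_F^4=\rho_{4,W}<\infty$, the corresponding empirical averages have variance $O\!\big(1/(n_i h^2)\big)$ for fixed $h$ and hence vanish in probability as $n\to\infty$ (Chebyshev). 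For the quadratic-in-$W_{ij}$ terms I would use the elementary identity $\langle W,[X,W]\rangle=0$ valid for skew-symmetric $W$: this removes the order-$\delta$ piece $\delta\langle W_{ij},[W_{ij},A_\theta(m)]\rangle$ and shows that the order-$\delta^0$ piece is exactly $\|W_{ij}\|_F^2$, leaving --- after the remaining BCH coefficients are collected --- a single surviving quadratic term of order $\delta^2$, proportional to $\delta^2\|[W_{ij},A_\theta(m)]\|_F^2$. The penalty $\lambda\int_0^1\|\theta''\|^2$ is deterministic and passes through unchanged.

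It then remains to pass from the sums over $j$ to double integrals over $[0,1]^2$ (rectangle rule, error $O(1/n^2)=o(1)$ as $n\to\infty$) and to apply the small-$h$ kernel lemmas already used for Proposition~\ref{ApproximationDeterministicCrit}. The deterministic part of the expansion reproduces $\boldsymbol{J}_{h,\lambda}(\theta,\boldsymbol{M};\mathbf{Q})+O(h^3)$ by that proposition. The order-$\delta^0$ noise term, $\|W_{ij}\|_F^2$ integrated against $K_h$ and averaged over $(i,j)$, converges in probability (law of large numbers) to $\rho_{2,W}$ up to an $O(h)$ boundary correction; and the change of variables $u=t-s$, $v=(t+s)/2$ turns the order-$\delta^2$ noise term $\iint K_h(t-s)\,(t-s)^2\,E\|[W,A_\theta(m)]\|_F^2\,dt\,ds$ into $\mu_2(K)h^2\int_0^1 E\|[W,A_\theta(v)]\|_F^2\,dv+O(h^3)=\mu_2(K)h^2\int_0^1 E(Z_2(v))\,dv+O(h^3)$. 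Absorbing the third-order BCH remainders, the higher noise moments and the boundary terms into $O_p(h^3)$ gives the stated expansion.

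The step I expect to be the main obstacle is the bookkeeping needed to certify that everything not displayed is genuinely $O_p(h^3)$, uniformly in the design $\{s_{ij}\}$ and over bounded $\theta\in\mathcal{H}$: one must control the third- and fourth-order BCH and Magnus remainders in $L^2(dt\,ds)$ against the spiky kernel $K_h$, show that the linear-in-$W$ empirical averages are $o_p(1)$ (and not merely $O_p(n^{-1/2})$) under the stated fourth-moment hypothesis, and --- most delicately --- handle the dexpinv operator $d\exp_{-r_i(t)}^{-1}$ that a priori multiplies $W_{ij}$ inside the logarithm whenever $p_{ij}\neq 0$. That operator is the identity only when $r_i\equiv 0$, so the identification of the constant noise term with $\rho_{2,W}$, and of the $O(h^2)$ coefficient with $\mu_2(K)$ (with the commutator taken against $A_\theta$ rather than against $A_{\theta_i}$), rests on $M_i$ being close to $Q_i$ --- the regime in which this proposition is actually used; making it rigorous for arbitrary $M_i\in SSO(3)$ would require either an explicit $\|r_i\|_\infty$-dependent correction to the constant term or a smallness hypothesis on the discrepancies $r_i$.
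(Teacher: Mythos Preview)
Your proposal follows essentially the same route as the paper's proof: expand $M_i(t)^\top U_{ij}\exp\!\big((t-s_{ij})A_\theta\big)$ via BCH/Zassenhaus to isolate the noise-free piece $\Lambda_i(t,s_{ij})$ plus a perturbation in $W_{ij}$, then split $\|\tilde\Lambda\|_F^2$ into (i) the deterministic part handled by Proposition~\ref{ApproximationDeterministicCrit}, (ii) linear-in-$W$ cross terms with mean zero controlled by the LLN, (iii) the order-$\delta^0$ quadratic term $\|W_{ij}\|_F^2$ giving $\rho_{2,W}$, (iv) the order-$\delta$ quadratic term $\langle W_{ij},[W_{ij},A_\theta]\rangle$ killed by the skew-symmetry identity, and (v) the order-$\delta^2$ term $\|[W_{ij},A_\theta]\|_F^2$ giving $\mu_2(K)h^2\int E(Z_2)$ after the usual change of variables. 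The paper organises these as four random variables $Z_1,\dots,Z_4$ but the content is identical; your caveat about the $d\exp^{-1}_{r_i}$ correction when $r_i$ is not small is well taken and is in fact a point the paper's own argument leaves implicit (it simply invokes Zassenhaus ``as these matrices are in the neighbourhood of the identity'').
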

\noprint{\begin{proof}
For small $h$, we want to study the convergence of $\mathcal{J}_{h,\lambda}(\theta,\boldsymbol{M};\mathbf{U})$.
We define $\tilde{R}_{ij}(t,s)=\log\left(M_{i}(t)^{\top}Q_{i}(s)\exp\left(W_{ij}\right)\right)$
and $r_{i}(t)=\log\left(M_{i}(t)^{\top}Q_{i}(t)\right)$. We can apply
the Zassenhaus formula, as these matrices are in the neighbourhood
of the identity: 
\begin{eqnarray*}
M_{i}(t)^{\top}U_{ij} & = & M_{i}(t)^{\top}Q_{i}(t)\exp\left(-\Omega(t-s,s,\theta_{i})\right)\exp\left(W_{ij}\right)\\
 & = & \exp(r_{i}(t))\exp\left(-(t-s)A_{\theta_{i}}\left(\frac{t+s}{2}\right)\right)\times O(h^{2})\\
 & = & \exp\left(r_{i}(t)-(t-s)A_{\theta_{i}}\left(\frac{t+s}{2}\right)+W_{ij}\right)\times O(h^{2})
\end{eqnarray*}
We propagate this approximation in $\tilde{\varLambda}_{i}(t,s)=\log\left(M_{i}(t)^{\top}U_{ij}\exp\left(\left(t-s\right)A_{\theta}\left(\frac{t+s}{2}\right)\right)\right)$.
If $s=s_{ij}$, 
\begin{eqnarray*}
\tilde{\varLambda}_{i}(t,s) & = & \log\left(\exp\left(r_{i}(t)-(t-s)A_{\theta_{i}}\left(\frac{t+s}{2}\right)+W_{ij}+O_{P}(h^{2})\right)\exp\left(\left(t-s\right)A_{\theta}\left(\frac{t+s}{2}\right)\right)\right)\\
 & = & r_{i}(t)-(t-s)A_{\theta_{i}}\left(\frac{t+s}{2}\right)+W_{ij}+\left(t-s\right)A_{\theta}\left(\frac{t+s}{2}\right)\\
 &  & +\left[r_{i}(t)-(t-s)A_{\theta_{i}}\left(\frac{t+s}{2}\right)+W_{ij},\left(t-s\right)A_{\theta}\left(\frac{t+s}{2}\right)\right]+O_{P}(h^{2})\\
 & = & r_{i}(t)+W_{ij}+(t-s)\left\{ A_{\theta}\left(\frac{t+s}{2}\right)-A_{\theta_{i}}\left(\frac{t+s}{2}\right)+\left[r_{i}(t)+W_{ij},A_{\theta}\left(\frac{t+s}{2}\right)\right]\right\} +O_{P}\left(h^{2}\right)
\end{eqnarray*}
Based on that decomposition, we can make use of the previous computation
obtained in (\ref{ApproximationDeterministicCrit}): 
\begin{eqnarray*}
\| \tilde{\varLambda}_{i}(t,s)\| _{F}^{2} & = & \| r_{i}(t)+W_{ij}+(t-s)\left\{ A_{\theta}\left(\frac{t+s}{2}\right)-A_{\theta_{i}}\left(\frac{t+s}{2}\right)+\left[r_{i}(t)+W_{ij},A_{\theta}\left(\frac{t+s}{2}\right)\right]\right\} +O_{P}\left(h^{2}\right)\| _{F}^{2}\\
 & = & \| r_{i}(t)+W_{ij}\| _{F}^{2}+(t-s)^{2}\| A_{\theta}\left(\frac{t+s}{2}\right)-A_{\theta_{i}}\left(\frac{t+s}{2}\right)+\left[r_{i}(t)+W_{ij},A_{\theta}\left(\frac{t+s}{2}\right)\right]\| _{F}^{2}\\
 &  & +2(t-s)\left\langle r_{i}(t)+W_{ij},A_{\theta}\left(\frac{t+s}{2}\right)-A_{\theta_{i}}\left(\frac{t+s}{2}\right)+\left[r_{i}(t)+W_{ij},A_{\theta}\left(\frac{t+s}{2}\right)\right]\right\rangle \\
 &  & +O_{P}\left(h^{3}\right)\\
 & = & \| \varLambda_{i}(t,s)\| _{F}^{2}+\| W_{ij}\| _{F}^{2}+2\left\langle r_{i}(t),W_{ij}\right\rangle +(t-s)^{2}\| \left[W_{ij},A_{\theta}\left(\frac{t+s}{2}\right)\right]\| _{F}^{2}\\
 &  & +2(t-s)\left\langle W_{ij},A_{\theta}\left(\frac{t+s}{2}\right)-A_{\theta_{i}}\left(\frac{t+s}{2}\right)+\left[r_{i}(t),A_{\theta}\left(\frac{t+s}{2}\right)\right]\right\rangle \\
 &  & +2(t-s)\left\langle W_{ij},\left[W_{ij},A_{\theta}\left(\frac{t+s}{2}\right)\right]\right\rangle \\
 &  & +O_{P}\left(h^{3}\right)
\end{eqnarray*}
The random perturbations $W_{ij}$ are iid, we can use a classical
law of large number (we assume that $n_{i}=n$), for the convergence
of $L_{i}=\frac{1}{n_{i}}\sum_{j=1}^{n_{i}}\int_{0}^{1}K_{h}(t-s_{ij})\| \tilde{\varLambda}_{i}(t,s_{ij})\| _{F}^{2}\,dt$
when $h$ tends to $0$. We have 
\begin{eqnarray*}
L_{i} & = & \frac{1}{n_{i}}\sum_{j=1}^{n_{i}}\int_{0}^{1}K_{h}(t-s_{ij})\| \varLambda_{i}(t,s)\| _{F}^{2}\,dt\\
 &  & +\int_{0}^{1}\frac{1}{n_{i}}\sum_{j=1}^{n_{i}}K_{h}(t-s_{ij})\| W_{ij}\| _{F}^{2}dt\\
 &  & +\frac{2}{n_{i}}\sum_{j=1}^{n_{i}}\int_{0}^{1}K_{h}(t-s_{ij})\left\langle r_{i}(t)+(t-s_{ij})A_{\theta}\left(\frac{t+s_{ij}}{2}\right)-A_{\theta_{i}}\left(\frac{t+s_{ij}}{2}\right)+\left[r_{i}(t),A_{\theta}\left(\frac{t+s_{ij}}{2}\right)\right],W_{ij}\right\rangle dt\\
 &  & +\int_{0}^{1}\frac{1}{n_{i}}\sum_{j=1}^{n_{i}}K_{h}(t-s_{ij})(t-s_{ij})^{2}\| \left[W_{ij},A_{\theta}\left(\frac{t+s_{ij}}{2}\right)\right]\| _{F}^{2}dt\\
 &  & +\int_{0}^{1}\frac{2}{n_{i}}\sum_{j=1}^{n_{i}}K_{h}(t-s_{ij})(t-s_{ij})\left\langle W_{ij},\left[W_{ij},A_{\theta}\left(\frac{t+s}{2}\right)\right]\right\rangle dt
\end{eqnarray*}
We can then show that this approximation converges uniformly to the
asymptotic (deterministic) criterion, in probability: for all $s_{ij}$,
$Z_{1}(s_{ij})=\int_{0}^{1}K_{h}(t-s_{ij})\| W_{ij}\| _{F}^{2}\:dt$
is such that $E\left(Z_{1}(s_{ij})\right)=\rho_{2,W}\int_{0}^{1}K_{h}(t-s_{ij})\:dt$
and 
\begin{eqnarray*}
V\left(Z_{1}(s_{ij})\right) & = & V\left(\int_{0}^{1}K_{h}(t-s_{ij})dt\| W_{ij}\| _{F}^{2}\right)\\
 & = & \left(\int_{0}^{1}K_{h}(t-s_{ij})dt\right)^{2}\left(\rho_{4,W}-\rho_{2,W}^{2}\right)
\end{eqnarray*}
For all $s_{ij}$, $Z_{2}(\frac{t+s_{ij}}{2})=\| \left[W_{ij},A_{\theta}\left(\frac{t+s_{ij}}{2}\right)\right]\| _{F}^{2}$
such that $E\left(Z_{2}(s_{ij})\right)=E\| \left[W_{ij},A_{\theta}\left(\frac{t+s_{ij}}{2}\right)\right]\| _{F}^{2}$.
As $\| \left[W,A\right]\| _{F}^{2}=2\mbox{Tr}\left(W^{2}A^{2}-AWAW\right)$
and $\mbox{Tr}\left(AWAW\right)=2\left(\kappa w_{1}+\tau w_{2}\right)^{2}$,
we obtain that 
\[
E\left(Z_{2}(\frac{t+s_{ij}}{2})\right)=2\mbox{Tr}\left(EW_{ij}^{2}A_{\theta}^{2}\left(\frac{t+s_{ij}}{2}\right)\right)-4E\left(\kappa(\frac{t+s_{ij}}{2})w_{1}+\tau(\frac{t+s_{ij}}{2})w_{2}\right)^{2}
\]
that indicates that $EZ_{2}(s_{ij})$ is bounded by $\rho_{2,W}\times\sum_{k=1}^{p-1}\| \theta_{i}\| _{\infty}^{2}$.
Similarly, the variance of $Z_{2}(s_{ij})$ is controlled by $\rho_{4,W}$
and $\sum_{k=1}^{p-1}\| \theta_{i}\| _{\infty}^{4}$. 

For all $s_{ij}$, the random variable $Z_{3}(s_{ij})=\int_{0}^{1}K_{h}(t-s_{ij})\left\langle r_{i}(t)+(t-s_{ij})A_{\theta}\left(\frac{t+s_{ij}}{2}\right)-A_{\theta_{i}}\left(\frac{t+s_{ij}}{2}\right)+\left[r_{i}(t),A_{\theta}\left(\frac{t+s_{ij}}{2}\right)\right],W_{ij}\right\rangle dt$
is such that $EZ_{3}(s_{ij})=0$, and 
\[
VZ_{3}(s_{ij})\leq\rho_{2,W}\int_{0}^{1}K_{h}(t-s_{ij})\| r_{i}(t)+(t-s_{ij})A_{\theta}\left(\frac{t+s_{ij}}{2}\right)-A_{\theta_{i}}\left(\frac{t+s_{ij}}{2}\right)+\left[r_{i}(t),A_{\theta}\left(\frac{t+s_{ij}}{2}\right)\right]\| _{F}^{2}dt
\]
Finally the last term $Z_{4}(s_{ij})=\int_{0}^{1}K_{h}(t-s_{ij})(t-s_{ij})\left\langle W_{ij},\left[W_{ij},A_{\theta}\left(\frac{t+s}{2}\right)\right]\right\rangle dt$
vanishes as $\left\langle W,\left[W,A\right]\right\rangle =0$, because
$W$ is skew-symmetric. Consequently, if $n$ tends to infinity, we
have 
\begin{eqnarray*}
\mathcal{J}_{h,\lambda}(\theta,\boldsymbol{M};\mathbf{U}) & = & \mathcal{J}_{h,\lambda}(\theta,\boldsymbol{M};\mathbf{Q})+O_{P}\left(h^{3}\right)+\sum_{i=1}^{N}\frac{1}{n}\sum_{j=1}^{n}Z_{1}(s_{ij})+Z_{2}(s_{ij})+Z_{3}(s_{ij})\\
 & = & \boldsymbol{J}_{h,\lambda}(\theta,\boldsymbol{M};\mathbf{Q})+O_{P}(h^{3}+\frac{1}{n^{2}})+\sum_{i=1}^{N}\frac{1}{n}\sum_{j=1}^{n}Z_{1}(s_{ij})+Z_{2}(s_{ij})+Z_{3}(s_{ij})
\end{eqnarray*}
The variables $Z_{1},Z_{2},Z_{3}$ have finite and uniformly bounded
variances. Indeed, for all $\| \theta\| _{\infty}\leq C$
and for $i=1,\dots,N$, $\sup\| r_{i}\| \leq C'$,
then the variance of $Z_{2},Z_{3}$ are bounded, and we can ensure
that 
\begin{eqnarray*}
\mathcal{J}_{h,\lambda}(\theta,\boldsymbol{M};\mathbf{U}) & = & \boldsymbol{J}_{h,\lambda}(\theta,\boldsymbol{M};\mathbf{Q})+\rho_{2,W}+\sum_{i=1}^{N}\iint_{\left[0,1\right]^{2}}K_{h}(t-s)\left(t-s\right)^{2}E\left(Z_{2}(\frac{t+s}{2})\right)dtds+O_{P}(h^{3})\\
 & = & \boldsymbol{J}_{h,\lambda}(\theta,\boldsymbol{M};\mathbf{Q})+\rho_{2,W}+\mu_2(K)h^{2}\int_{0}^{1}E\left(Z_{2}(s)\right)ds+O_{p}(h^{3})
\end{eqnarray*}
We recall $E\left(Z_{2}(s)\right)=\| \left[W,A_{\theta}(s)\right]\| _{F}^{2}$,
which indicates the criterion roughly has a bias of order $h^{2}$,
that consist in an additional penalty with respect to the norm $\| \theta\| _{L^{2}}^{2}$. 
\end{proof}}

The propositions show that the stochastic criterion is close to the
function $\boldsymbol{J}_{h,\lambda}(\theta,\boldsymbol{M};\mathbf{Q})$
that corresponds to a weighted distance between $M_{i}(t)$ and $Q_{i}(t)$,
plus a distance between $\theta$ and $\bar{\theta}$, with an additional
penalty term related to $\| \theta\| _{L^{2}}^{2}$. 
Nevertheless, the simultaneous optimization in $(\theta,\boldsymbol{M})$
is rather difficult, and we propose an alternating optimization scheme
described below. 

\subsection{Estimation algorithm}

For a given parameter $\theta$, the path $\boldsymbol{M}$
is estimated by solving 
\begin{equation}
\widehat{\boldsymbol{M}}_{h,\lambda}(\cdot;\theta)=\arg\min_{\boldsymbol{M}\in SSO(p)}\mathcal{J}_{h,\lambda}\left(\theta,\boldsymbol{M};\boldsymbol{U}\right)\label{eq:Profiling_M}
\end{equation}
The estimation of $\mbox{\ensuremath{\theta}}$ from a given path
$\boldsymbol{M}$ is done by solving
\begin{equation}
\widehat{\theta}_{h,\lambda}\left(\cdot;\boldsymbol{M}\right)=\arg\min_{\theta}\mathcal{J}_{h,\lambda}\left(\theta,\boldsymbol{M};\boldsymbol{U}\right)\label{eq:Profiling_theta}
\end{equation}
We need to solve the nonparametric estimation problem (\ref{eq:Profiling_theta}),
but in practice, we discretize the integral on a grid $0=t_{i1}<t_{i2}<\dots<t_{iQ_{i}}=1$
and we solve 
{\footnotesize\[
\min_{\theta\in\mathcal{H}}\sum_{i=1}^{N}\frac{1}{n_{i}}\sum_{j,q=1}^{n_{i},Q_{i}}K_{h}(t_{iq}-s_{ij})\| \log\left(M_{i}(t_{iq}){}^{\top}U_{ij}\exp\left(\left(t_{iq}-s_{ij}\right)A_{\theta}\left(\frac{t_{iq}+s_{ij}}{2}\right)\right)\right)\| _{F}^{2}+\lambda\| \theta^{''}\| _{L^{2}}^{2}
\]}
The presence of the exponential makes the optimization difficult,
and we use an additional approximation that provides a simple algorithm
and simplifies the analysis of our estimator. We define the skew-symmetric
matrix $\tilde{R}_{ijq}=-\frac{1}{t_{iq}-s_{ij}}\log\left(M_{i}(t_{iq})^{\top}U_{ij}\right)$.
The BCH formula in (\ref{eq:BCHformula}) furnishes a first order approximation, and we introduce
a new approximate criterion, with $u_{ijq}=t_{iq}-s_{ij}$, $v_{ijq}=\frac{t_{iq}+s_{ij}}{2}$
\[
\tilde{\mathcal{J}}_{h,\lambda}\left(\theta,\boldsymbol{M};\boldsymbol{U}\right)=\sum_{i,j,q=1}^{N,n_{i},Q_{i}}\frac{1}{n_{i}Q_{i}}K_{h}(u_{ijq})u_{ijq}^{2}\| A_{\theta}(v_{ijq})-\tilde{R}_{ijq}\| _{F}^{2}+\lambda\int_{0}^{1}\| \theta^{''}\| ^{2}dt
\]
In the particular case $p=3$, if we define
\[
\tilde{R}_{ijq}=\left[\begin{array}{ccc}
0 & -r_{ijq}^{1} & -r_{ijq}^{3}\\
r_{ijq}^{1} & 0 & -r_{ijq}^{2}\\
r_{ijq}^{3} & r_{ijq}^{2} & 0
\end{array}\right]
\]
the Frobenius norm can be rearranged with weights $\omega_{ijq}=\frac{2}{n_{i}Q_{i}}K_{h}(u_{ijq})u_{ijq}^{2}$
{\footnotesize\begin{equation}
\tilde{\mathcal{J}}_{h,\lambda}\left(\theta,\boldsymbol{M};\boldsymbol{U}\right)=\sum_{i,j,q=1}^{N,n_{i},Q_{i}}\omega_{ijq}\left(\kappa(v_{ijq})-r_{ijq}^{1}\right)^{2}+\sum_{i,j,q=1}^{N,n_{i},Q_{i}}\omega_{ijq}\left(\tau(v_{ijq})-r_{ijq}^{2}\right)^{2}+\lambda\int_{0}^{1}\| \theta^{''}\| ^{2}dt\label{eq:I1_g}
\end{equation}}
The optimization problem 
\[
\tilde{\theta}_{h,\lambda}\left(\cdot,\boldsymbol{M}\right)=\min_{\theta\in\mathcal{H}}\tilde{\mathcal{J}}_{h,\lambda}\left(\theta,\boldsymbol{M};\boldsymbol{U}\right)
\]
gives rise to the computation of $2$ independent smoothing splines
(with splines of third order), defined at the knots $v_{ijq}$, with
the pseudo-observations $r_{ijq}^{1},r_{ijq}^{2}$. 
The only difference with respect to the classical smoothing splines
is the presence of the weights $\omega_{ijq}$. 

Finally, the successive approximations invites us to propose the following
estimation algorithm for the Frenet Paths and the mean shape $\theta$,
which can be seen as a first order minimization of the criterion $\mathcal{J}_{h,\lambda}\left(\theta,\boldsymbol{M};\boldsymbol{U}\right)$. 

\paragraph{Estimation Algorithm\label{par:Estimation-Algorithm}}

Let us start with a given initial value $\hat{\boldsymbol{M}}^{(0)}$
(close to the real paths $\boldsymbol{M}$) and solve 
\[
\tilde{\theta}^{(0)}\left(\cdot;\hat{\boldsymbol{M}}^{(0)}\right)=\arg\min_{\theta}\tilde{\mathcal{J}}_{h,\lambda}\left(\theta,\hat{\boldsymbol{M}}^{(0)};\boldsymbol{U}\right).
\]
For a fixed $h$ and $\lambda$ and $\ell\geq1$, repeat steps 1,
2, 3 until convergence:
\begin{enumerate}
\item For all $t$ in $\left[0,1\right]$ and $i=1,\dots,N$, compute the
Karcher mean defined by 
{\footnotesize\begin{equation}
\widehat{M_{i}}^{(\ell+1)}(t)=\arg\min_{M\in SO(p)}\sum_{j=1}^{n_{i}}K_{h}(t-s_{ij})\| \log\left(M^{\top}U_{ij}\exp\left(\left(t-s_{ij}\right)A_{\tilde{\theta}^{(\ell)}}\left(\frac{t+s_{ij}}{2}\right)\right)\right)\| _{F}^{2}\label{eq:Karcher-Step1}
\end{equation}}
\item Compute the current ``empirical $\Omega$'': $\tilde{R}_{ijq}^{(\ell+1)}=-\frac{1}{t_{iq}-s_{ij}}\log\left(\widehat{M_{i}}^{(\ell+1)}(t_{iq})^{\top}U_{ij}\right)$. 
\item For all $k=1,\dots,p-1$, compute the smoothing splines 
\[
\tilde{\theta}_{k}^{(\ell+1)}=\arg\min_{\theta\in\mathcal{H}}\sum_{i,j,q=1}^{N,n_{i},Q_{i}}\omega_{ijq}\left(\theta(v_{ijq})-r_{ijq}^{k,(\ell+1)}\right)^{2}+\lambda\int_{0}^{1}\| \theta^{''}\| ^{2}dt
\]
and $\widehat{\theta}^{(\ell+1)}=\left(\tilde{\theta}_{1}^{(\ell+1)},\dots,\tilde{\theta}_{p-1}^{(\ell+1)}\right)^{\top}$.
\end{enumerate}
The outputs of the algorithm are our estimators of the curvatures
and Frenet Paths and are denoted by $\left(\hat{\theta}_{h,\lambda},\hat{\boldsymbol{M}}_{h,\lambda}\right)$. 

\begin{rem}
Our prediction error depends on $h,\lambda=(\lambda_1,\lambda_2)$. 
If $h$ is too big, we integrate along the whole interval and the errors
accumulate, and it is better to restrict to smaller interval. We consider
the prediction of a small percentage (10\%, $h\approx0.1$) of the
individuals, when the total length of a curve is 1. 
In our numerical studies we have performed 10-fold cross validation by minimizing
\[
\sum_{k=1}^{K}\sum_{(i,j)\in T_{k}}\left\|\log\left(U_{ij}^\top \hat M_i^{-(k)}(s_{ij}; h,\lambda)\right)\right\|_F^2
\]
where $T_k$ is the $k$th index set based on $K=10$ random partition of the observations $\left\{U_{ij}\right\}$,  $i=1,\ldots,N, j=1, \ldots,n_i$ and $\hat{M}_{i}^{-(k)}(s_{ij}; h,\lambda)$ are the Frenet paths estimated without the $k$th partition dataset, using hyperparameters $h, \lambda$. 
\end{rem}

\noprint{
\subsection{Properties of the estimators}

We give an approximation of the criterion $\mathcal{J}_{h,\lambda}(\theta,\boldsymbol{M};\mathbf{Q})$
for small $h$. 

\subsubsection{Approximation of the statistical criterion}

For the analysis of the bias and variance of the estimator $\left(\hat{\theta}_{h,\lambda},\hat{\boldsymbol{M}}_{h,\lambda}\right)$,
we need to introduce the functions $r_{i}(t)=\log\left(M_{i}^{\top}(t)Q_{i}(t)\right)$
and $R_{i}(t,s)=-\frac{1}{t-s}\log\left(M_{i}^{\top}(t)Q_{i}(s)\right)$.
We recall that the BCH formula implies that 
\[
\log\left(\exp\left(A+uB\right)\exp(uC)\right)=A+u(B+C)+u\left[A,C\right]+O(u^{2})
\]
and the Zassenhaus formula states that $\exp(A+B)=\exp(A)\exp(B)\prod_{n\geq2}\exp\left(C_{n}(A,B)\right)$,
meaning that for small $u$, we have $\exp(uA+uB)=\exp(uA)\exp(uB)\times O(u^{2})$.
Based on these two expansions, we can analyze the behavior of ``empirical
$\Omega$'' estimate $R_{i}(t,s)$, for small $\| r_{i}(t)\| _{F}^{2}$
(i.e. $M_{i}(t)$ and $Q_{i}(t)$ are close) and $\left|t-s\right|\leq h$.
Indeed, we have 
\begin{eqnarray*}
M_{i}(t)^{\top}Q_{i}(s) & = & M_{i}(t)^{\top}Q_{i}(t)\exp\left(-\Omega(t-s,s,\theta_{i})\right)\\
 & = & \exp(r_{i}(t))\exp\left(-(t-s)A_{\theta_{i}}\left(\frac{t+s}{2}\right)+O(h^{2})\right)\\
 & = & \exp\left(r_{i}(t)-(t-s)A_{\theta_{i}}\left(\frac{t+s}{2}\right)\right)\times O(h^{2})
\end{eqnarray*}
Approximation of the criteria $\mathcal{J}_{h,\lambda}(\theta,\boldsymbol{M};\mathbf{Q})$ around the true Frenet path $\mathbf{Q}$ is studied in the following proposition.
\begin{prop}[Approximation of the criteria]%
\label{ApproximationDeterministicCrit} 
Let $\mathbf{Q}=Q_{1},\ldots,Q_{N}$ be Frenet paths associated to
$\theta_{1},\dots,\theta_{N}$. For any functions $M_{1},\dots,M_{N}$
in $SSO(p)$, $\theta$ in $\mathcal{H}$, and small $h$, we have
the approximation $\mathcal{J}_{h,\lambda}(\theta,\boldsymbol{M};\mathbf{Q})=\boldsymbol{J}_{h,\lambda}(\theta,\boldsymbol{M};\mathbf{Q})+O(h^{3}+\frac{1}{n^{2}})$
where $\boldsymbol{J}_{h,\lambda}(\theta,\boldsymbol{M};\mathbf{Q})$ is given by
{\footnotesize\begin{eqnarray*}
&& \sum_{i=1}^{N}\iint_{\left[0,1\right]^{2}}K_{h}(t-s)\| r_{i}(t)\| _{F}^{2}dtds+\mu_2(K)h^{2}\int_{0}^{1}\| \left(A_{\theta}-A_{\theta_{i}}\right)\left(t\right)+\left[r_{i}(t),A_{\theta}\left(t\right)\right]\| _{F}^{2}dt\\
& & +h^{2}\mu_2(K)\frac{\left\langle r_{i}(1),\left(A_{\theta}-A_{\theta_{i}}\right)(1)+\left[r_{i}(1),A_{\theta}\left(1\right)\right]\right\rangle +\left\langle r_{i}(0),\left(A_{\theta}-A_{\theta_{i}}\right)(0)+\left[r_{i}(0),A_{\theta}\left(0\right)\right]\right\rangle }{2} \\
& &+\lambda\int_{0}^{1}\| \theta^{''}\| ^{2}dt \,.
\end{eqnarray*}}
\end{prop}
\noprint{\begin{proof}
We can propagate this approximation in $\varLambda_{i}(t,s)=\log\left(M_{i}(t)^{\top}Q_{i}(s)\exp\left(\left(t-s\right)A_{\theta}\left(\frac{t+s}{2}\right)\right)\right)$
\begin{eqnarray*}
\varLambda_{i}(t,s) & = & \log\left(\exp\left(r_{i}(t)-(t-s)A_{\theta_{i}}\left(\frac{t+s}{2}\right)\right)\times O(h^{2})\exp\left(\left(t-s\right)A_{\theta}\left(\frac{t+s}{2}\right)\right)\right)\\
 & = & r_{i}(t)-(t-s)A_{\theta_{i}}\left(\frac{t+s}{2}\right)+\left(t-s\right)A_{\theta}\left(\frac{t+s}{2}\right)\\
 &  & +\left[r_{i}(t)-(t-s)A_{\theta_{i}}\left(\frac{t+s}{2}\right),\left(t-s\right)A_{\theta}\left(\frac{t+s}{2}\right)\right]+O(h^{2})\\
 & = & r_{i}(t)+(t-s)\left\{ A_{\theta}\left(\frac{t+s}{2}\right)-A_{\theta_{i}}\left(\frac{t+s}{2}\right)+\left[r_{i}(t),A_{\theta}\left(\frac{t+s}{2}\right)\right]\right\} +O\left((t-s)^{2}\right)
\end{eqnarray*}
Based on that decomposition, we derive an approximate expression: 
\begin{eqnarray*}
\| \varLambda_{i}(t,s)\| _{F}^{2} & = & \| r_{i}(t)+(t-s)\left\{ A_{\theta}\left(\frac{t+s}{2}\right)-A_{\theta_{i}}\left(\frac{t+s}{2}\right)+\left[r_{i}(t),A_{\theta}\left(\frac{t+s}{2}\right)\right]\right\} +O\left((t-s)^{2}\right)\| _{F}^{2}\\
 & = & \| r_{i}(t)\| _{F}^{2}+(t-s)^{2}\| \left\{ A_{\theta}\left(\frac{t+s}{2}\right)-A_{\theta_{i}}\left(\frac{t+s}{2}\right)+\left[r_{i}(t),A_{\theta}\left(\frac{t+s}{2}\right)\right]\right\} \| _{F}^{2}\\
 &  & +2(t-s)\left\langle r_{i}(t),A_{\theta}\left(\frac{t+s}{2}\right)-A_{\theta_{i}}\left(\frac{t+s}{2}\right)+\left[r_{i}(t),A_{\theta}\left(\frac{t+s}{2}\right)\right]\right\rangle \\
 &  & +O\left((t-s)^{3}\right)
\end{eqnarray*}
We can approximate the criterion at order 3 by using this local expansion
valid for small $h$. First of all, we replace the discrete sum by
a double integral: we use the classical rectangle approximations,
that gives an order two approximation, as $n_{i}=O(n)$, and assuming
that for all $i$, $\frac{d^{2}\Lambda_{i}}{ds^{2}}(t,s)$ is bounded
on for $\theta\in\mathcal{H}$, and all $Q_{i}$, we have 
\begin{eqnarray*}
\mathcal{J}_{h,\lambda}(\theta,\boldsymbol{M};\mathbf{Q}) & = & \sum_{i=1}^{N}\iint_{\left[0,1\right]^{2}}K_{h}(t-s)\| \varLambda_{i}(t,s)\| _{F}^{2}dtds+O(\frac{1}{n^{2}})\\
 & = & \sum_{i=1}^{N}\iint_{\left[0,1\right]^{2}}K_{h}(t-s)\left\{ \| r_{i}(t)\| _{F}^{2}+(t-s)^{2}\| A_{\theta}\left(\frac{t+s}{2}\right)-A_{\theta_{i}}\left(\frac{t+s}{2}\right)+\left[r_{i}(t),A_{\theta}\left(\frac{t+s}{2}\right)\right]\| _{F}^{2}\right\} \:dtds\\
 &  & +2\sum_{i=1}^{N}\iint_{\left[0,1\right]^{2}}K_{h}(t-s)(t-s)\left\langle r_{i}(t),A_{\theta}\left(\frac{t+s}{2}\right)-A_{\theta_{i}}\left(\frac{t+s}{2}\right)+\left[r_{i}(t),A_{\theta}\left(\frac{t+s}{2}\right)\right]\right\rangle dtds\\
 &  & +O\left(h^{3}\right)+O(\frac{1}{n^{2}})
\end{eqnarray*}
We need to provide an additional analysis of the first order term
\[
\mathcal{A}=\iint_{\left[0,1\right]^{2}}K_{h}(t-s)(t-s)\left\langle r_{i}(t),A_{\theta}\left(\frac{t+s}{2}\right)-A_{\theta_{i}}\left(\frac{t+s}{2}\right)+\left[r_{i}(t),A_{\theta}\left(\frac{t+s}{2}\right)\right]\right\rangle \:dtds.
\]
This is equal to 
\begin{eqnarray*}
\mathcal{A} & = & h\int_{-1/h}^{^{1/h}}K(y)y\int_{\left|yh\right|/2}^{1-\left|yh\right|/2}\left\langle r_{i}(v+\frac{hy}{2}),A_{\theta}-A_{\theta_{i}}(v)+\left[r_{i}(v+\frac{hy}{2}),A_{\theta}\left(v\right)\right]\right\rangle \:dvdy\\
 & = & h\int_{-1/h}^{^{1/h}}K(y)y\int_{\left|yh\right|/2}^{1-\left|yh\right|/2}\left\langle r_{i}(v)+\frac{hy}{2}r_{i}^{'}(v)+\frac{(hy)^{2}}{2}r_{i}^{''}(v),A_{\theta}-A_{\theta_{i}}(v)+\left[r_{i}(v)+\frac{hy}{2}r_{i}^{'}(v)+\frac{(hy)^{2}}{2}r_{i}^{''}(v),A_{\theta}\left(v\right)\right]\right\rangle \:dvdy
\end{eqnarray*}
From this Taylor expansion, we obtain 
\[
\begin{array}{c}
\int_{\left|yh\right|/2}^{1-\left|yh\right|/2}\left\langle r_{i}(v)+\frac{hy}{2}r_{i}^{'}(v)+\frac{(hy)^{2}}{2}r_{i}^{''}(v),A_{\theta}-A_{\theta_{i}}(v)+\left[r_{i}(v)+\frac{hy}{2}r_{i}^{'}(v)+\frac{(hy)^{2}}{2}r_{i}^{''}(v),A_{\theta}\left(v\right)\right]\right\rangle \:dv\\
=\int_{\left|yh\right|/2}^{1-\left|yh\right|/2}\left\langle r_{i}(v),A_{\theta}-A_{\theta_{i}}(v)+\left[r_{i}(v),A_{\theta}\left(v\right)\right]\right\rangle \:dv\\
+\frac{hy}{2}\int_{\left|yh\right|/2}^{1-\left|yh\right|/2}\left\langle r_{i}^{'}(v),A_{\theta}-A_{\theta_{i}}(v)+\left[r_{i}(v),A_{\theta}\left(v\right)\right]\right\rangle +\left\langle r_{i}(v),\left[r_{i}^{'}(v),A_{\theta}\left(v\right)\right]\right\rangle \\
+y^{2}\int_{\left|yh\right|/2}^{1-\left|yh\right|/2}O(h^{2})
\end{array}
\]
And we can obtain a 3rd order expansion for $\mathcal{A}$: 
\begin{eqnarray*}
\mathcal{A} & = & h\int_{-1/h}^{^{1/h}}K(y)y\int_{\left|yh\right|/2}^{1-\left|yh\right|/2}\left\langle r_{i}(v),A_{\theta}-A_{\theta_{i}}(v)+\left[r_{i}(v),A_{\theta}\left(v\right)\right]\right\rangle \:dvdy\\
 &  & +h^{2}\int_{-1/h}^{^{1/h}}K(y)y\frac{y^{2}}{2}\int_{\left|yh\right|/2}^{1-\left|yh\right|/2}\left\langle r_{i}^{'}(v),A_{\theta}-A_{\theta_{i}}(v)+\left[r_{i}(v),A_{\theta}\left(v\right)\right]\right\rangle +\left\langle r_{i}(v),\left[r_{i}^{'}(v),A_{\theta}\left(v\right)\right]\right\rangle \,dvdy\\
 &  & +O(h^{3})\\
 & = & h^{2}\mu_2(K)\frac{\left\langle r_{i}(1),A_{\theta}-A_{\theta_{i}}(1)+\left[r_{i}(1),A_{\theta}\left(1\right)\right]\right\rangle +\left\langle r_{i}(0),A_{\theta}-A_{\theta_{i}}(0)+\left[r_{i}(0),A_{\theta}\left(0\right)\right]\right\rangle }{2}+O(h^{3})
\end{eqnarray*}
We need also a tractable expression for the 2nd term 
\[
\mathcal{B}=\iint_{\left[0,1\right]^{2}}K_{h}(t-s)(t-s)^{2}\| A_{\theta}\left(\frac{t+s}{2}\right)-A_{\theta_{i}}\left(\frac{t+s}{2}\right)+\left[r_{i}(t),A_{\theta}\left(\frac{t+s}{2}\right)\right]\| _{F}^{2}\:dtds.
\]
The same Taylor expansion as for $r_{i}$ gives 
\begin{eqnarray*}
\mathcal{B} & = & h^{2}\int_{-1/h}^{^{1/h}}K(y)y^{2}\int_{\left|yh\right|/2}^{1-\left|yh\right|/2}\| A_{\theta}-A_{\theta_{i}}\left(v\right)+\left[r_{i}(v)+\frac{hy}{2}r_{i}^{'}(v)+\frac{(hy)^{2}}{2}r_{i}^{''}(v),A_{\theta}\left(v\right)\right]\| _{F}^{2}\:dvdy\\
 & = & h^{2}\int_{-1/h}^{^{1/h}}K(y)y^{2}\int_{\left|yh\right|/2}^{1-\left|yh\right|/2}\| A_{\theta}-A_{\theta_{i}}\left(v\right)+\left[r_{i}(v),A_{\theta}\left(v\right)\right]\| _{F}^{2}\:dvdy+O(h^{3})\\
 & = & \mu_2(K)h^{2}\int_{0}^{1}\| A_{\theta}-A_{\theta_{i}}\left(v\right)+\left[r_{i}(v),A_{\theta}\left(v\right)\right]\| _{F}^{2}\:dv+O(h^{3})
\end{eqnarray*}
Hence, we obtain the following third order approximation for the criterion
$\mathcal{J}_{h,\lambda}(\theta,\boldsymbol{M};\mathbf{Q})$: 
\begin{eqnarray*}
\boldsymbol{J}_{h,\lambda}(\theta,\boldsymbol{M};\mathbf{Q}) & = & \sum_{i=1}^{N}\iint_{\left[0,1\right]^{2}}K_{h}(t-s)\| r_{i}(t)\| _{F}^{2}dtds+\mu_2(K)h^{2}\int_{0}^{1}\| \left(A_{\theta}-A_{\theta_{i}}\right)\left(t\right)+\left[r_{i}(t),A_{\theta}\left(t\right)\right]\| _{F}^{2}dt\\
 &  & +h^{2}\mu_2(K)\frac{\left\langle r_{i}(1),\left(A_{\theta}-A_{\theta_{i}}\right)(1)+\left[r_{i}(1),A_{\theta}\left(1\right)\right]\right\rangle +\left\langle r_{i}(0),\left(A_{\theta}-A_{\theta_{i}}\right)(0)+\left[r_{i}(0),A_{\theta}\left(0\right)\right]\right\rangle }{2}
\end{eqnarray*}
\end{proof}}
If the true paths $Q_{i}$ are replaced by noisy observations $U_{ij}$,
then the next proposition shows that the criterion $\mathcal{J}_{h,\lambda}(\theta,\boldsymbol{M};\mathbf{U})$
converges in probability as $n$ tends to infinity towards a perturbed
function $\boldsymbol{J}_{h,\lambda}(\theta,\boldsymbol{M};\mathbf{Q})$
that still can discriminate the true path and the true parameter (when
$h$ is small).
Convergence of the empirical criterion $\mathcal{J}_{h,\lambda}(\theta,\boldsymbol{M};\mathbf{U})$ is stated below.
\begin{prop}[Convergence of the empirical criterion $\mathcal{J}_{h,\lambda}(\theta,\boldsymbol{M};\mathbf{U})$] \label{ApproximationStoCrit}
Let $\mathbf{U}$ be random observations coming from \ref{eq:StatisticalModelNoise},
and $\theta_{1},\dots,\theta_{N}$ candidate parameters in $\mathcal{H}$.
We assume that $E\| W_{ij}\| _{F}^{2}=\rho_{2,W}<\infty$
and $E\| W_{ij}\| _{F}^{4}=\rho_{4,W}<\infty$. For
any functions $M_{1},\dots,M_{N}$ in $SSO(p)$, and small $h$, we
have 
\begin{eqnarray*}
\mathcal{J}_{h,\lambda}(\theta,\boldsymbol{M};\mathbf{U}) & = & \boldsymbol{J}_{h,\lambda}(\theta,\boldsymbol{M};\mathbf{Q})+\rho_{2,W}+\mu_2(K)h^{2}\int_{0}^{1}E\left(Z_{2}(s)\right)ds+O_{p}(h^{3})
\end{eqnarray*}
where $Z_{2}(\frac{t+s_{ij}}{2})=\| \left[W_{ij},A_{\theta}\left(\frac{t+s_{ij}}{2}\right)\right]\| _{F}^{2}$
such that $E\left(Z_{2}(s_{ij})\right)=E\| \left[W_{ij},A_{\theta}\left(\frac{t+s_{ij}}{2}\right)\right]\| _{F}^{2}$.
\end{prop}
\noprint{\begin{proof}
For small $h$, we want to study the convergence of $\mathcal{J}_{h,\lambda}(\theta,\boldsymbol{M};\mathbf{U})$.
We define $\tilde{R}_{ij}(t,s)=\log\left(M_{i}(t)^{\top}Q_{i}(s)\exp\left(W_{ij}\right)\right)$
and $r_{i}(t)=\log\left(M_{i}(t)^{\top}Q_{i}(t)\right)$. We can apply
the Zassenhaus formula, as these matrices are in the neighborhood
of the identity: 
\begin{eqnarray*}
M_{i}(t)^{\top}U_{ij} & = & M_{i}(t)^{\top}Q_{i}(t)\exp\left(-\Omega(t-s,s,\theta_{i})\right)\exp\left(W_{ij}\right)\\
 & = & \exp(r_{i}(t))\exp\left(-(t-s)A_{\theta_{i}}\left(\frac{t+s}{2}\right)\right)\times O(h^{2})\\
 & = & \exp\left(r_{i}(t)-(t-s)A_{\theta_{i}}\left(\frac{t+s}{2}\right)+W_{ij}\right)\times O(h^{2})
\end{eqnarray*}
We propagate this approximation in $\tilde{\varLambda}_{i}(t,s)=\log\left(M_{i}(t)^{\top}U_{ij}\exp\left(\left(t-s\right)A_{\theta}\left(\frac{t+s}{2}\right)\right)\right)$.
If $s=s_{ij}$, 
\begin{eqnarray*}
\tilde{\varLambda}_{i}(t,s) & = & \log\left(\exp\left(r_{i}(t)-(t-s)A_{\theta_{i}}\left(\frac{t+s}{2}\right)+W_{ij}+O_{P}(h^{2})\right)\exp\left(\left(t-s\right)A_{\theta}\left(\frac{t+s}{2}\right)\right)\right)\\
 & = & r_{i}(t)-(t-s)A_{\theta_{i}}\left(\frac{t+s}{2}\right)+W_{ij}+\left(t-s\right)A_{\theta}\left(\frac{t+s}{2}\right)\\
 &  & +\left[r_{i}(t)-(t-s)A_{\theta_{i}}\left(\frac{t+s}{2}\right)+W_{ij},\left(t-s\right)A_{\theta}\left(\frac{t+s}{2}\right)\right]+O_{P}(h^{2})\\
 & = & r_{i}(t)+W_{ij}+(t-s)\left\{ A_{\theta}\left(\frac{t+s}{2}\right)-A_{\theta_{i}}\left(\frac{t+s}{2}\right)+\left[r_{i}(t)+W_{ij},A_{\theta}\left(\frac{t+s}{2}\right)\right]\right\} +O_{P}\left(h^{2}\right)
\end{eqnarray*}
Based on that decomposition, we can make use of the previous computation
obtained in (\ref{ApproximationDeterministicCrit}): 
\begin{eqnarray*}
\| \tilde{\varLambda}_{i}(t,s)\| _{F}^{2} & = & \| r_{i}(t)+W_{ij}+(t-s)\left\{ A_{\theta}\left(\frac{t+s}{2}\right)-A_{\theta_{i}}\left(\frac{t+s}{2}\right)+\left[r_{i}(t)+W_{ij},A_{\theta}\left(\frac{t+s}{2}\right)\right]\right\} +O_{P}\left(h^{2}\right)\| _{F}^{2}\\
 & = & \| r_{i}(t)+W_{ij}\| _{F}^{2}+(t-s)^{2}\| A_{\theta}\left(\frac{t+s}{2}\right)-A_{\theta_{i}}\left(\frac{t+s}{2}\right)+\left[r_{i}(t)+W_{ij},A_{\theta}\left(\frac{t+s}{2}\right)\right]\| _{F}^{2}\\
 &  & +2(t-s)\left\langle r_{i}(t)+W_{ij},A_{\theta}\left(\frac{t+s}{2}\right)-A_{\theta_{i}}\left(\frac{t+s}{2}\right)+\left[r_{i}(t)+W_{ij},A_{\theta}\left(\frac{t+s}{2}\right)\right]\right\rangle \\
 &  & +O_{P}\left(h^{3}\right)\\
 & = & \| \varLambda_{i}(t,s)\| _{F}^{2}+\| W_{ij}\| _{F}^{2}+2\left\langle r_{i}(t),W_{ij}\right\rangle +(t-s)^{2}\| \left[W_{ij},A_{\theta}\left(\frac{t+s}{2}\right)\right]\| _{F}^{2}\\
 &  & +2(t-s)\left\langle W_{ij},A_{\theta}\left(\frac{t+s}{2}\right)-A_{\theta_{i}}\left(\frac{t+s}{2}\right)+\left[r_{i}(t),A_{\theta}\left(\frac{t+s}{2}\right)\right]\right\rangle \\
 &  & +2(t-s)\left\langle W_{ij},\left[W_{ij},A_{\theta}\left(\frac{t+s}{2}\right)\right]\right\rangle \\
 &  & +O_{P}\left(h^{3}\right)
\end{eqnarray*}
The random perturbations $W_{ij}$ are iid, we can use a classical
law of large number (we assume that $n_{i}=n$), for the convergence
of $L_{i}=\frac{1}{n_{i}}\sum_{j=1}^{n_{i}}\int_{0}^{1}K_{h}(t-s_{ij})\| \tilde{\varLambda}_{i}(t,s_{ij})\| _{F}^{2}\,dt$
when $h$ tends to $0$. We have 
\begin{eqnarray*}
L_{i} & = & \frac{1}{n_{i}}\sum_{j=1}^{n_{i}}\int_{0}^{1}K_{h}(t-s_{ij})\| \varLambda_{i}(t,s)\| _{F}^{2}\,dt\\
 &  & +\int_{0}^{1}\frac{1}{n_{i}}\sum_{j=1}^{n_{i}}K_{h}(t-s_{ij})\| W_{ij}\| _{F}^{2}dt\\
 &  & +\frac{2}{n_{i}}\sum_{j=1}^{n_{i}}\int_{0}^{1}K_{h}(t-s_{ij})\left\langle r_{i}(t)+(t-s_{ij})A_{\theta}\left(\frac{t+s_{ij}}{2}\right)-A_{\theta_{i}}\left(\frac{t+s_{ij}}{2}\right)+\left[r_{i}(t),A_{\theta}\left(\frac{t+s_{ij}}{2}\right)\right],W_{ij}\right\rangle dt\\
 &  & +\int_{0}^{1}\frac{1}{n_{i}}\sum_{j=1}^{n_{i}}K_{h}(t-s_{ij})(t-s_{ij})^{2}\| \left[W_{ij},A_{\theta}\left(\frac{t+s_{ij}}{2}\right)\right]\| _{F}^{2}dt\\
 &  & +\int_{0}^{1}\frac{2}{n_{i}}\sum_{j=1}^{n_{i}}K_{h}(t-s_{ij})(t-s_{ij})\left\langle W_{ij},\left[W_{ij},A_{\theta}\left(\frac{t+s}{2}\right)\right]\right\rangle dt
\end{eqnarray*}
We can then show that this approximation converges uniformly to the
asymptotic (deterministic) criterion, in probability: for all $s_{ij}$,
$Z_{1}(s_{ij})=\int_{0}^{1}K_{h}(t-s_{ij})\| W_{ij}\| _{F}^{2}\:dt$
is such that $E\left(Z_{1}(s_{ij})\right)=\rho_{2,W}\int_{0}^{1}K_{h}(t-s_{ij})\:dt$
and 
\begin{eqnarray*}
V\left(Z_{1}(s_{ij})\right) & = & V\left(\int_{0}^{1}K_{h}(t-s_{ij})dt\| W_{ij}\| _{F}^{2}\right)\\
 & = & \left(\int_{0}^{1}K_{h}(t-s_{ij})dt\right)^{2}\left(\rho_{4,W}-\rho_{2,W}^{2}\right)
\end{eqnarray*}
For all $s_{ij}$, $Z_{2}(\frac{t+s_{ij}}{2})=\| \left[W_{ij},A_{\theta}\left(\frac{t+s_{ij}}{2}\right)\right]\| _{F}^{2}$
such that $E\left(Z_{2}(s_{ij})\right)=E\| \left[W_{ij},A_{\theta}\left(\frac{t+s_{ij}}{2}\right)\right]\| _{F}^{2}$.
As $\| \left[W,A\right]\| _{F}^{2}=2\mbox{Tr}\left(W^{2}A^{2}-AWAW\right)$
and $\mbox{Tr}\left(AWAW\right)=2\left(\kappa w_{1}+\tau w_{2}\right)^{2}$,
we obtain that 
\[
E\left(Z_{2}(\frac{t+s_{ij}}{2})\right)=2\mbox{Tr}\left(EW_{ij}^{2}A_{\theta}^{2}\left(\frac{t+s_{ij}}{2}\right)\right)-4E\left(\kappa(\frac{t+s_{ij}}{2})w_{1}+\tau(\frac{t+s_{ij}}{2})w_{2}\right)^{2}
\]
that indicates that $EZ_{2}(s_{ij})$ is bounded by $\rho_{2,W}\times\sum_{k=1}^{p-1}\| \theta_{i}\| _{\infty}^{2}$.
Similarly, the variance of $Z_{2}(s_{ij})$ is controlled by $\rho_{4,W}$
and $\sum_{k=1}^{p-1}\| \theta_{i}\| _{\infty}^{4}$. 

For all $s_{ij}$, the random variable $Z_{3}(s_{ij})=\int_{0}^{1}K_{h}(t-s_{ij})\left\langle r_{i}(t)+(t-s_{ij})A_{\theta}\left(\frac{t+s_{ij}}{2}\right)-A_{\theta_{i}}\left(\frac{t+s_{ij}}{2}\right)+\left[r_{i}(t),A_{\theta}\left(\frac{t+s_{ij}}{2}\right)\right],W_{ij}\right\rangle dt$
is such that $EZ_{3}(s_{ij})=0$, and 
\[
VZ_{3}(s_{ij})\leq\rho_{2,W}\int_{0}^{1}K_{h}(t-s_{ij})\| r_{i}(t)+(t-s_{ij})A_{\theta}\left(\frac{t+s_{ij}}{2}\right)-A_{\theta_{i}}\left(\frac{t+s_{ij}}{2}\right)+\left[r_{i}(t),A_{\theta}\left(\frac{t+s_{ij}}{2}\right)\right]\| _{F}^{2}dt
\]
Finally the last term $Z_{4}(s_{ij})=\int_{0}^{1}K_{h}(t-s_{ij})(t-s_{ij})\left\langle W_{ij},\left[W_{ij},A_{\theta}\left(\frac{t+s}{2}\right)\right]\right\rangle dt$
vanishes as $\left\langle W,\left[W,A\right]\right\rangle =0$, because
$W$ is skew-symmetric. Consequently, if $n$ tends to infinity, we
have 
\begin{eqnarray*}
\mathcal{J}_{h,\lambda}(\theta,\boldsymbol{M};\mathbf{U}) & = & \mathcal{J}_{h,\lambda}(\theta,\boldsymbol{M};\mathbf{Q})+O_{P}\left(h^{3}\right)+\sum_{i=1}^{N}\frac{1}{n}\sum_{j=1}^{n}Z_{1}(s_{ij})+Z_{2}(s_{ij})+Z_{3}(s_{ij})\\
 & = & \boldsymbol{J}_{h,\lambda}(\theta,\boldsymbol{M};\mathbf{Q})+O_{P}(h^{3}+\frac{1}{n^{2}})+\sum_{i=1}^{N}\frac{1}{n}\sum_{j=1}^{n}Z_{1}(s_{ij})+Z_{2}(s_{ij})+Z_{3}(s_{ij})
\end{eqnarray*}
The variables $Z_{1},Z_{2},Z_{3}$ have finite and uniformly bounded
variances. Indeed, for all $\| \theta\| _{\infty}\leq C$
and for $i=1,\dots,N$, $\sup\| r_{i}\| \leq C'$,
then the variance of $Z_{2},Z_{3}$ are bounded, and we can ensure
that 
\begin{eqnarray*}
\mathcal{J}_{h,\lambda}(\theta,\boldsymbol{M};\mathbf{U}) & = & \boldsymbol{J}_{h,\lambda}(\theta,\boldsymbol{M};\mathbf{Q})+\rho_{2,W}+\sum_{i=1}^{N}\iint_{\left[0,1\right]^{2}}K_{h}(t-s)\left(t-s\right)^{2}E\left(Z_{2}(\frac{t+s}{2})\right)dtds+O_{P}(h^{3})\\
 & = & \boldsymbol{J}_{h,\lambda}(\theta,\boldsymbol{M};\mathbf{Q})+\rho_{2,W}+\mu_2(K)h^{2}\int_{0}^{1}E\left(Z_{2}(s)\right)ds+O_{p}(h^{3})
\end{eqnarray*}
We recall $E\left(Z_{2}(s)\right)=\| \left[W,A_{\theta}(s)\right]\| _{F}^{2}$,
which indicates the criterion roughly has a bias of order $h^{2}$,
that consist in an additional penalty with respect to the norm $\| \theta\| _{L^{2}}^{2}$. 
\end{proof}}

The propositions show that the stochastic criterion is close to the
function $\boldsymbol{J}_{h,\lambda}(\theta,\boldsymbol{M};\mathbf{Q})$
that corresponds to a weighted distance between $M_{i}(t)$ and $Q_{i}(t)$,
plus a distance between $\theta$ and $\bar{\theta}$, with an additional
penalty term related to $\| \theta\| _{L^{2}}^{2}$. 

}
\noprint{
\begin{thm}
If $U_{ij}$ are observations satisfying (\ref{eq:StatisticalModelNoise}),
and the algorithm has converged to $\left(\hat{\theta}_{h,\lambda},\hat{\boldsymbol{M}}_{h,\lambda}\right)$,
then we have 
\[
\hat{M}_{i}(t,\theta)^{\top}=Q_{i}(t)\exp\left(-\frac{\sum_{j=1}^{n_{j}}K_{h}(t-s_{ij})(t-s_{ij})\left(A_{\theta_{i}}\left(\frac{t+s_{ij}}{2}\right)-A_{\theta}\left(\frac{t+s_{ij}}{2}\right)\right)}{\sum_{j=1}^{n_{i}}K_{h}(t-s_{ij})}+\epsilon_{i}(t)\right)
\]
and $\hat{\theta}_{h,\lambda}$ is a spline, solution of the quadratic
program 
\[
\sum_{i=1}^{N}\sum_{j,q=1}^{n_{i},Q_{i}}K_{h}(u_{ijq})\| u_{ijq}A_{\theta}(v_{ijq})-u_{ijq}A_{\theta_{i}}\left(v_{ijq}\right)+b_{ijq}^{(k)}+\epsilon_{ijq}+W_{ij}\| _{F}^{2}+\lambda\int_{0}^{1}\| \theta^{''}\| ^{2}dt \,.
\]
\end{thm}
\begin{proof}
Use Lemma \ref{LemmaStep1} to give the expression of the bias and
variance for $\hat{M}_{i}(t,\theta)$.
Give the expression of the coefficient for the basis for the spline. 
\end{proof}}
\noprint{
\subsubsection{Convergence of the iterative scheme}
In order to ensure that the alternating algorithm converges, we analyze the behavior of the sequential algorithm and the criterion defining the Karcher mean pointwise:
\[
\hat{r}_{i}(t)=\log\left(\hat{M}_{i}^{\top}(t)Q_{i}(t)\right)
\]
and $\hat{R}_{i}(t,s)=-\frac{1}{t-s}\log\left(\hat{M}_{i}^{\top}(t)Q_{i}(s)\right)$
\begin{lem}[Bias and Variance for $\hat{M}_{i}(t,\theta)$]
\label{LemmaStep1}
If $\hat{M}_{i}$ is bounded, then 
\[
\log\left(\hat{M}_{i}^{\top}(t)Q_{i}(t)\right)=B_{i,h}(t,\theta)+\epsilon_{i,h}(t,\theta)
\]
with 
\[
B_{i,h}(t,\theta)=\frac{\sum_{j=1}^{n_{i}}K_{h}(t-s_{ij})(t-s_{ij})\left(A_{\theta_{i}}\left(\frac{t+s_{ij}}{2}\right)-A_{\theta}\left(\frac{t+s_{ij}}{2}\right)\right)}{\sum_{j=1}^{n_{i}}K_{h}(t-s_{ij})}
\]
and 
{\footnotesize\[
\epsilon_{i,h}(t,\theta)=\frac{\sum_{j=1}^{n_{i}}K_{h}(t-s_{ij})(t-s_{ij})\left[\hat{R}_{i}(t,s_{ij}),W_{ij}\right]}{\sum_{j=1}^{n_{i}}K_{h}(t-s_{ij})}-\frac{\sum_{j=1}^{n_{j}}K_{h}(t-s_{ij})\left(W_{ij}+O\left((t-s_{ij})^{2}\right)\right)}{\sum_{j=1}^{n_{i}}K_{h}(t-s_{ij})}
\]}
Moreover, we have $E\left(\epsilon_{i,h}(t,\theta)\right)=O(h^{2})$(?)
and $V(\epsilon_{i,h}(t))=\Sigma_{h}(t,\theta)$ (?) \comment{--\textgreater{}
give explicit expression?}  \comment{Is this statement true? The proof does not match?}
\end{lem}
}
\noprint{\begin{proof}
We can consider the estimation of each path $M_{i}$ independently.
We start from the necessary and sufficient first order condition for
the minimum of 
\[
F_{t}(M)=\sum_{j=1}^{n_{i}}K_{h}(t-s_{ij})\| \log M^{\top}U_{ij}\exp\left((t-s)A_{\theta}\left(\frac{t+s_{ij}}{2}\right)\right)\| _{F}^{2}
\]
whose gradient is 
\[
\mbox{grad}F_{t}(M)=\sum_{j=1}^{n_{i}}K_{h}(t-s_{ij})\log\left(M^{\top}U_{ij}\exp\left((t-s_{ij})A_{\theta}\left(\frac{t+s_{ij}}{2}\right)\right)\right)
\]
For small $h$, if we write $\hat{M}(t)=\hat{M}_{i}(t,\theta)$,
we have 
\begin{eqnarray*}
\mbox{grad}F_{t}(\hat{M}(t)) & = & \sum_{j=1}^{n_{i}}K_{h}(t-s_{ij})\left(\log\left(\hat{M}(t)^{\top}U_{ij}\right)+(t-s_{ij})A_{\theta}\left(\frac{t+s_{ij}}{2}\right)+O_{P}\left((t-s_{ij})^{2}\right)\right)\\
 & = & \sum_{j=1}^{n_{i}}K_{h}(t-s_{ij})\left(\log\left(\hat{M}(t)^{\top}Q_{i}(s_{ij})\exp\left(W_{ij}\right)\right)+(t-s_{ij})A_{\theta}\left(\frac{t+s_{ij}}{2}\right)+O_{P}\left((t-s_{ij})^{2}\right)\right)\\
 & = & \sum_{j=1}^{n_{i}}K_{h}(t-s_{ij})\left(\log\left(\hat{M}(t)^{\top}Q_{i}(t)Q_{i}(t)^{\top}Q_{i}(s_{ij})\exp\left(W_{ij}\right)\right)+(t-s_{ij})A_{\theta}\left(\frac{t+s_{ij}}{2}\right)+O_{P}\left((t-s_{ij})^{2}\right)\right)\\
 & = & \sum_{j=1}^{n_{i}}K_{h}(t-s_{ij})\left(\log\left(\hat{M}(t)^{\top}Q_{i}(t)\exp\left(-(t-s_{ij})A_{\theta_{i}}\left(\frac{t+s_{ij}}{2}\right)\right)\right)+(t-s_{ij})A_{\theta}\left(\frac{t+s_{ij}}{2}\right)\right)\\
 &  & +\sum_{j=1}^{n_{i}}K_{h}(t-s_{ij})\left(\left[\log\left(\hat{M}(t)^{\top}Q_{i}(s_{ij})\right),W_{ij}\right]+W_{ij}+O_{P}\left((t-s_{ij})^{2}\right)\right)
\end{eqnarray*}
Consequently, we have the identity 
\begin{eqnarray*}
\sum_{j=1}^{n_{i}}K_{h}(t-s_{ij})\log\left(\hat{M}(t)^{\top}Q_{i}(t)\right) & = & \sum_{j=1}^{n_{i}}K_{h}(t-s_{ij})(t-s_{ij})\left(A_{\theta_{i}}\left(\frac{t+s_{ij}}{2}\right)-A_{\theta}\left(\frac{t+s_{ij}}{2}\right)\right)\\
 &  & -\sum_{j=1}^{n_{i}}K_{h}(t-s_{ij})\left(\left[\log\left(\hat{M}(t)^{\top}Q_{i}(s_{ij})\right),W_{ij}\right]+W_{ij}+O_{P}\left((t-s_{ij})^{2}\right)\right)
\end{eqnarray*}
i.e we have 
\begin{eqnarray}
\hat{r}_{i}(t) & = & \frac{\sum_{j=1}^{n_{i}}K_{h}(t-s_{ij})(t-s_{ij})\left(A_{\theta_{i}}\left(\frac{t+s_{ij}}{2}\right)-A_{\theta}\left(\frac{t+s_{ij}}{2}\right)\right)}{\sum_{j=1}^{n_{i}}K_{h}(t-s_{ij})}\label{eq:logQi}\\
 &  & -\frac{\sum_{j=1}^{n_{i}}K_{h}(t-s_{ij})\left(W_{ij}+O_{P}\left((t-s_{ij})^{2}\right)\right)}{\sum_{j=1}^{n_{i}}K_{h}(t-s_{ij})}\nonumber \\
 &  & +\frac{\sum_{j=1}^{n_{i}}K_{h}(t-s_{ij})(t-s_{ij})\left[\hat{R}_{i}(t,s_{ij}),W_{ij}\right]}{\sum_{j=1}^{n_{i}}K_{h}(t-s_{ij})}
\end{eqnarray}
The variables $\left[\log\left(\hat{M}(t)^{\top}Q_{i}(s_{ij})\right),W_{ij}\right]$
are centered random variable and they are small (close to 0). We need
a more tractable expression for the bias term $\frac{\sum_{j=1}^{n_{i}}K_{h}(t-s_{ij})(t-s_{ij})\left(A_{\theta_{i}}-A_{\theta}\right)(\frac{t+s_{ij}}{2})}{\sum_{j=1}^{n_{i}}K_{h}(t-s_{ij})}$
 that can be approximated by the integrals $\frac{\int_{0}^{1}K_{h}\left(t-s\right)\left(t-s\right)\left(A_{\theta_{i}}-A_{\hat{\theta}^{(\ell)}}\right)(\frac{t+s}{2})ds}{\int_{0}^{1}K_{h}\left(t-s\right)ds}$. 

For $t\in\left[0,1\right]$, the integral 
\begin{eqnarray*}
\int_{0}^{1}K_{h}\left(t-s\right)\left(t-s\right)g(\frac{t+s}{2})ds & = & hg(t)\int_{\frac{t-1}{h}}^{t/h}K(u)udu-g'(t)\frac{h^{2}}{2}\int_{\frac{t-1}{h}}^{t/h}K(u)u^{2}du\\
 &  & +\frac{h^{3}}{8}\int_{\frac{t-1}{h}}^{t/h}K(u)u^{3}g''(t_{u})du
\end{eqnarray*}
have different behavior for $t=0,1$ or for $t\in\left]0,1\right[$
for small $h$. Indeed, for $t\in\left]0,1\right[$, we have $\int_{\frac{t-1}{h}}^{t/h}K(u)udu\longrightarrow0$
and $\int_{\frac{t-1}{h}}^{t/h}K(u)du\longrightarrow1$ as $h\longrightarrow0$. This means that 
\[
\int_{0}^{1}K_{h}\left(t-s\right)\left(t-s\right)g(\frac{t+s}{2})ds=-g'(t)\frac{h^{2}}{2}\mu_2(K)+O(h^{3})
\]
For $t=1$, we have $\int_{0}^{1}K_{h}\left(1-s\right)\left(1-s\right)g(\frac{1+s}{2})ds=hg(1)\int_{0}^{1}K(u)udu+O(h^{2})$
and for $t=0$, $\int_{0}^{1}K_{h}\left(s\right)sg(\frac{s}{2})ds=hg(0)\int_{-1}^{0}K(u)udu+O(h^{2})$.
Hence, for $t\in\left]0,1\right[$, 
\[
B_{i,h}(t,\theta)=-\frac{d}{dt}\left(A_{\theta_{i}}-A_{\theta}\right)(t)\times\frac{h^{2}}{2}\mu_2(K)+O(h^{3})
\]
whereas 
\begin{eqnarray*}
B_{i,h}(1,\theta) & = & h\left(A_{\theta_{i}}(1)-A_{\theta}(1)\right)+O(h^{2})\\
B_{i,h}(0,\theta) & = & h\left(A_{\theta_{i}}(0)-A_{\theta}(0)\right)+O(h^{2})
\end{eqnarray*}
\end{proof}
We need also to control the first moments of the $\epsilon_{i,h}(t)$.
Hence, we have 
{\footnotesize\[
E\epsilon_{i,h}(t)\sum_{j=1}^{n_{i}}K_{h}(t-s_{ij})=E\frac{\sum_{j=1}^{n_{i}}K_{h}(t-s_{ij})(t-s_{ij})\left[\hat{R}_{i}(t,s_{ij}),W_{ij}\right]}{}-\frac{\sum_{j=1}^{n_{j}}K_{h}(t-s_{ij})W_{ij}}{\sum_{j=1}^{n_{i}}K_{h}(t-s_{ij})}
\]}
Lemma \ref{LemmaStep1} gives the classical bias and variance decomposition
for the nonparametric estimator of the paths $Q_{i}$. For a given
$\theta$, $h$, $\lambda$, the path estimator can be written 
\[
\hat{M}_{i}(t)=Q_{i}(t)\exp\left(-B_{i}(t,\theta)-\epsilon_{i}(t)\right)
\]
and we have $E\epsilon_{i}(t)=0$, and $V\mbox{vec}\left(\epsilon_{i}(t)\right)=\Sigma_{h}(t)$. \comment{why??} 
We see that the expression for the bias and variance makes a clear relationship with the estimation of the parameter:
\[
B_{i,h}(t_{iq},\theta)=\frac{\sum_{j=1}^{n_{i}}K_{h}(u_{ijq})u_{ijq}\left(A_{\theta_{i}}-A_{\theta}\right)(v_{ijq})}{\sum_{j=1}^{n_{i}}K_{h}(u_{ijq})} \,,
\]
\[
\epsilon_{i,h}(t_{iq},\theta)=\frac{\sum_{j=1}^{n_{i}}K_{h}(u_{ijq})u_{ijq}\left[\hat{R}_{i}(t_{iq},s_{ij}),W_{ij}\right]}{\sum_{j=1}^{n_{i}}K_{h}(u_{ijq})}-\frac{\sum_{j=1}^{n_{j}}K_{h}(u_{ijq})\left(W_{ij}+O\left((t_{iq}-s_{ij})^{2}\right)\right)}{\sum_{j=1}^{n_{i}}K_{h}(u_{ijq})} \,.
\]
In particular, the bias is tightly related to the estimation of $\bar{\theta}$.
The integrated bias is directly controlled by the distance to the
mean curvatures $\bar{\theta}$. 

For the estimation of $\theta$ in the alternating optimization scheme, we provide an approximation of the pseudo-observations $\tilde{R}_{ijq}^{(\ell+1)}=-\frac{1}{t_{iq}-s_{ij}}\log\left(\hat{M}_{i}^{(\ell+1)}(t_{iq})^{\top}U_{ij}\right)$ at the $\ell$th iteration.

We can express $\tilde{R}_{ijq}^{(\ell+1)}$ as function of the previous parameter $\hat{\theta}^{(\ell)}$ thanks to the approximation obtained in the previous lemma \ref{LemmaStep1}. Indeed \comment{what does this mean, still no reference to $\theta^{(l)}$ in the expansion?}, we have 
\begin{eqnarray*}
u_{ijq}\tilde{R}_{ijq}^{(\ell+1)} & = & -\log\left(\hat{M}_{i}^{(\ell+1)}(t_{iq})^{\top}Q_{i}(s_{ij})\exp\left(W_{ij}\right)\right)\\
 & = & -\log\left(\hat{M}_{i}^{(\ell+1)}(t_{iq})^{\top}Q_{i}(t_{ij})Q_{i}(t_{iq})^{\top}Q_{i}(s_{ij})\exp\left(W_{ij}\right)\right)\\
 & = & -\log\left(\hat{M}_{i}^{(\ell+1)}(t_{iq})^{\top}Q_{i}(t_{iq})\right)+u_{ijq}A_{\theta_{i}}\left(v_{ijq}\right)-W_{ij}+O\left(u_{ijq}^{2}\right)\\
 & = & -\hat{r}_{i}^{(\ell+1)}(t_{iq})+u_{ijq}A_{\theta_{i}}\left(v_{ijq}\right)-W_{ij}+O\left(u_{ijq}^{2}\right)
\end{eqnarray*}
\begin{prop}
\label{LemmaStep2} 
The second step (\ref{eq:Profiling_theta}) is equivalent to minimize
{\footnotesize\begin{eqnarray*}
\mathcal{J}{}_{h,\lambda}\left(\theta,\widehat{\boldsymbol{M}}^{(\ell+1)};\boldsymbol{U}\right) & = & \sum_{i,j,q=1}^{N,n_{i},Q_{i}}\frac{1}{n_{i}Q_{i}}K_{h}(u_{ijq})\| u_{ijq}A_{\theta}(v_{ijq})-u_{ijq}\tilde{R}_{ijq}^{(\ell+1)}\| _{F}^{2}+\lambda\int_{0}^{1}\| \theta^{''}\| ^{2}dt \,.
\end{eqnarray*}}
\end{prop} }
\noprint{\begin{proof}
For a current $\hat{\theta}^{(\ell)}$, we have 
\begin{eqnarray*}
u_{ijq}A_{\theta}(v_{ijq})-u_{ijq}\tilde{R}_{ijq}^{(\ell+1)} & = & u_{ijq}A_{\theta}(v_{ijq})-u_{ijq}A_{\theta_{i}}\left(v_{ijq}\right)\\
 &  & +\hat{r}_{i}^{(\ell+1)}(t_{iq})+W_{ij}+O\left(u_{ijq}^{2}\right)
\end{eqnarray*}
By lemma \ref{LemmaStep1}, we have  
\begin{eqnarray*}
u_{ijq}A_{\theta}(v_{ijq})-u_{ijq}\tilde{R}_{ijq}^{(\ell)} & = & u_{ijq}A_{\theta}(v_{ijq})-u_{ijq}A_{\theta_{i}}\left(v_{ijq}\right)\\
 &  & +B_{i}(t_{iq},\hat{\theta}^{(\ell)})+O(u_{ijq}^{2})\\
 &  & +W_{ij}+\epsilon_{i}(t_{iq},\hat{\theta}^{(\ell)})
\end{eqnarray*}
The new estimates $\tilde{\theta}_{k}^{(\ell+1)}$ are defined by
minimizing 
\[
\sum_{ijq}^{N,n_{i},Q_{i}}\omega_{ijq}\| \theta(v_{ijq})-r_{ijq}^{k,(\ell)}\| ^{2}+\lambda\int_{0}^{1}\theta^{''2}(t)dt
\]
and the solution is given by a basis expansion $\hat{\theta}_{i}(v)=\sum_{j,q=1}\alpha_{ijq}S_{jq}(v)$.
The parameter $\boldsymbol{\alpha}_{i}$ to be estimated is solution
the quadratic program. The criterion $\mathcal{J}{}_{h,\lambda}\left(\theta,\boldsymbol{M}^{(\ell+1)};\boldsymbol{U}\right)$
can be written again as 
\[
\frac{1}{2}\sum_{\underset{u_{ijq}\neq0}{i,j,q=1}}^{N,n_{i},Q_{i}}\omega_{ijq}\| A_{\theta}(v_{ijq})-A_{\theta_{i}}\left(v_{ijq}\right)+\frac{1}{u_{ijq}}\left\{ B_{i}(t_{iq},\hat{\theta}^{(\ell)})+W_{ij}+\epsilon_{i}(t_{iq},\hat{\theta}^{(\ell)})+O(u_{ijq}^{2})\right\} \| _{F}^{2}+\lambda\int_{0}^{1}\| \theta^{''}\| ^{2}dt
\]
We denote the random perturbation by $\eta_{ijq}^{(\ell)}=\frac{1}{u_{ijq}}\left\{ W_{ij}+\epsilon_{i}(t_{iq},\hat{\theta}^{(\ell)})\right\} +O(u_{ijq})$,
that is slightly biased, and has a bigger variance than the original
observations $W_{ij}$. 
\[
\frac{1}{2}\sum_{\underset{u_{ijq}\neq0}{i,j,q=1}}^{N,n_{i},Q_{i}}\omega_{ijq}\| A_{\theta}(v_{ijq})-A_{\theta_{i}}\left(v_{ijq}\right)+\frac{1}{u_{ijq}}B_{i}(t_{iq},\hat{\theta}^{(\ell)})+\eta_{ijq}^{(\ell)}\| _{F}^{2}+\lambda\int_{0}^{1}\| \theta^{''}\| ^{2}dt
\]
We split the deterministic part and the random part in two different
section 
\[
\Gamma_{ijq}=\| u_{ijq}A_{\theta}(v_{ijq})-u_{ijq}A_{\theta_{i}}\left(v_{ijq}\right)+B_{i}(t_{iq},\hat{\theta}^{(\ell)})+W_{ij}+\epsilon_{i}(t_{iq},\hat{\theta}^{(\ell)})\| _{F}^{2}
\]
 i.e 
\begin{eqnarray*}
\Gamma_{ijq} & = & \| A_{\theta}(v_{ijq})-A_{\theta_{i}}\left(v_{ijq}\right)+\frac{1}{u_{ijq}}B_{i}(t_{iq},\hat{\theta}^{(\ell)})\| _{F}^{2}\\
 &  & +\| \eta_{ijq}^{(\ell)}\| _{F}^{2}\\
 &  & +2\left\langle A_{\theta}(v_{ijq})-A_{\theta_{i}}\left(v_{ijq}\right)+\frac{1}{u_{ijq}}B_{i}(t_{iq},\hat{\theta}^{(\ell)}),\eta_{ijq}^{(\ell)}\right\rangle 
\end{eqnarray*}
\end{proof}}
\section{Numerical studies} \label{sec:numeric}

We conduct limited simulation studies to assess performance of the proposed methods, focusing on understanding the effect of pre-processing in the case of indirectly observed Frenet paths and the effect of tuning parameters with respect to the sample size. 

Our reference shape is defined by
\[
\begin{cases}
\bar\kappa(s)= & \exp(\zeta\sin(s))\\
\bar\tau(s)= & \eta s-0.5
\end{cases},\:\zeta=1,\,\eta=0.2 \,.
\]
and we set $s \in [0, 5]$.
We consider the estimation problem for both a single curve and multiple curves with directly observed Frenet paths and indirectly observed Frenet paths from Euclidean curves under this reference model. Finally, we add a case of the Euclidean curves with unknown parameter model. All simulation models are repeated for 100 times. We have not run an exhaustive search for the best hyperparameters but run standard 10 fold cross validation in a small selected parameter set for each case to reduce computational cost by

\subsection{Observations are a single noisy Frenet Path \label{subsec:SingleFrenetPath}} 

Given $\theta = (\bar\kappa,\bar\tau)$, the observation model is defined as 
\[
U_{j}=Q(s_{j})M_{j}\,, \quad j=1,\ldots, n\,,
\] 
where $Q$ is the solution to $\dot{Q}=A_{\theta}Q$ and $Q(0)=Q^{0}$ with $Q^{0}\sim\mathcal{F}(I_{3},\alpha)$ and random rotations $M_{j} \sim\mathcal{F}(I_{3},\alpha)$, Fisher-Langevin distribution with mean identity and concentration $\alpha$. We assume that $Q(0)$ is also unknown. 

We consider different sample sizes $n=100,200$ and noise levels $\alpha=5,10$. We select the hyperparameters in $h\in\left\{ 0.3,0.5,0.7\right\}$ and $\lambda_{1},\lambda_{2}\in\left\{ 10^{k}\vert\,k=1,0,-1,-2\right\}$ by cross validation. 
For evaluating the quality of the Lie smoother, we compare the distance
between the noisy data and the true Frenet path ($\Delta_{O}$) and the distance between the true and the smoothed Frenet path ($\Delta_{FS}$) :
\[
\Delta_O=\frac{1}{n}\sum_{i=1}^{n}d\left(U_{j},Q(s_j)\right)\,,\quad \Delta_{FS}=\frac{1}{n}\sum_{i=1}^{n}d\left(\hat{Q}^{FS}(s_j),Q(s_j)\right) \,.
\]
We also compute the $L^{2}$ distance between the curvature and torsion estimates
$\hat{\kappa}^{FS},\hat{\tau}^{FS}$ and the true parameters.
The results are summarized in Table \ref{tab:S1.1}.

\begin{table}
\caption{\label{tab:S1.1} Estimation error from a single noisy Frenet path.}
\centering
\begin{tabular}{|c|c|c|c|c|c|}
\hline 
$n$ & $\alpha$ & $\Delta_{O}$ & $\Delta_{FS}$ & $\|\hat{\kappa}^{FS}-\bar\kappa\|_{L^{2}}^{2}$ & $\|\hat{\tau}^{FS}-\bar\tau\|_{L^{2}}^{2}$ \\ 
\hline 
\hline 
$100$ & $5$ & 0.74 (0.03) & 0.18 (0.03) & 0.32 (0.26) & 0.15 (0.22)\\ 
\hline 
$100$ & $10$ & 0.51 (0.02) & 0.13 (0.02) & 0.18 (0.23) & 0.06 (0.09)\\ 
\hline 
$200$ & $5$ & 0.74 (0.03) & 0.13 (0.02) & 0.19 (0.16) & 0.09 (0.14)\\ 
\hline 
$200$ & $10$ & 0.51 (0.02) & 0.09 (0.02) & 0.12 (0.12) & 0.04 (0.08)\\ 
\hline 
\end{tabular}
\end{table}

\subsection{Observations are a single noisy Euclidean Curve\label{subsec:SingleEuclideanCurve}}

We observe noisy observations $y_j \in \mathbb{R}^3$ where 
\begin{equation}
y_{j}=X(s_{j})+\sigma\epsilon_{j},\:j=1,\dots,n\,. \label{eq:DGP_X_SIngle}
\end{equation}
Here $s\mapsto X(s)$ has a Frenet path $Q(s)$ solution of the ODE
$Q^{\prime}(s)=A_{\theta}(s)Q(s)$.
We consider sample sizes $n=100,\,200$ and $\sigma=0.05$ or $\sigma=0.02$. 
An added difficulty with this setting is related to defining a preliminary estimate of the Frenet path.
As a preprocessing step, we nonparametrically estimate the higher-order derivatives of $X$, $X^{(k)}, k=1,2,3$ from the noisy observations $y_{1},\dots,y_{n}$. These derivatives can be very noisy and are used for computing raw estimates $\hat{Q}(s_{j}),\,j=1,\dots,n$. We consider
two methods for deriving these estimates: 
\begin{description}
\item [{$\hat{Q}^{GS}(s)$}] obtained by Gram-Schmidt orthonormalization
of the frame $\left[X^{(1)}\vert X^{(2)}\vert X^{(3)}\right]$. The
derivatives are estimated by a standard local polynomial of order
4. With the same derivative estimates, we can compute the estimators
of the curvature $\hat{\kappa}^{Ext}$ and torsion $\hat{\tau}^{Ext}$
using the extrinsic formulas.
\item [{$\hat{Q}^{LP}(s)$}] obtained by constrained
nonparametric smoothing of $X$. Instead of the standard local polynomial,
we use a local expansion that uses the orthogonal vectors $T,N,B$:
{\small\[
X(s+h)=X(s)+\left(h-\frac{h^{3}\kappa^{2}(s)}{6}\right)T(s)+\left(\frac{h^{2}\kappa(0)}{2}+\frac{h^{3}\kappa^{\prime}(s)}{6}\right)N(s)+\frac{h^{3}\kappa(s)\tau(s)}{6}B(s)+o(s^{3})
\]}
\end{description}
The latter is used to construct our estimator $\hat{Q}^{FS}$ as well as $\hat{\kappa}^{FS}$ and $\hat{\tau}^{FS}$. For comparison, we also compute the corresponding extrinsic estimators $\hat{\kappa}^{Ext}$ and $\hat{\tau}^{Ext}$. The hyperparameters are selected from $h\in\left\{ 0.3,0.5\right\} $, $\lambda_{1}\in\left\{ 10^{k}\vert k=0,-1,-2,-3\right\} $
and $\lambda_{2}\in\left\{ 10^{k}\vert k=2,1,0,-1\right\}$ by cross validation. The results are summarized in Table~\ref{tab:S1.2}. The comparison of $\hat{Q}^{FS}$ with $\hat{Q}^{GS}$ shows that our procedure with Frenet-Serret equation really smooths the $Q_{i}$'s and gives a stable estimate of curvature and torsion.

\begin{table}
\caption{\label{tab:S1.2} Estimation error from a single noisy Euclidean curve. Outliers are removed in extrinsic estimates.}
\centering
\begin{tabular}{|c|c|c|c|c|c|c|c|} \hline
$n$ &$\sigma$ & $\Delta_{GS}$ & $\Delta_{FS}$ & $\|\hat{\kappa}^{Ext}-\bar\kappa\| _{L^{2}}^{2}$ & $\|\hat{\kappa}^{FS}-\bar\kappa\|_{L^{2}}^{2}$ &$\|\hat{\tau}^{Ext}-\bar\tau\|_{L^{2}}^{2}$ & $\|\hat{\tau}^{FS}-\bar\tau\|_{L^{2}}^{2}$\\
\hline 
$100$ & $0.02$ & 0.4 (0.09) & 0.2 (0.07) & 8.3 (8.9) & 0.15 (0.08) & 145.7 (274) & 3.52 (4.3)\\
\hline 
$200$ & $0.02$ & 0.3 (0.08) & 0.17 (0.06) & 7 (10) & 0.12 (0.06)& 145 (297) & 3.48 (4.4)\\
\hline 
$100$ & $0.05$ & 0.7 (0.11) & 0.37 (0.14) & 20.6 (12.8) & 0.43 (0.3)& 4605 (2082) & 7 (7.7)\\
\hline 
$200$ & $0.05$ & 0.54 (0.1) & 0.29 (0.1) & 13.2 (11.3) & 0.36 (0.23)& 267 (7.2) & 7.2 (7.5)\\
\hline 
\end{tabular}
\end{table}

\subsection{Observations are a population of noisy Frenet Paths\label{subsec:PopulationFrenetPath}}

We simulate a population of random Frenet paths $s\mapsto Q_{i}(s),\:i=1,\dots,N$
generated by random Frenet-Serret equations with random individual
shape parameter $\theta_{i}$ obtained as 
\begin{equation}
\begin{cases}
\kappa_{i}= & \left|\bar{\kappa}+\sigma_{\kappa}\zeta_{i}^{^{1}}\right|\\
\tau_{i}= & \bar{\tau}+\sigma_{\tau}\zeta_{i}^{2}
\end{cases}\label{eq:RandomTheta}
\end{equation}
where $\zeta_{i}^{^{1}},\,\zeta_{i}^{2}\;i=1,\dots,N$ are centered
independent Gaussian processes with (unit) Mat\'ern covariance functions\footnote{$k(s,s')=\frac{1}{\Gamma(\nu)2^{\nu-1}}\left(\frac{\sqrt{2\nu}}{\ell}\left|s-s'\right|\right)^{\nu}K_{\nu}\left(\frac{\sqrt{2\nu}}{\ell}\left|s-s'\right|\right)$}
with $\nu=\frac{5}{2}$ and characteristic length scale $\ell=1$.
This means that the random functions are twice differentiable, and
the functions $\bar{\kappa},\bar{\tau}$ are respectively the means
of the population $\left(\kappa_{i}\right)_{i=1\dots N}$ and $\left(\tau_{i}\right)_{i=1\dots N}$. 
The number of observations $n_{i}$ per curve is relatively small, $n_{i}=25$. We set $\sigma_{\kappa}=\sigma_{\tau}=0.3$.
We consider two simulation settings: 
\begin{enumerate}
\item Exact observations: $U_{ij}=Q_{i}(s_{ij})$ where $\dot{Q}_{i}=A_{\theta_{i}}Q_{i}$
and $Q_{i}(0)=Q_{i}^{0}$, with $Q_{i}^{0}\sim\mathcal{F}(I_{3},\alpha)$ with $\alpha=10$.
\item Noisy observations: $U_{ij}=Q_{i}(s_{ij})M_{ij}$ where $\dot{Q}_{i}=A_{\theta_{i}}Q_{i}$
and $Q_{i}(0)=Q_{i}^{0}$, with $Q_{i}^{0}\sim\mathcal{F}(I_{3},\alpha)$,
random rotations $M_{ij}\sim\mathcal{F}(I_{3},\alpha)$ ($\alpha=10$). 
\end{enumerate}
We compare the two estimators $\hat{\theta}^{ind}=\frac{1}{N}\sum_{i=1}^{N}\hat{\theta}_{i}$,
where $\hat{\theta}_{i}$ is the estimate of the individual curvature
$\theta_{i}$ and $\hat{\theta}^{pop}$, which is obtained by solving
the estimation criterion $\mathcal{J}(\theta,\boldsymbol{M};\boldsymbol{U})$
for the $N$ curves simultneously. Finally, we compute $\| \hat{\theta}^{ind}-\bar{\theta}\| _{L^{2}}^{2}$
and $\| \hat{\theta}^{pop}-\bar{\theta}\| _{L^{2}}^{2}$
and the the prediction errors
\[
\Delta_{ind}=\frac{1}{Nn}\sum_{i,j=1}^{N,n}d\left(\hat{Q}_{ij}^{ind},Q_{ij}\right)\:\text{and}\:\Delta_{pop}=\frac{1}{Nn}\sum_{i,j=1}^{N,n}d\left(\hat{Q}_{ij}^{pop},Q_{ij}\right).
\]
The hyperparameters are selected by cross validation from $h\in\left\{ 0.4,0.6\right\} $, $\lambda_{1}\in\left\{ 10^{-6},10^{-8},10^{-10}\right\} $ and $\lambda_{2}\in\left\{ 10^{-6},10^{-8},10^{-10}\right\}$. The results for $N=25$ is shown in Table~\ref{tab:S2.1}, showing comparable performance between the two settings.
\begin{table}
\caption{\label{tab:S2.1} Estimation error from a population of noisy Frenet paths.}
\centering
\begin{tabular}{|c|c|c|c|c|c|c|}
\hline 
 & $\| \hat{\kappa}^{ind}-\bar{\kappa}\| _{L^{2}}^{2}$ & $\| \hat{\kappa}^{pop}-\bar{\kappa}\| _{L^{2}}^{2}$ & $\| \hat{\tau}^{ind}-\bar{\tau}\| _{L^{2}}^{2}$ & $\| \hat{\tau}^{pop}-\bar{\tau}\| _{L^{2}}^{2}$ & $\Delta_{ind}$ & $\Delta_{pop}$\\
\hline 
\hline 
Exact & 0.02 (0.01) & 0.02 (0.01) & 0.017 (0.014) & 0.017 (0.014) & 19.8 (0.9) & 19.8 (0.9)\\
\hline 
Noisy & 0.14 (0.07) & 0.12 (0.04) & 0.016 (0.014)  & 0.1 (0.04) & 20 (0.74) & 19.9 (0.8)\\
\hline 
\end{tabular}
\end{table}

\subsection{Observations are a population of noisy Euclidean Curve \label{subsec:PopulationEuclideanCurves}}

We consider that the data are noisy measurements of $X_{i},\:i=1,\dots,N$ with random
individual shape parameter $\theta_{i}$ simulated as in (\ref{eq:RandomTheta}).
\[
y_{ij}=X_{\theta_i}(s_{ij})+\sigma_{e}\epsilon_{ij}, i=1,\ldots,N, j=1,\ldots,n_i \,.
\]
The $X$ curves are already arclength parametrized and the functions $\bar{\kappa},\bar{\tau}$ defines the mean shape corresponding to the population of Euclidean curves $X_{i},\:i=1,\dots,N$. In this setting,  our estimator of the mean curvature targets the mean parameter. 
We set $\sigma_{\kappa}=0.2$ and $\sigma_{\tau}=0.08$ in (\ref{eq:RandomTheta}) and assume $n_{i}=50$ and $N=25$. 
The hyperparameters are selected from $h\in\left\{ 0.3,0.5\right\} $, $\lambda_{1}\in\left\{ 1,10^{-1},10^{-2},10^{-3}\right\}$, $\lambda_{2}\in\left\{ 10,1,10^{-1},10^{-2}\right\}$ by cross validation. 
Performance of the two estimators $\hat{\theta}^{ind}=\frac{1}{N}\sum_{i=1}^{N}\hat{\theta}_{i}$ and the mean estimator $\hat{\theta}^{pop}$ are compared in Table~\ref{tab:S2.2a}. 
\begin{table}
\caption{\label{tab:S2.2a} Estimation error from a population of Euclidean curves with known curvature model.}
\centering
\begin{tabular}{|c|c|c|c|c|c|c|}
\hline 
$\sigma_{e}$ & $\| \hat{\kappa}^{ind}-\bar{\kappa}\| _{L^{2}}^{2}$ & $\| \hat{\kappa}^{pop}-\bar{\kappa}\| _{L^{2}}^{2}$ & $\| \hat{\tau}^{ind}-\bar{\tau}\| _{L^{2}}^{2}$ & $\| \hat{\tau}^{pop}-\bar{\tau}\| _{L^{2}}^{2}$ & $\Delta_{ind}$ & $\Delta_{pop}$\\
\hline 
\hline 
$0$ & 0.15 (0.01) & 0.14 (0.01) & 0.02 (0.03) & 0.03 (0.04) & 2.3 (0.22) & 2.3 (0.2)\\
\hline 
$0.05$ & 0.5 (0.05) & 0.44 (0.05) & 0.12 (.14) & 0.17 (0.2) & 2.9 (0.29) & 2.95 (0.34)\\
\hline 
\end{tabular}
\end{table}

\subsection{Euclidean curves with unknown parameter model}

Up to now, the simulation models were defined with a true mean parameter. Our final example is constructed to investigate the case when the true mean parameter is implicitly defined. 
We consider a parametric curve defined by $x_{1}(t)=\cos(at),$ $x_{2}(t)=\sin(bt),$ and $x_{3}(t)=ct$, for $t\in\left[0,5\right]$, see figure \ref{fig:ParametricCurve}. We denote by $\varphi=(a,b,c)$ the parameter. The corresponding curvature and torsions are parametric functions that depends then on $\varphi$. Individual parameters are simulated from ${\varphi}_{i}\sim\mathcal{N}\left({\varphi}_{ref},\sigma_{P}\right)$ where $\sigma_{P}>0$ is the population variability and ${\varphi}_{ref}=(1,.9,.8)$. The corresponding generalised curvatures are denoted by $\theta_{ref}$ and $\theta_i, i = 1, \ldots, N$, respectively. The measurements are then obtained from 
\[
y_{ij}=x_{i}(t_{j})+\sigma_{e}\epsilon_{ij} \,,\quad i=1,\ldots, N; \,, j=1,\ldots,n_i\,.
\]
In the simulation, we set $N=25$, with regular sampling times $n=50$ and vary the noise level by $\sigma_{P}^{2}=0.02$ or $\sigma_{P}^{2}=0.04$ and $\sigma_{e}^{2}=0.01$ or $\sigma_{e}^{2}=0.04$.
For comparison, we identify the true parameter $\theta_i$ and $\theta_{ref}$ by numerical methods.
\begin{figure}
\begin{centering}
\includegraphics[scale=0.35]{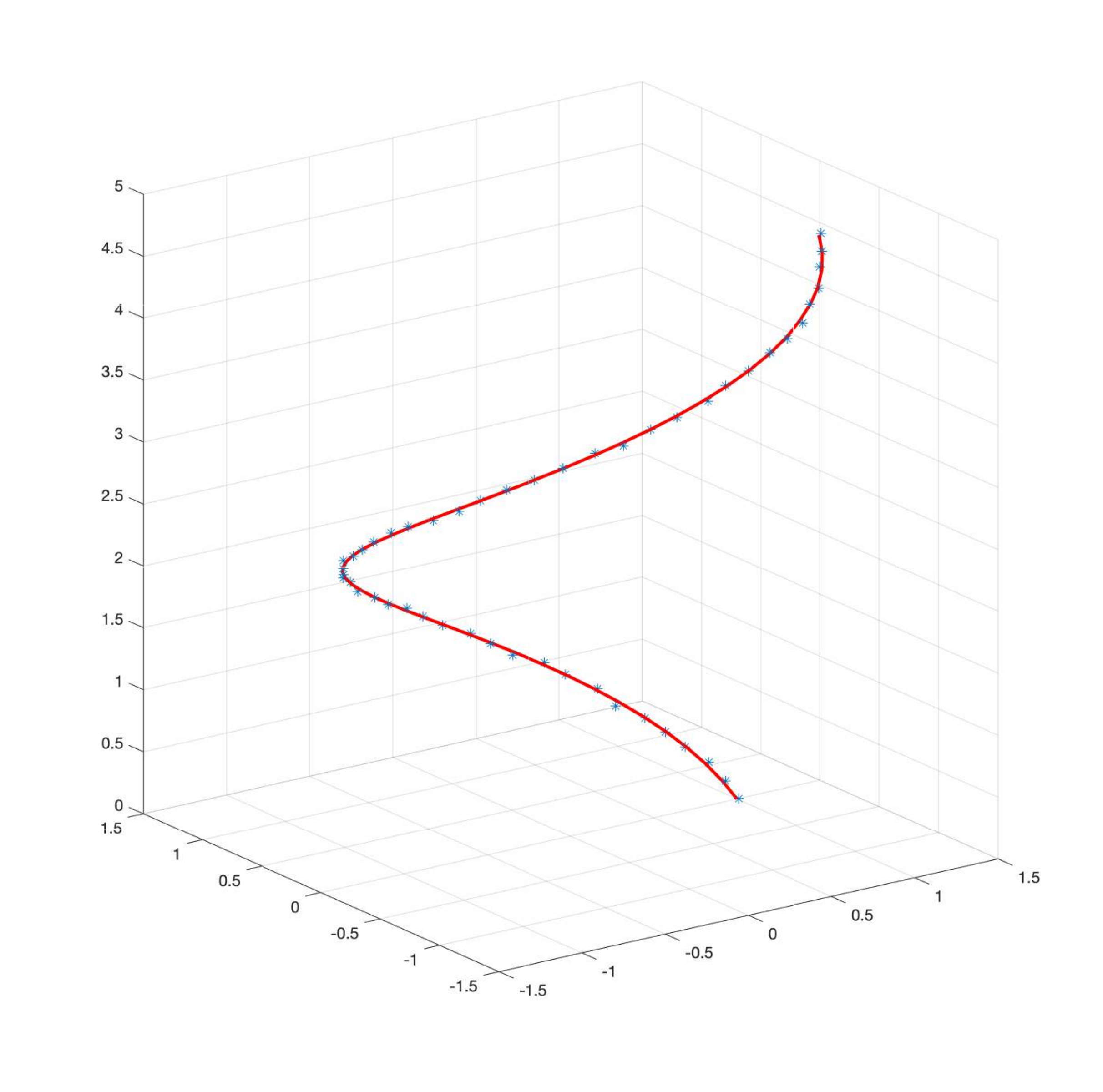}
\par\end{centering}
\caption{The parametric curve with parameters $\varphi_{ref}$, with noisy
observation points, $n=50$ and $\sigma_{e}^{2}=0.01$. \label{fig:ParametricCurve}}
\end{figure}
In general, numerical computation of $\kappa$ and $\tau$ from discretized observations on a grid is a very unstable process and it is an ill-posed problem. We compute the underlying parameters from discrete data. 

Our methodology requires arc-length parametrized curves of the same length $L$, while the curves $x_{i},i=1,\dots N$ may have different lengths $L_{i}$. For each curve $x_{i}$ , we smooth the observations $\left\{ y_{ij},j=1,\dots n\right\} $ with local polynomial and estimate arclength $s_{i}(t)$ and Frenet Path $s\mapsto Q_{i}(s)$ with the nonparametric estimates of the derivatives as previously. As the reference length $L_{ref}$, we use the smallest length, i.e $L_{ref}\triangleq\min\left\{ s_{i}(T),i=1,\dots N\right\}$ so that we have a family of curves $X_{i}$ and Frenet Path $Q_{i}$, with corresponding generalized curvature $\theta_{i}$, defined on $\left[0,L_{ref}\right]$. We define the population curvature as $\bar{\theta}\triangleq\frac{1}{N}\sum_{i=1}^{N}\theta_{i}$ on $\left[0,L_{ref}\right]$, and because of the nonlinearity, we have $\bar{\theta}\neq\theta_{ref}$ in general, compared in Figure~\ref{fig:parComp}. Nevertheless, when $\sigma_{P}$ is relatively small (i.e lower than $0.05$ in our case), the geometry of the curves varies but the main features are preserved, meaning the curvatures $\theta_{i}$ varies around $\theta_{ref}$, such that $\theta_{ref}\approx\bar{\theta}$.
\begin{figure}
\begin{centering}
\begin{tabular}{cc}
\includegraphics[scale=0.35]{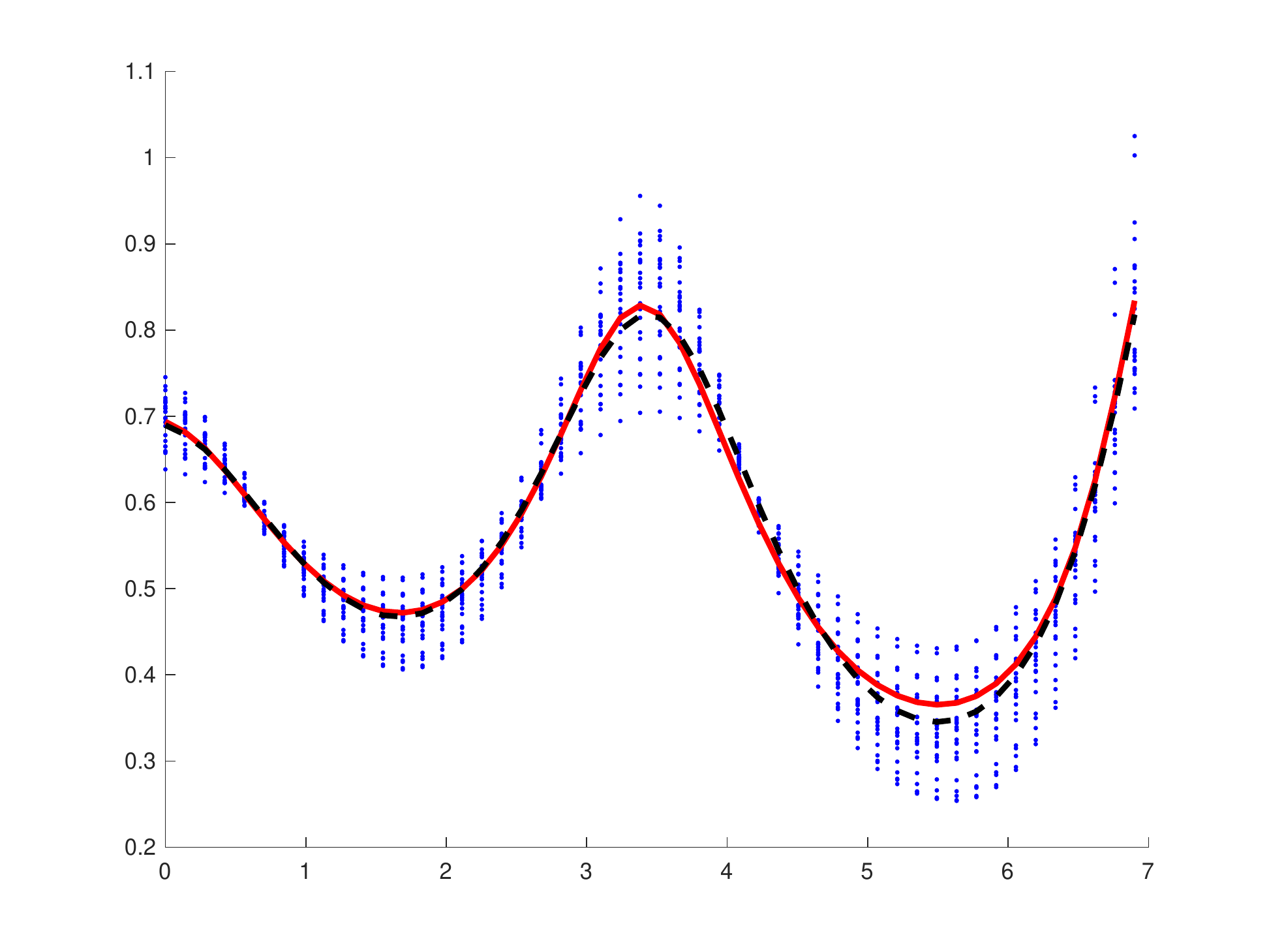} & \includegraphics[scale=0.35]{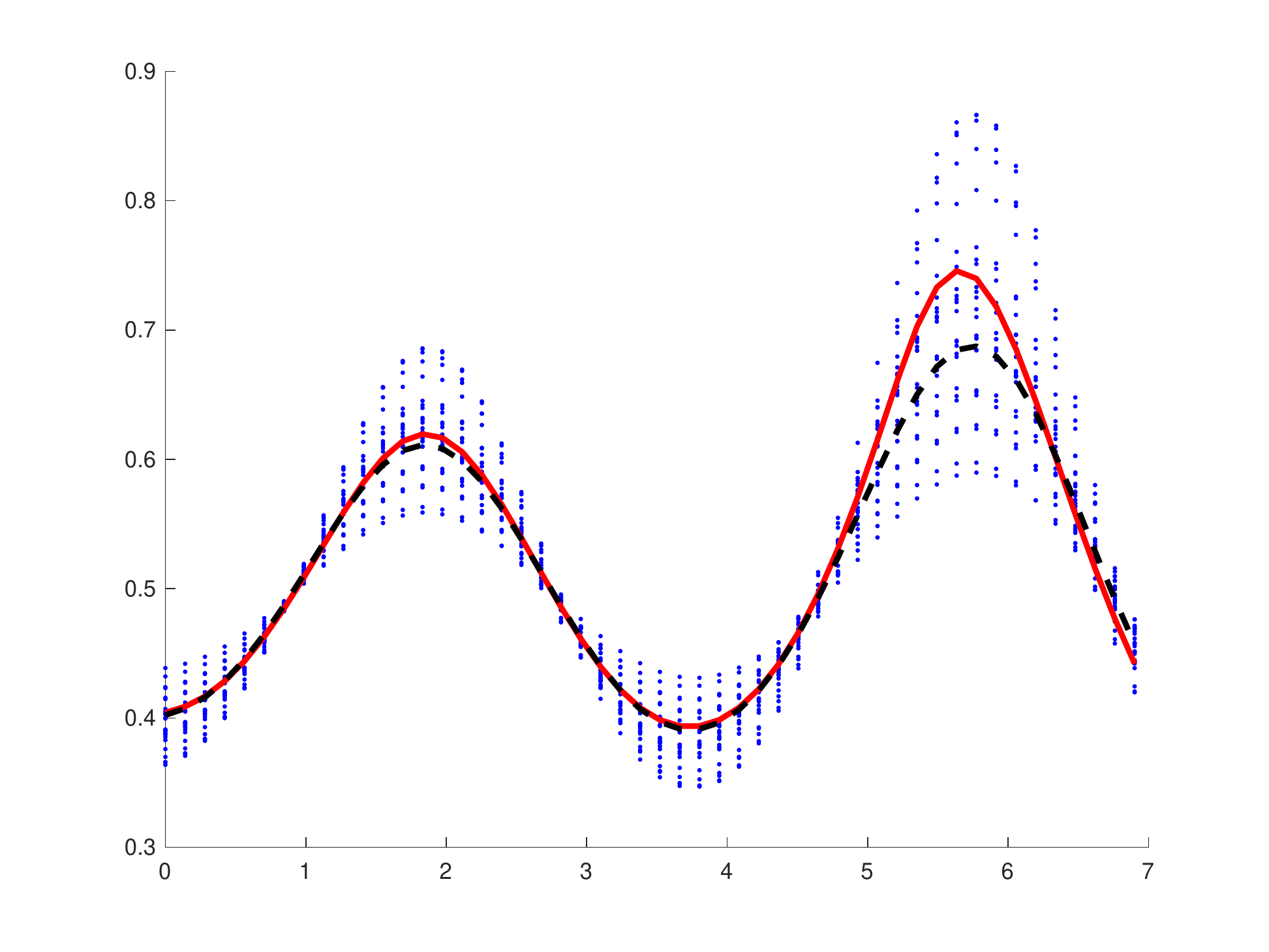}\tabularnewline
\end{tabular}
\caption{\label{fig:parComp} Difference between $\theta_{ref}$ (Dashed black line) and $\bar{\theta}$ (Solid Red Line), and the individual curvatures $\theta_{i}$ (Dotted Lines). Curvature on the Left, Torsion on the Right.}
\end{centering}
\end{figure}
We compare the mean parameter estimates $\hat{\theta}^{pop}$ and $\hat{\theta}^{ind} = (1/N) \sum_{i=1}^{N}\hat\theta_i$ with $\bar\theta$ in Table~\ref{tab:S3}. 
The hyperparameters were selected from $h=0.3$ and $\lambda_{1},\lambda_{2}\in\left\{10^{-k}, k=1, \ldots, 5\right\}$.
For comparison, we also include the empirical mean estimate $\hat{\theta}^{ind}_{Ext}$ obtained from the individual estimates $\hat{\theta}_{i}^{Ext}$ computed with the extrinsic formula. 
\begin{table}
\caption{\label{tab:S3} Estimation error from noisy Euclidean curves with unknown parameter model.}
\centering
\hspace*{-1cm}\begin{tabular}{|c|c|c|c|c|c|c|c|}
\hline 
$\sigma_{P}^{2}$ & $\sigma_{e}^{2}$ & $\| \hat{\kappa}^{ind}_{Ext}-\bar{\kappa}\| _{L^{2}}^{2}$ & $\| \hat{\kappa}^{ind}-\bar{\kappa}\| _{L^{2}}^{2}$ & $\| \hat{\kappa}^{pop}-\bar{\kappa}\| _{L^{2}}^{2}$ & $\| \hat{\tau}^{ind}_{Ext}-\bar{\tau}\| _{L^{2}}^{2}$ & $\| \hat{\tau}^{ind}-\bar{\tau}\| _{L^{2}}^{2}$ & $\| \hat{\tau}^{pop}-\bar{\tau}\| _{L^{2}}^{2}$\tabularnewline
\hline 
\hline 
\multirow{2}{*}{$0.02$} & $0.01$ & 0.06 (0.026) & 0.14 (0.028) & 0.095 (0.015) & 0.37 (1.1) & 0.03 (0.02) & 0.046 (0.02)\tabularnewline
\cline{2-8} \cline{3-8} \cline{4-8} \cline{5-8} \cline{6-8} \cline{7-8} \cline{8-8} 
 & $0.04$ & 2.3 (2.5) & 0.13 (0.02) & 0.09 (0.017) & 1.5 (1.1) & 0.08 (0.06) & 0.1 (0.08)\tabularnewline
\hline 
\multirow{2}{*}{$0.04$} & $0.01$ & 0.06 (0.03) & 0.12 (0.04) & 0.08 (0.02) & 0.32 (0.4) & 0.04 (0.02) & 0.04 (0.01)\tabularnewline
\cline{2-8} \cline{3-8} \cline{4-8} \cline{5-8} \cline{6-8} \cline{7-8} \cline{8-8} 
 & $0.04$ & 3.4 (3.2) & 0.1 (0.03) & 0.07 (0.02) & 2.6 (8.8) & 0.07 (0.05) & 0.1 (0.08)\tabularnewline
\hline 
\end{tabular}
\end{table}
The results suggest that extrinsic formula may be useful when there is a small observation noise, but it degrades fast with $\sigma_{e}^{2}$, more susceptible to the errors on the boundaries. An example of corresponding estimates is shown in Figure \ref{fig:LargeSmall-LargeHigh} comparing two noise levels when $\sigma_P^2 = 0.04$.
We can see the torsion is very hard to estimate with the extrinsic formula, but it remains stable with our proposed method, while we use exactly the same nonparametric estimates of the derivatives, but our approach removes oscillations and noise, and in addition we have a joint estimation of $\kappa$ and $\tau$ which makes the global shape more faithful.

\begin{figure}
\begin{centering}
\begin{tabular}{cc}
\includegraphics[scale=0.35]{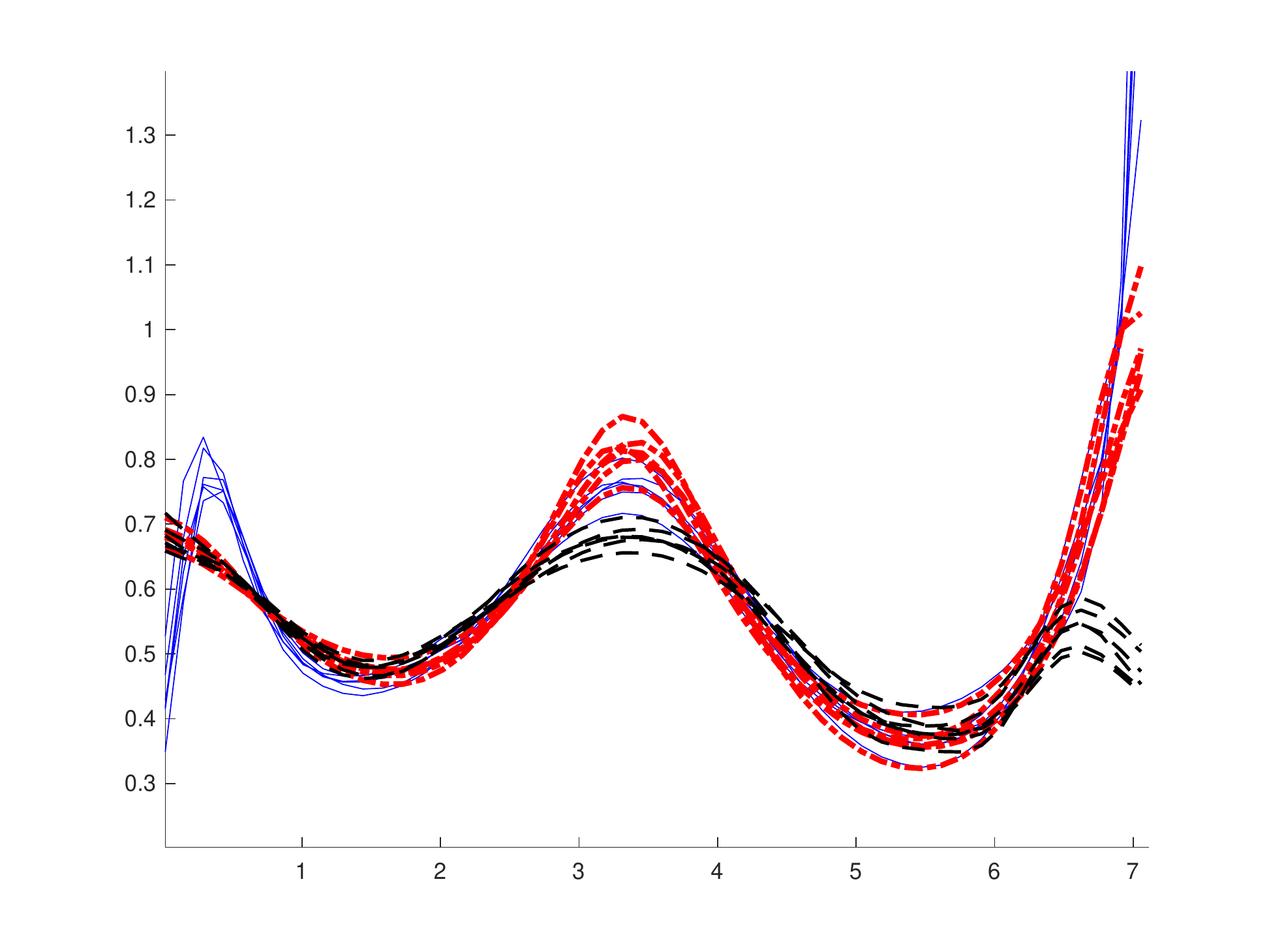} & \includegraphics[scale=0.35]{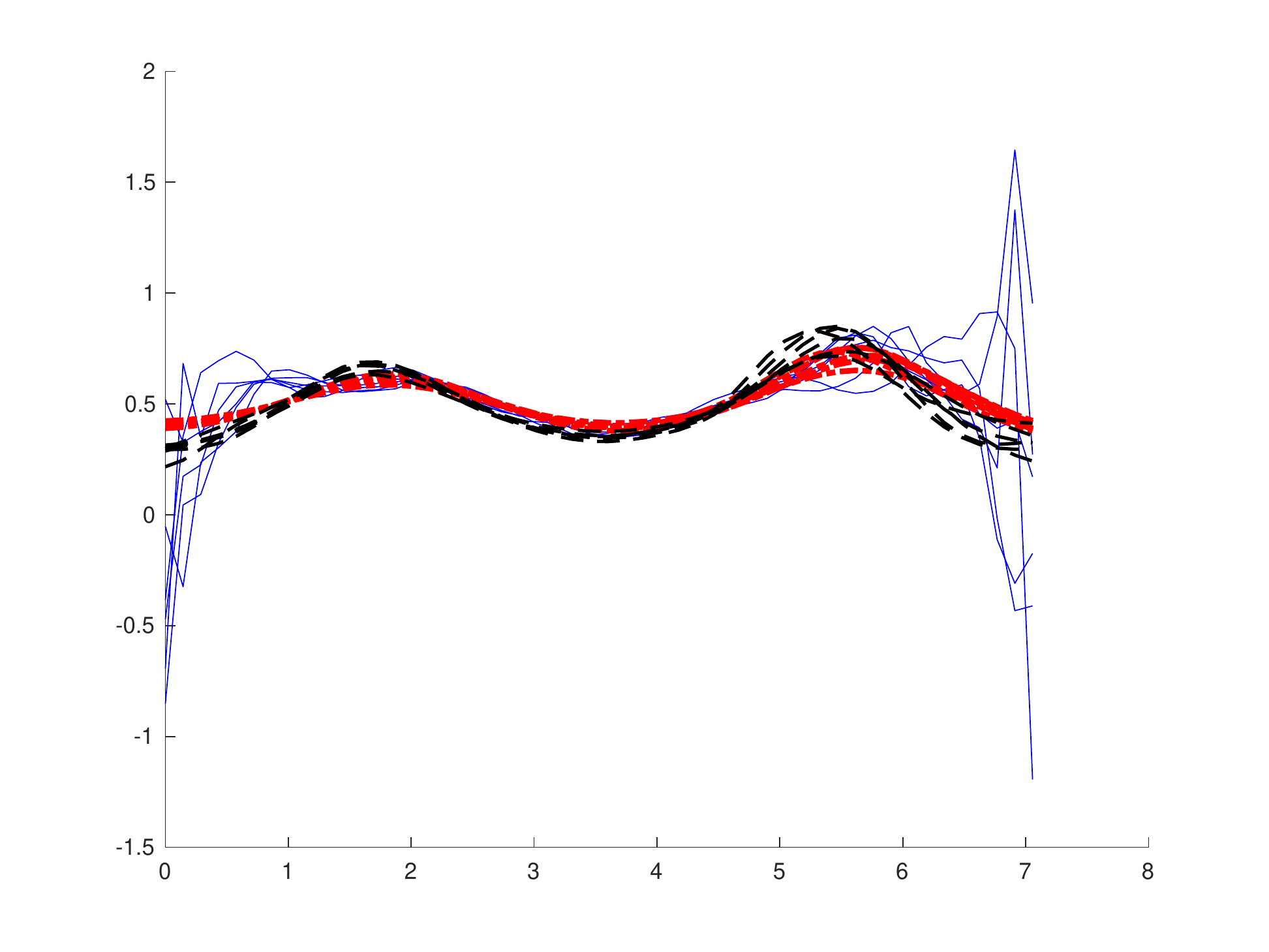} \\ 
\includegraphics[scale=0.35]{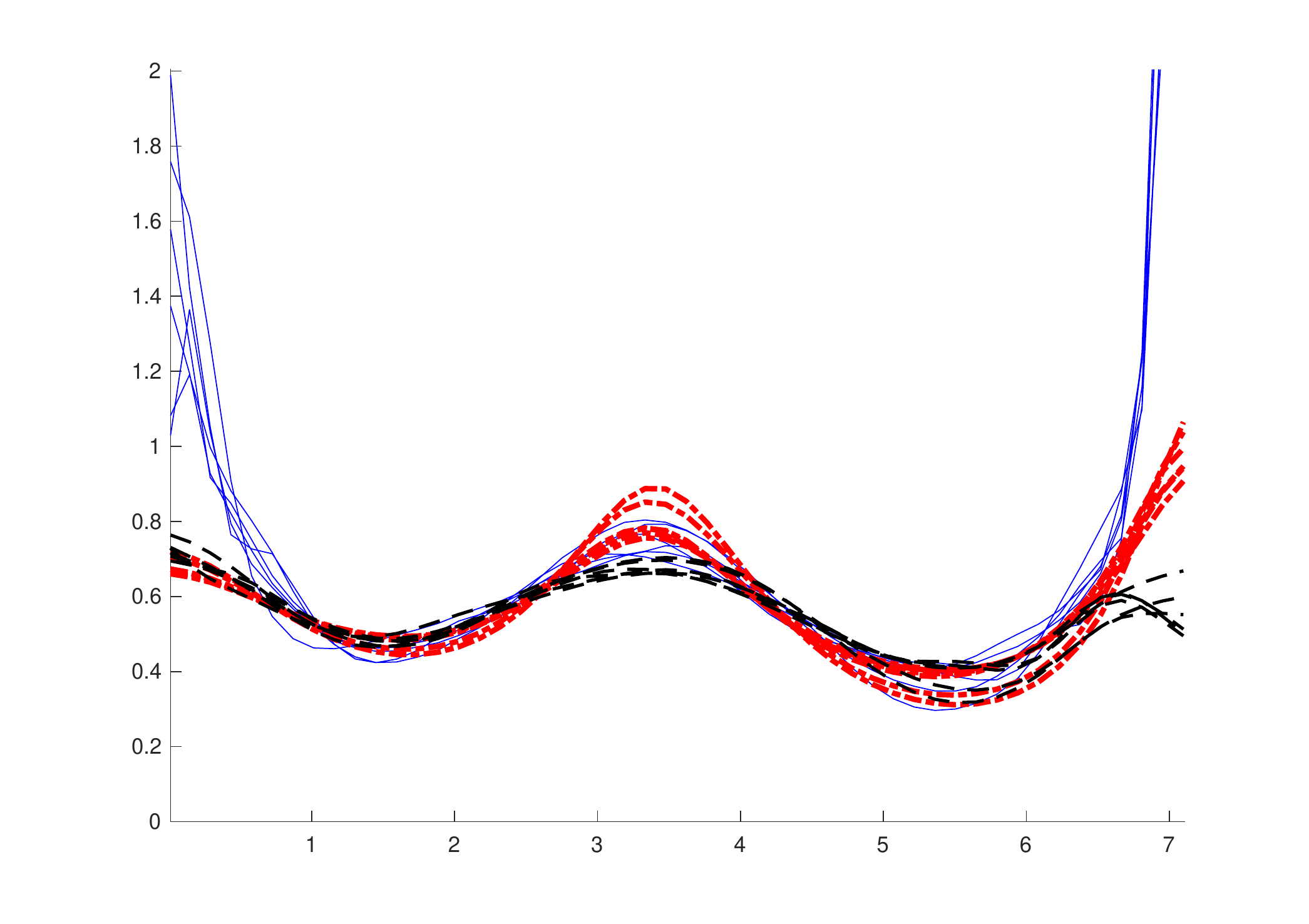} & \includegraphics[scale=0.35]{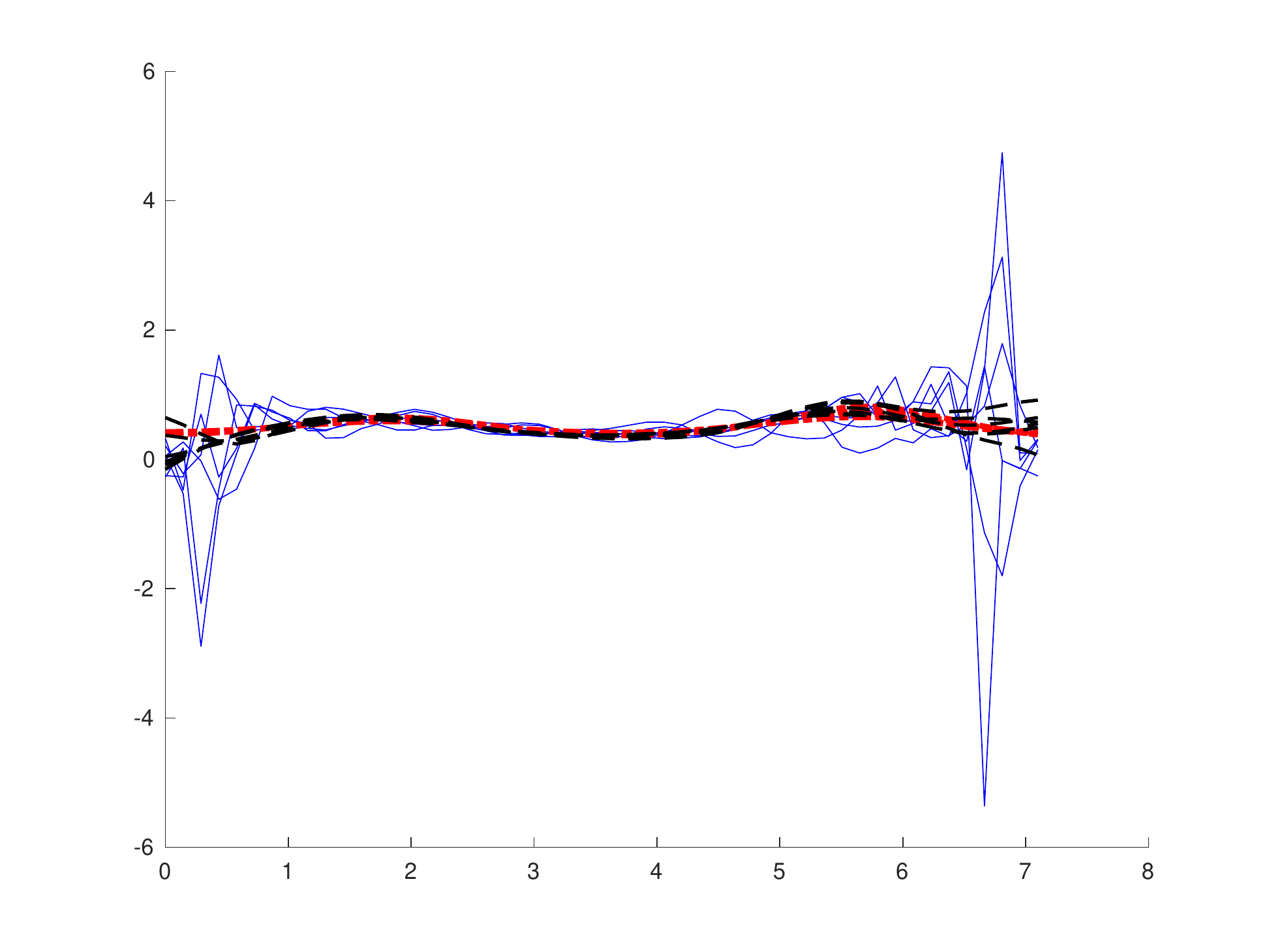} \\
\end{tabular}
\caption{Comparison of parameter estimates, $\hat\theta^{ind}_{Ext}$: mean of individual extrinsic estimates (solid blue), $\hat\theta^{pop}$: Frenet-Serret mean (dashed black), $\bar\theta$: mean generalised curvatures (dotted red) (top: $\sigma_{P}^{2}=0.04,\sigma_{e}^{2}=0.01$, bottom: $\sigma_{P}^{2}=0.04,\sigma_{e}^{2}=0.04$). Curvature (left) and Torsion (right). \label{fig:LargeSmall-LargeHigh}}
\end{centering}
\end{figure}

\subsection{Real data example}

We demonstrate our methodology with a dataset from a study of human movement behaviour in \citet{RaketMarkussen2016}. An experiment is designed to require each participant to move a hand-held object to a target location while avoiding an obstacle. The trajectories of the (three-dimensional) arm movement of each participant are recorded under various experimental conditions, with an aim to characterize the commonality and variations.  Figure~\ref{fig:rawX} shows an example of raw trajectories from one condition recorded from 10 subjects, each repeating 10 times.
We pre-process the data to create an arclength parametrized data $X_i$ defined on $[0, L_i], i=1,\ldots,n$ and define $Z_i(s) = X_i(sL_i)/L_i, s\in [0,1]$ as a length-normalized curve.
\begin{figure}
\begin{centering}
\includegraphics[scale=0.5]{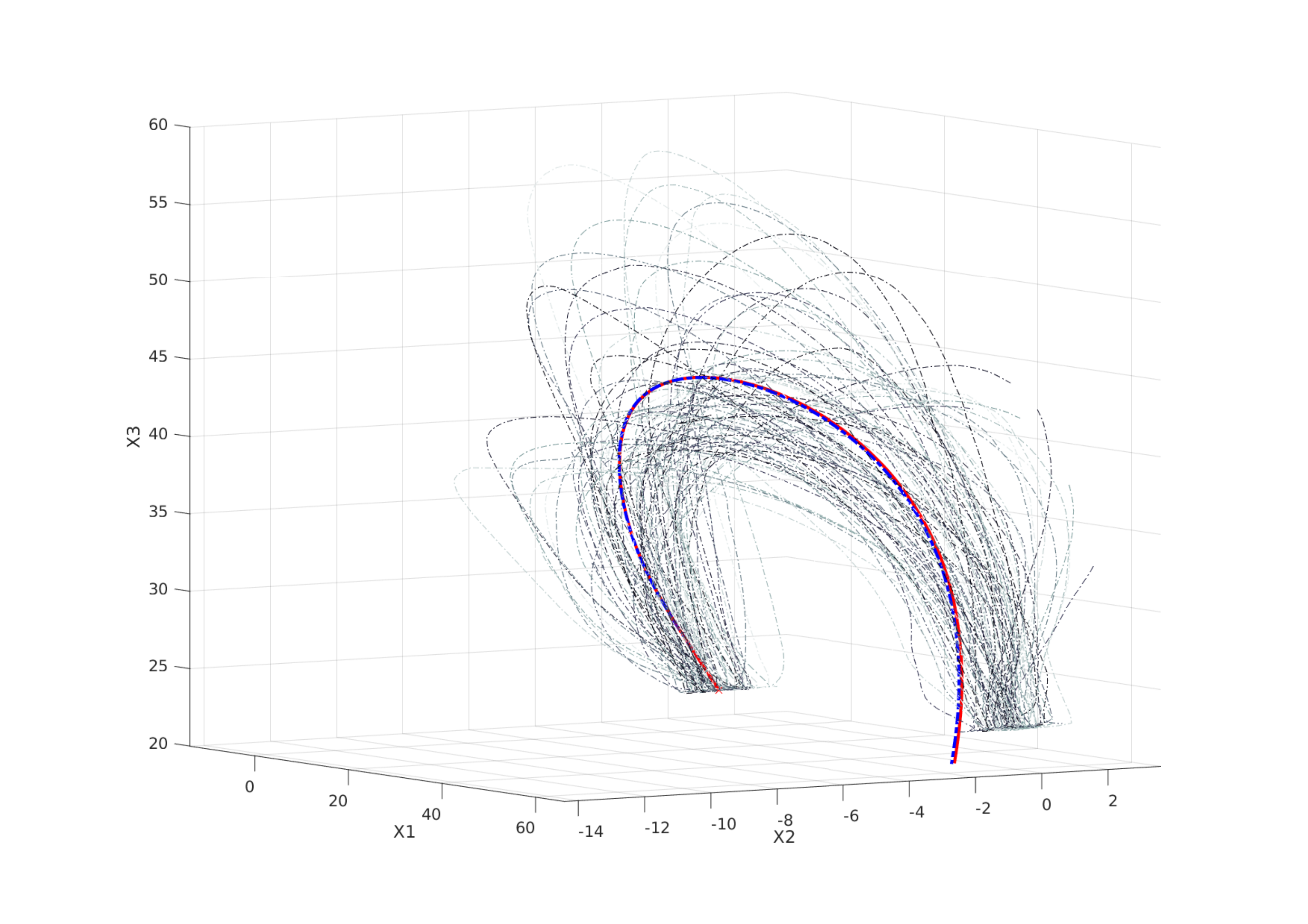} 
\caption{Raw data with 10 subjects with 10 replications, overlayed with the estimates of mean shape corresponding to parameters in Figure~\ref{fig:C1meanPar}. \label{fig:rawX}}
\end{centering}
\end{figure}
The noisy raw Frenet paths ($U_i$) are obtained from a constrained local polynomial smoothing with order 5 and a common bandwidth $h_1=0.2$ (chosen as a mean of the individual optimal one) on the normalized domain $[0,1]$. An example is shown in Figure~\ref{fig:rawQ}. For comparison, the mean Frenet paths corresponding to the parameter estimates in Figure~\ref{fig:C1meanPar} are also shown.

\begin{figure}
\begin{centering}
\includegraphics[scale=0.5]{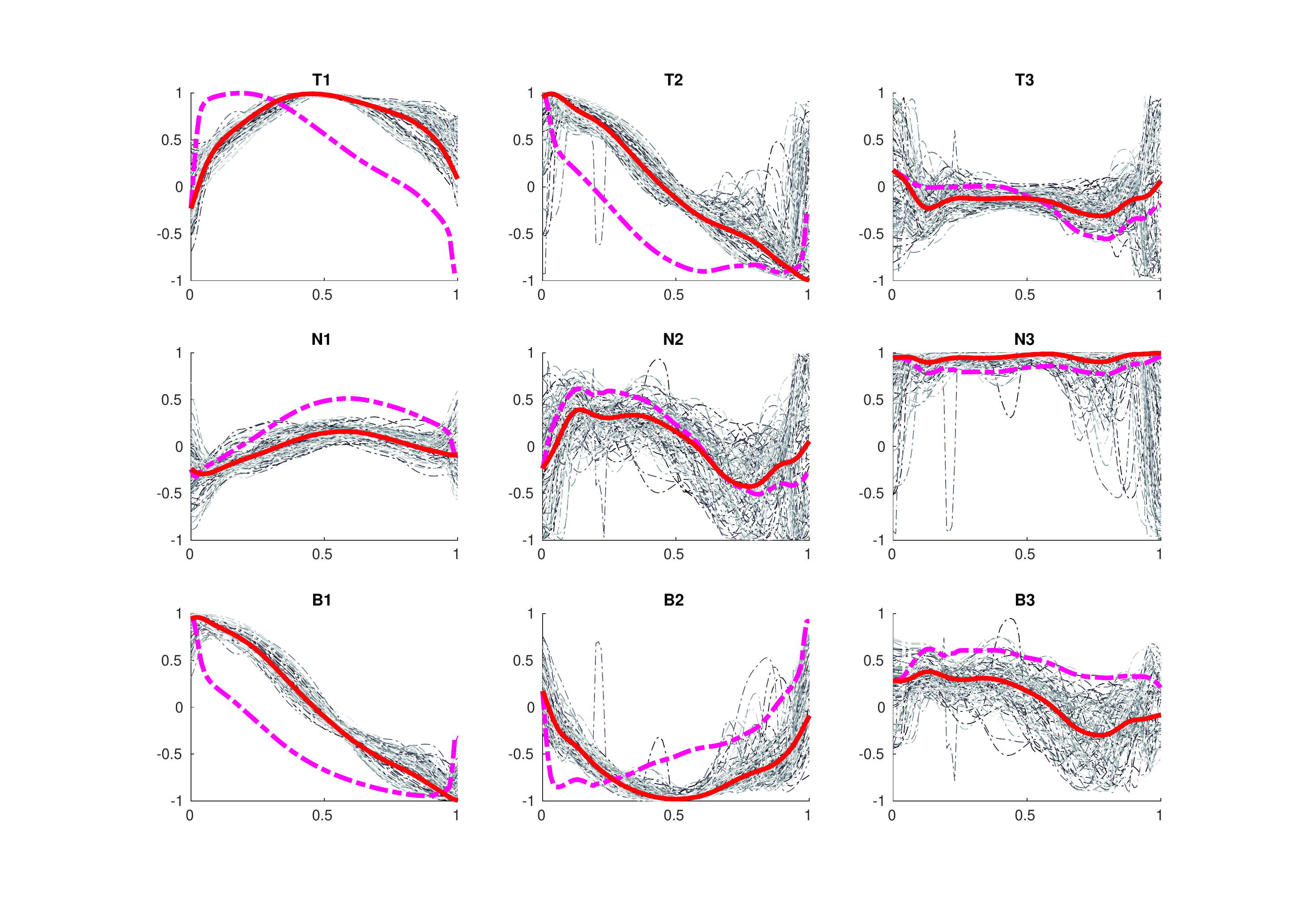} \vspace*{-1cm}
\caption{Noisy raw Frenet paths derived from the raw data in Figure~\ref{fig:rawX}. Overlayed are the mean Frenet paths corresponding to parameters in Figure~\ref{fig:C1meanPar}. \label{fig:rawQ}}
\end{centering}
\end{figure}

The estimates of the mean curvatures are shown in Figure~\ref{fig:C1meanPar}, under two smoothing parameter choices, $\lambda_{opt} = (10^{-8}, 10^{-7})$ by 10-fold cross validation and $\lambda_0=(0,0)$, thus resulting in comparable estimates. For comparison, the raw individual parameters estimated from extrinsic formulas using the local polynomial derivative estimates with its mean $(\hat\theta_{Ext})$. To mitigate the effect of some outliers, we also show its median estimate.  
The corresponding mean shape is shown in Figure~\ref{fig:rawX} with an average length. We do not include the estimates from extrinsic formula in the figure, as they were too far off. Since we do not assume the boundary conditions, there is more bias towards the end, and we did not attempt to remove the variation due to rotation. Nevertheless, the estimate seems a reasonable reflection of the average shape. Extending our methodology to compare multiple groups would be an interesting direction to explore for future work.  

\begin{figure}
\begin{centering}
\hspace*{-2cm}\includegraphics[scale=0.6]{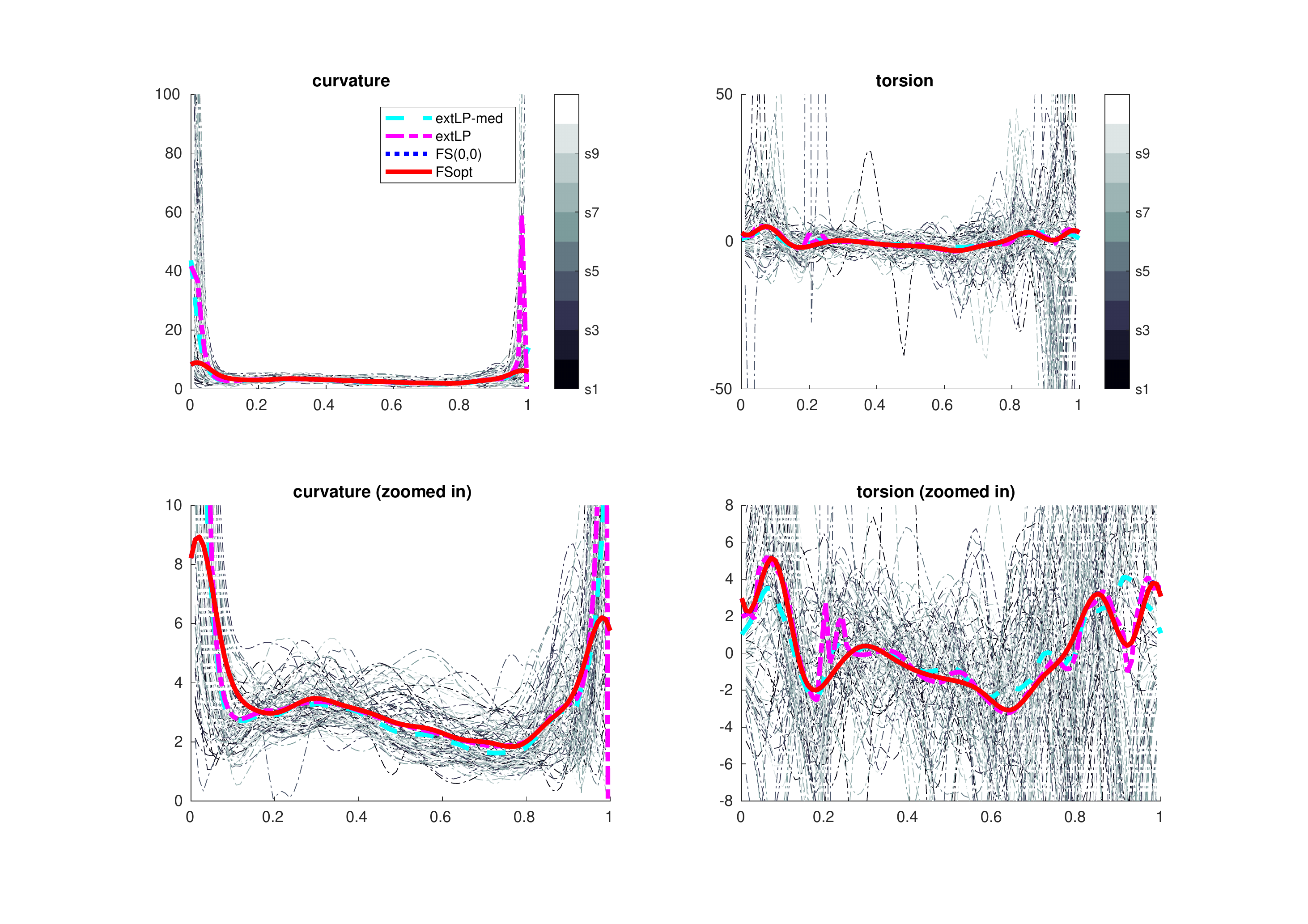} \vspace*{-1cm}
\caption{Mean curvature (left) and torsion (right) estimates with the noisy Frenet paths in Figure~\ref{fig:rawQ} using optimal parameters in thick (red) solid line (FSopt), fixed penalty parameters  (0,0) in think (blue) dotted line (FS(0,0)), extrinsic average in think (magenta) dashed line (extLP) and extrinsic median in think (cyan) dashed line (extLP.med). Overlayed are individual estimates from extrinsic formula (thin, dashed). Zoomed in views are given in the bottom. \label{fig:C1meanPar}}
\end{centering}
\end{figure}

\bibliography{Paper1_SerretFrenet}

\end{document}